\def\BibTeX{{\rm B\kern-.05em{\sc i\kern-.025em b}\kern-.08em
    T\kern-.1667em\lower.7ex\hbox{E}\kern-.125emX}}
\newcommand{\NN}{\mathbb{N}}
\newcommand{\HdTbcBW}{HdTbcBW}
\newtheorem{definition}{Definition}    
\newtheorem{lemma}{Lemma}    
\newtheorem{theorem}{Theorem}
\newtheorem{corollary}{Corollary}
\newtheorem{proposition}{Proposition}    
\begin{document}

\title{Rerailing Automata}

\author{\IEEEauthorblockN{Rüdiger Ehlers}%
\IEEEauthorblockA{
  Institute for Software and Systems Engineering \\ Clausthal University of Technology \\
  Clausthal-Zellerfeld \\
  Germany
}}

\maketitle
\thispagestyle{plain} %
\pagestyle{plain} %

\begin{abstract}

In this paper, we introduce rerailing automata for $\omega$-regular languages. They generalize both deterministic parity (DPW) and minimized history-deterministic co-Büchi automata (with transition based acceptance, \HdTbcBW{}) while  combining their favorable properties. 
In particular, rerailing automata can represent arbitrary $\omega$-regular languages while allowing for polynomial-time minimization, just as \HdTbcBW{} do. Since DPW are a special case of rerailing automata, a minimized rerailing automaton is never larger than the smallest deterministic parity automaton for the same language. 
We also show that rerailing automata can be used as a replacement for deterministic parity automata for the realizability check of open systems.

The price to be paid to obtain the useful properties of rerailing automata is that the acceptance condition in such automata refers to the \emph{dominating} colors along \emph{all} runs for a given word,
where just as in parity automata, the dominating color along a run is the lowest one occurring infinitely often along it. A rerailing automaton accepts those words for which the \emph{greatest} of the dominating colors along the runs is even.
Additionally, rerailing automata guarantee that every prefix of a run for a word can be extended to eventually reach a point from which all runs for the word extending the prefix have the same dominating color, and it is even if and only if the word is in the language of the automaton. We show that these properties together allow
 characterizing the role of each state in such an automaton in a way that relates it to state combinations in a sequence of co-Büchi automata for the represented language. This characterization forms the basis of the polynomial-time minimization approach in this paper.

\end{abstract}

\section{Introduction}

Automata over infinite words are a useful tool for the synthesis and analysis of reactive systems. In contrast to temporal logic, which is suitable for expressing properties that reactive systems under concern should have, automata serve as a more technical representation of such properties and are used as (intermediate) specification formalism in model checking and reactive synthesis procedures.

These procedures typically have computation times that increase with the sizes of the automata used as input, which leads to an interest in keeping these automata \emph{small}, typically measured in the number of states. For automata over finite words, the complexity of minimizing them is well-known. The Myhill-Nerode theorem \cite{nerode1958linear} provides a way to minimize deterministic such automata in polynomial time, while for non-deterministic automata, the decision version of the minimization problem (i.e., determining whether a smaller automaton exists) is PSPACE-complete \cite{DBLP:journals/siamcomp/JiangR93}. 

For automata over infinite words, the landscape is a bit more complex. 
For non-deterministic Büchi automata, which are useful for model checking, PSPACE-hardness of minimality checking is known. However, many verification and synthesis procedures require deterministic automata, for which the complexity of minimality checking is much more nuanced.
For instance, it is known that deterministic weak automata can be minimized in polynomial time \cite{DBLP:journals/ipl/Loding01}. Such automata are however  heavily restricted in their expressivity. Even for the slightly more expressive class of deterministic Büchi automata, the minimality problem is already NP-hard for state-based acceptance \cite{DBLP:conf/fsttcs/Schewe10}. 

On the positive side, it has been shown that by extending deterministic branching slightly to \emph{history-de\-ter\-minis\-tic} branching, and combining it with edge-based acceptance rather than state-based acceptance, 
we obtain an automaton model for co-Büchi languages with a polynomial-time minimization algorithm \cite{DBLP:journals/lmcs/RadiK22}. 
This slight extension of the branching condition is attractive because the resulting \emph{history-de\-ter\-minis\-tic transition-based co-Büchi word automata} (\HdTbcBW) can be used as drop-in replacement for deterministic co-Büchi automata in a number of applications in which traditionally, deterministic automata are used, such as for the quantitative verification of Markov decision processes \cite{DBLP:conf/lata/KleinMBK14} and reactive synthesis \cite{DBLP:conf/csl/HenzingerP06,michaud2018reactive}.

The polynomial-time minimization approach for history-deterministic co-Büchi automata by Abu Radi and Kupferman~\cite{DBLP:journals/lmcs/RadiK22} exploits the properties of the co-Büchi acceptance condition and hence does not generalize beyond co-Büchi languages. However, it can be observed that the minimized history-deterministic automata computed by Abu Radi's and Kupferman's approach have specific properties beyond history-determinism. In particular, the minimized automata are \emph{language-deterministic} and deterministic in the accepting transitions.
Given that it is known that minimizing history-deterministic Büchi or parity automata is NP-hard \cite{DBLP:conf/fsttcs/Schewe20} (for state-based acceptance), so history-determinism alone is not what made the positive result by Abu Radi and Kupferman possible, perhaps minimal automata having these specific properties that go beyond history-determinism are the key to obtaining a polynomial-time minimizable automaton model for $\omega-$regular languages? 

More precisely, we ask the following question: \emph{Is there an automaton class that allows polynomial-time minimization and that is a strict extension of both deterministic parity automata and minimized history-deterministic co-Büchi automata with transition-based acceptance?} 
Deterministic parity automata being a special case would allow to encode any omega-regular property, while extending  \emph{minimized} \HdTbcBW{} is motivated by keeping its favourable properties that enable polynomial-time minimization without losing the applicability of these automata in practice.

In this paper, we answer this question positively and present \emph{rerailing automata}. 
Rerailing automata are obtained by interpreting the properties of minimized \HdTbcBW{} in a novel way that generalizes beyond co-Büchi languages. Starting from encoding the co-Büchi acceptance condition as a min-even parity acceptance condition over the colors $\{1,2\}$, we make the following reinterpretations:
\begin{itemize}
\item 
Accepted words in \HdTbcBW{} have a run with dominating color $2$, and rejected words only have runs with dominating color $1$. We can reformulate this property as defining that the automaton accepts the words for which the \emph{greatest} of the dominating colors along its runs is even.
\item In minimized \HdTbcBW{}, all rejected words have runs with color $1$ as the dominating color, and all run prefixes of runs for accepted words can be extended to a run that eventually only takes \emph{deterministic} transitions with color $2$.
We can reinterpret this property as defining that for all words, all prefix runs $\pi$ can be extended to reach a state so that afterwards, regardless of which transitions are taken from there, the resulting run always recognizes the word with the same dominating color, and it is even if and only if the word is in the language represented by the automaton. Furthermore, by rerouting the run $\pi$, the dominating color of the run can only increase -- in the case of \HdTbcBW{}, it can increase from 1 to 2.
\end{itemize}
In rerailing automata, we combine both of these reinterpretations of the properties of minimized \HdTbcBW{} with supporting colors beyond $\{1,2\}$ in a parity acceptance condition. 

We show in this paper that rerailing automata both generalize from deterministic parity automata and minimized \HdTbcBW{} and how they can be minimized in polynomial time. As they are special cases of deterministic parity automata, for every language, they never need to be bigger than the smallest deterministic parity automaton for the same language.  
With this property, they are a suitable drop-in replacement of deterministic parity automata in some applications, and we show how rerailing automata can be used for reactive synthesis via a reduction to parity game solving, where the reduction is only a minor modification of the corresponding reduction for deterministic parity automata.

\section{Preliminaries}

\textbf{Words:} Given a finite set $\Sigma$ (an \emph{alphabet}), we denote the set of finite sequences of $\Sigma$ as $\Sigma^*$ and the set of infinite sequences as $\Sigma^\omega$. A set of words is also called a \emph{language}. For a given alphabet, we also call the set of words over the alphabet the \emph{universal language}.  The finite word of length $0$ is also denoted as $\epsilon$. We say that some finite word $w = w_0 \ldots w_n$ \label{def:appears}\emph{appears} at some position $i$ in some other (finite or infinite) word $w' = w'_0 \ldots$ if we have $w'_i w'_{i+1} \ldots w'_{i+n} = w_0 \ldots w_n$. Whenever such a position exists, we also say that $w$ is a \label{def:consecutiveSubword}\emph{consecutive subword} of $w'$.

\textbf{Automata over infinite words (with edge-based acceptance):} An \emph{automaton structure} is a tuple $\mathcal{A} = (Q,\Sigma,\delta,q_0)$ with the finite set of states $Q$, the alphabet $\Sigma$, the transition relation $\delta \subseteq Q \times \Sigma \times Q \times \NN$, and the initial state $q_0 \in Q$. 

Give an infinite word $w = w_0 w_1 \ldots \in \Sigma^\omega$, we say that a sequence $\pi = \pi_0 \pi_1 \ldots \in Q^\omega$ is a \emph{run} of $w$ together with a \emph{color sequence} $\rho = \rho_0 \rho_1 \ldots \in \NN^\omega$ if we have $\pi_0 = q_0$ and for all $i \in \NN$, we have $(\pi_i,w_i,\pi_{i+1},\rho_i) \in \delta$. We say that $\pi/\rho$ has dominating color $\mathsf{dominatingColor}(\rho) = c'$ if the lowest color appearing infinitely often in $\rho$ is $c'$.
We only consider automaton structures in this paper for which every run has only one corresponding color sequence, so that it makes sense to refer to the dominating color of the run itself.
We denote by $\mathsf{Runs}(\mathcal{A},w)$ the set of run/color sequence combinations of $\mathcal{A}$ for $w$.

An automaton is an automaton structure together with a combined branching/acceptance condition. 
If $\mathcal{A}$ is declared to be a \emph{deterministic} or \emph{non-deterministic} automaton, the automaton accepts a word if there exists a run with even dominating color (\emph{min-even acceptance}). We call such a run \emph{accepting}. For a deterministic automaton, we furthermore have that every word has exactly one run.
We say that a non-deterministic automaton $\mathcal{A}$ is (also) \emph{history-deterministic} if there exists some \emph{strategy} function $f: \Sigma^* \rightarrow Q$ such that if a word $w = w_0 w_1 \in \Sigma^\omega$ has a run with even dominating color, then the unique run $\pi = \pi_0 \pi_1 \ldots$ with $\pi_i = f(w_0 \ldots w_{i})$ for every $i \in \NN$ is an accepting run. 
For non-deterministic or deterministic automata, the automaton accepts those words that have an accepting run.
We denote by $\mathcal{L}(\mathcal{A})$ the \emph{language} of the automaton, i.e., the set of words that it accepts. If the same automaton structure is interpreted with multiple different branching/acceptance conditions, we will attribute the $\mathcal{L}$ symbol with a subscript denoting which one we refer to.

We only consider \emph{complete} automata in this paper, i.e., so that for each $q \in Q$ and $x \in \Sigma$, there is some transition $(q,x,q',c) \in \delta$ (for some $q'$ and $c$). We denote by $\mathcal{A}_q$ for some $q \in Q$ the same automaton as $\mathcal{A}$, but with the initial state replaced by $q$.

Deterministic or non-deterministic automata of the form defined above are also called \emph{parity automata}. Parity automata in which only the colors $1$ and $2$ are used are also called \emph{co-Büchi automata}.
Furthermore, history-deterministic parity or co-Büchi automata are also called good-for-games parity/co-Büchi automata in the literature, because the notions of history-determinism and good-for-games acceptance coincide for them \cite{DBLP:conf/fsttcs/BokerL21}.

\textbf{Analyzing languages:} 
Let $L$ be a language. The set of \emph{residual} languages is defined as $\mathsf{ResidualLanguages}(L) = \{ L' \subseteq \Sigma^\omega \mid \exists w \in \Sigma^*. L' = \mathsf{Residual}(L,w) \}$, where $\mathsf{Residual}(L,w) = \{\tilde w \in \Sigma^\omega \mid w \tilde w \in L\}$ is the residual language of $L$ for $w$.
We say that some tuple $R^L = (S^L,\Sigma,\delta^L,s^L_0)$ is a \emph{residual language tracking automaton} (\emph{RLTA}) with the finite set of states $S^L$, the deterministic transition function $\delta : S^L \times \Sigma \rightarrow S^L$ and the initial state $s^L_0$ if there exists some \emph{bijective} mapping function $\tilde f : S^L \rightarrow \mathsf{ResidualLanguages}(L)$ such that $\tilde f(s^L_0)=L$ and for every $s \in S^L$ and $x \in \Sigma$, we have $\tilde f(\delta^L(s,x)) = \mathsf{Residual}(\tilde f(s),x)$.

Every (parity) automaton $\mathcal{A} = (Q,\Sigma,\delta,q_0)$ for some language $L$ has a finite set of residual languages. We say that the automaton is \emph{language-deterministic} if there exists a mapping $f : Q \rightarrow \mathsf{ResidualLanguages}(L)$ such that $f(q_0) = L$ and for each $(q,x,q',c) \in \delta$, we have $f(q') = \{w \in \Sigma^\omega \mid xw \in f(q) \}$.
Note that unlike for residual language tracking automata, we do not expect the mapping to be bijective, however.

\textbf{Analyzing automata:}
Given some automaton structure $\mathcal{A} = (Q,\Sigma,\delta,q_0)$, we say that a tuple $(Q',\delta')$ with $Q \subseteq Q'$ and $\delta' \subseteq \delta$ is a strongly connected component (SCC) of $\mathcal{A}$ if there exists some word/run combination such that $Q'$ are exactly the states visited infinitely often along the run, and $\delta'$ are exactly the transitions taken infinitely often along it. We say that the SCC is \emph{maximal} if it cannot be extended by any state or transition without the pair losing the property that it is an SCC (for some word). We say that a run $\pi$ for some word $w = w_0 w_1 \ldots \in \Sigma^\omega$ \emph{gets stuck} in some SCC in $(Q',\delta')$ if along $\pi$ only finitely many transitions outside $\delta'$ are taken. Such runs then consist of a \emph{prefix} run $\pi_0 \ldots \pi_n$ for the \emph{prefix word} $w_0 \ldots w_{n-1}$ (for some $n \in \NN$), followed by the parts of the word and run in which the run part only takes transitions in $\delta'$.

Note that (non-deterministic) co-Büchi automata accept a word if and only if there is a run that gets stuck in some SCC consisting only of accepting transitions (i.e., those with color $2$, and we call such SCCs \emph{accepting}). We will consider co-Büchi automata that are \emph{deterministic in the accepting transitions}, i.e., such that for every $q \in Q$ and $x \in \Sigma$, if there is a transition $(q,x,q',2) \in \delta$, then this is the only transition from $q$ for $x$. By slight abuse of notation, we also treat $\delta$ as a function $Q \times \Sigma \rightarrow Q$ mapping states and characters to the only successor state when a suitable deterministic transition is present, and $\delta(q,x)$ is undefined otherwise. We furthermore extend $\delta$ to sequences of characters, so that $\delta(q,w_0 \ldots w_n) = \delta(\delta(q,w_0 \ldots w_{n-1}),w_n)$ for words $w_0 \ldots w_n$ of length greater $0$, and $\delta(q,\epsilon)=q$. We also use this generalization for the transition function in residual language tracking automata.
We say that some finite word $w$ is \emph{safely accepted} from some state $q \in Q$ if $\delta(q,w)$ is defined. 
To be able to reason about which states are reachable for which finite words, for some state $q$ and finite word $w$, we define $\delta^+(q,w)$ to be the set of states having prefix runs from $q$ under the finite word $w$.

\textbf{Parity games:}
A parity game (see, e.g., \cite{DBLP:reference/mc/BloemCJ18}) is a tuple $\mathcal{G} = (V^0,V^1,E^0,E^1,C,v^0)$ with the disjoint finite set of \emph{vertices} $V^0$ and $V^1$ of the two players, the edges $E^0 \subseteq V^0 \times (V^0 \cup V^1)$ and $E^1 \subseteq V^1 \times (V^0 \cup V^1)$ of the two players, a coloring function $C : V^0 \cup V^1 \rightarrow \NN$, and an initial vertex $v^0$.

The interaction of the two players (the even player with number $0$ and the odd player with number $1$) is captured in a \emph{play} $\pi = \pi_0 \pi_1 \pi_2 \ldots \in (V^0 \cup V^1)^\omega$ of the game, where $\pi_0 = v^0$, and for each position $i \in \NN$, it is the role of player $j \in \{0,1\}$ to choose an edge with $\pi_i$ as first element if and only if $\pi_i \in V^j$. With such an edge $(v,v') \in E^j$ with $v = \pi_i$, we then have $\pi_{i+1} = v'$. The (dominating) color of the play, denoted as $\mathsf{Color}(\pi)$, is the lowest number appearing infinitely often in $C(\pi_0) C(\pi_1) C(\pi_2) \ldots$.

We say that player $0$ wins the play if $\mathsf{Color}(\pi)$ is even, and otherwise player $1$ wins the play. It has been shown in the literature that either player $0$ or player $1$ have a strategy to win the game, i.e., to ensure that a play that is the outcome of the interaction by the two players is winning for the respective player. If the number of different colors of the vertices is limited to some constant, determining which player \emph{wins} the game, i.e., has a strategy to win it, can be performed in time polynomial in the size of the game~\cite{DBLP:reference/mc/BloemCJ18,DBLP:journals/jcss/Schewe17}.

\section{Introducing rerailing automata}
\label{sec:IntroductionRerailingAutomata}
In this section, we motivate and define rerailing automata.
We start with a description of the properties of minimized history-deterministic co-Büchi automata with transition-based acceptance (\HdTbcBW{}) that we generalize from, then introduce rerailing automata formally, and finally prove some properties of rerailing automata that are used later in this paper.

\subsection{Analyzing the properties of minimal history-deterministic co-Büchi automata}
\label{sec:minimalHdTbcBWProperties} %

Abu Radi and Kupferman  \cite{DBLP:journals/lmcs/RadiK22} gave a procedure for minimizing \HdTbcBW{} and making them canonical in polynomial time. We review the properties that the resulting automata have.

Let $\mathcal{A} = (Q,\Sigma,\delta,q_0)$ be a minimal and canonical \HdTbcBW{}. It has the following properties:%
\begin{enumerate}
\item The automaton $\mathcal{A}$ is language-deterministic.
\item Whenever $\mathcal{A}$ takes a rejecting transition to a state $q'$ along a run, then there are rejecting transitions to all other states with the same residual language as $q'$ as well. 
More formally, for a labeling $m$ from states to residual languages and for every transition $(q,x,q',c) \in \delta$ with $c=1$ (i.e., that is \emph{rejecting}), we have that for all states $q'' \in Q$ with $m(q')=m(q'')$, there is also a transition $(q,x,q'',c)$ in $\delta$.
\item The automaton $\mathcal{A}$ is deterministic in the accepting transitions.
\item All maximal accepting SCCs in $\mathcal{A}$ are minimized, i.e., in every SCC, there are no two states with the same sets of safely accepted words and the same residual languages.
\item For any two states $q$, $q'$ with the same residual language that are in different maximal accepting SCCs, we have that the sets of \emph{safely accepted} words from $q$ and $q'$ are incomparable. 
\item For every residual language that has no state that is part of an accepting SCC, there is exactly one state in the automaton with that residual language.
\end{enumerate}
Figure~\ref{fig:exampleCoBuchiGFG} shows an automaton that exemplifies these properties for a relatively simple language. When stated in form of an \emph{$\omega$-regular expression}, the language over the alphabet $\Sigma = \{a,b,c\}$ of the automaton is $L = \Sigma (\Sigma^2)^* (\{a,b\}^\omega \allowbreak{} \cup \allowbreak{} (\{b,c\}\{a,b\} \cup ab)^\omega)$, i.e., exactly those words are in $L$ for which starting at an \emph{odd} position in the word, the word ends with either (1) $\{a,b\}^\omega$ or (2) $(\{b,c\}\{a,b\} \cup ab)^\omega$. The two accepting maximal strongly connected components of the automaton each take care of one of the (1) or (2) cases. They are connected by rejecting transitions, so any accepted word has a run that eventually gets ``stuck'' in one of these.

The automaton tracks the residual language, for which in this case it suffices to track whether an even or odd number of letters have been read by the automaton so far. In this example, it is easy to see that the automaton does so, as every transition from a state $q_i$ with an even $i$ leads to a state $q_{j}$ with an odd $j$ and vice versa. 

If can be observed that the accepting transitions are deterministic. 
Whenever for a state $q$ and a letter $x$, there is no outgoing accepting transition for $x$ from $q$, there are rejecting transitions to all states with the corresponding residual language. For instance, for $q_1$ and $c$, there are transitions to $q_0$, $q_2$, and $q_4$.  

\looseness-1 When comparing the \emph{safe languages} of the states, we can observe that state pairs across different maximal accepting SCCs have incomparable safe languages or incomparable residual languages. Within an SCC, we only know that no two states have the same set of safely accepted words (the \emph{safe language}) and the same residual languages. Here, the safe language of $q_2$ is a superset of the safe language of $q_4$, but because the superset relation is strict, the states $q_2$ and $q_4$ cannot be merged.

\begin{figure}
\begin{tikzpicture}
\draw (0,0) node[draw,shape=circle,minimum size=0.7cm,thick] (q0) {$q_0$};
\draw (2,0) node[draw,shape=circle,minimum size=0.7cm,thick] (q1) {$q_1$};
\draw[->,thick] (q0) to[bend left=10] node[above] {$a,b$} (q1);
\draw[->,thick] (q1) to[bend left=10] node[below] {$a,b$} (q0);

\draw[fill] (-0.3,0.7) coordinate (init) circle (0.05cm);
\draw[->,thick] (init) -- (q0);

\draw (5,0) node[draw,shape=circle,minimum size=0.7cm,thick] (q2) {$q_2$};
\draw (7,0) node[draw,shape=circle,minimum size=0.7cm,thick] (q3) {$q_3$};
\draw (5,-2) node[draw,shape=circle,minimum size=0.7cm,thick] (q4) {$q_4$};

\draw[->,thick] (q2) to[bend left=10] node[above] {$a,b$} (q3);
\draw[->,thick] (q3) to[bend left=10] node[below=-1mm] {$b,c$} (q2);
\draw[->,thick] (q3) to[bend left=10] node[below right=-1mm] {$a$} (q4);
\draw[->,thick] (q4) to[bend left=10] node[above  left=-1mm] {$b$} (q3);

\draw[->,thick,dashed] (q0) to[bend left=60] node[above] {$c$} (q1);,
\draw[->,thick,dashed] (q0) ..controls +(0,2) and +(0,2) .. node[below] {$c$} (q3);

\draw[->,thick,dashed] (q1) to[bend left=10] node[above] {$c$} (q2);
\draw[->,thick,dashed] (q1) to[bend left=60] node[below] {$c$} (q0);
\draw[->,thick,dashed] (q1) to[bend left=10] node[above right=-1mm] {$c$} (q4);

\draw[->,thick,dashed] (q2) to[bend left=10] node[below] {$c$} (q1);
\draw[->,thick,dashed] (q2) to[bend left=60] node[above] {$c$} (q3);

\draw[->,thick,dashed] (q4) to[bend left=10] node[below left=-1mm] {$a,c$} (q1);
\draw[->,thick,dashed] (q4) to[bend right=40] node[below right=-1mm] {$a,c$} (q3);

\end{tikzpicture}
\caption{An example minimized \HdTbcBW{} over the alphabet $\Sigma = \{a,b,c\}$}
\label{fig:exampleCoBuchiGFG}
\end{figure}
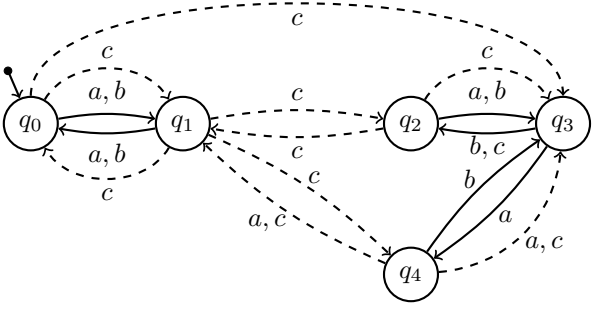

\subsection{Defining rerailing automata}
We are now ready to formally define rerailing automata and show that both minimized~\HdTbcBW{} as well as deterministic parity automata are special cases.

\begin{definition}
\label{def:rerailingautomata}
Let a tuple $\mathcal{A} = (Q,\Sigma,\delta,q_0)$ be a complete automaton structure.
Let furthermore runs, their color sequences and the dominating color along runs be defined for $\mathcal{A}$ as usual.
We say that that $\mathcal{A}$ is a rerailing automaton if
\begin{enumerate}
\item \emph{(Global acceptance:)} The language of $\mathcal{A}$ is defined as $\mathcal{L}_\mathit{rerail}(\mathcal{A}) = \{ w \in \Sigma^\omega \mid \max( \{ \mathsf{dominatingColor}(\rho) \mid (\pi,\rho) \in \mathsf{Runs}(\mathcal{A},w) \}) \text{ is even}  \}$, and 
\item \emph{(Rerailing property:)} For every word $w \in \Sigma^\omega$, every run $\pi = \pi_0 \pi_1 \ldots \in Q^\omega$ with color sequence $\rho \in \NN^\omega$ for $w$, and every index $i$, 
there exists a \emph{rerailing run} $\pi'$ (of $\pi$ and $i$) and a corresponding color sequence $\rho'$ such that $\pi'$ is the same as $\pi$ up to index $i$ and there exists some later index $i' > i$ (\emph{rerailing point}) such that \emph{every} run for $w$ starting with $\pi'$ up to index $i'$ has the same dominating color that is at least as high as the dominating color of $\rho$ and that is even if and only if the word is in $\mathcal{L}_\mathit{rerail}(\mathcal{A})$.
\end{enumerate}
\end{definition}

The second property is the one that gives rerailing automata their name: after every possible prefix run, there is the possibility to extend it in a way that puts it \emph{back on track} such that afterwards, no matter which route a run takes from there onwards, the resulting run stays \emph{on a rail} to have the same dominating color, and this color is even if and only if the word is in the language of the automaton. The dominating colors along runs that have not been put \emph{on track} are ignored, except that they can only be lower than the dominating colors of the rerailing runs derived from them.

We next prove the desired  relationship between rerailing automata and two previously defined automaton types.

\begin{proposition}
Let $\mathcal{A}$ be a deterministic parity automaton. We have that $\mathcal{A}$ is also a rerailing automaton and $\mathcal{L}_\mathit{rerail}(\mathcal{A}) = \mathcal{L}_\mathit{parity}(\mathcal{A})$.
\end{proposition}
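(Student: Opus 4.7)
The plan is to exploit the fact that a deterministic parity automaton has exactly one run per input word, which makes both conditions of Definition~\ref{def:rerailingautomata} collapse to something almost trivial.

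First I would note that since $\mathcal{A}$ is deterministic, for every $w \in \Sigma^\omega$ the set $\mathsf{Runs}(\mathcal{A},w)$ is a singleton $\{(\pi,\rho)\}$. Hence the set $\{\mathsf{dominatingColor}(\rho') \mid (\pi',\rho') \in \mathsf{Runs}(\mathcal{A},w)\}$ contains exactly one element, namely $\mathsf{dominatingColor}(\rho)$, and its maximum equals that one element. The global acceptance condition then reduces to ``$\mathsf{dominatingColor}(\rho)$ is even'', which is precisely the definition of $\mathcal{L}_\mathit{parity}(\mathcal{A})$. This immediately yields $\mathcal{L}_\mathit{rerail}(\mathcal{A}) = \mathcal{L}_\mathit{parity}(\mathcal{A})$.

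For the rerailing property, given a word $w$, its unique run $\pi$ with color sequence $\rho$, and an index $i$, I would simply take $\pi' := \pi$ (with color sequence $\rho' := \rho$) as the rerailing run and choose any rerailing point $i' > i$, say $i' = i+1$. The requirement that $\pi'$ agrees with $\pi$ up to index $i$ is satisfied trivially. Since $\mathcal{A}$ is deterministic, the only run for $w$ extending the prefix $\pi'_0 \ldots \pi'_{i'}$ is $\pi$ itself, so every such extension has dominating color $\mathsf{dominatingColor}(\rho)$, which is at least as high as $\mathsf{dominatingColor}(\rho)$ and is even if and only if $w \in \mathcal{L}_\mathit{parity}(\mathcal{A}) = \mathcal{L}_\mathit{rerail}(\mathcal{A})$.

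There is essentially no obstacle here; the proposition is a sanity check confirming that the two new conditions in the definition are vacuously compatible with determinism. The only thing to be careful about is matching the quantifier structure in the rerailing property verbatim (prefix agreement up to index $i$, existence of some $i' > i$, and the color bound), which is all handled automatically once one invokes uniqueness of runs.
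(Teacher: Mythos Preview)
Your proof is correct and follows the same approach as the paper: both exploit that determinism yields a unique run per word, making the rerailing property trivially satisfied by taking $\pi' = \pi$ and reducing global acceptance to ordinary parity acceptance. Your version is somewhat more explicit in verifying the quantifier structure, but the argument is identical in substance.
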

\begin{proof}
The rerailing property holds in a trivial way, as every word $w$ induces exactly one run. This is also a rerailing run.
Also, by the definition of $\mathcal{A}_\mathit{parity}$, its dominating color is even if and only if $w$ is in the language represented, so by the definition of which words are accepted by a rerailing automaton, we have $\mathcal{L}_\mathit{rerail}(\mathcal{A}) = \mathcal{L}_\mathit{parity}(\mathcal{A})$.
\end{proof}

\begin{proposition}
\label{proposition:Parity}
Let $\mathcal{A}$ be an automaton structure 
with the properties given in Section~\ref{sec:minimalHdTbcBWProperties} (minimized \HdTbcBW{}). We have that $\mathcal{A}$ is also a rerailing automaton and $\mathcal{L}_\mathit{rerail}(\mathcal{A}) = \mathcal{L}_\mathit{parity}(\mathcal{A})$.
\end{proposition}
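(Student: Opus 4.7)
The plan is to verify the two defining properties of a rerailing automaton directly, exploiting that the colors of a minimized \HdTbcBW{} lie in $\{1,2\}$.

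For the global acceptance condition, observe that the maximum, over all runs of $w$, of the dominating color is $2$ if and only if some run of $w$ has dominating color $2$, which is exactly the co-Büchi acceptance criterion. Hence $\mathcal{L}_\mathit{rerail}(\mathcal{A}) = \mathcal{L}_\mathit{parity}(\mathcal{A})$.

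For the rerailing property, fix a word $w$, a run $\pi$ with color sequence $\rho$, and an index $i$. If $w \notin \mathcal{L}_\mathit{parity}(\mathcal{A})$, every run for $w$ (and in particular every run extending $\pi_0 \ldots \pi_i$) has dominating color $1$, so $\pi' = \pi$ together with $i' = i+1$ trivially witnesses the property. Otherwise $w \in \mathcal{L}_\mathit{parity}(\mathcal{A})$, and I would argue as follows. Language-determinism together with a straightforward induction along $\pi$ yields $f(\pi_i) = \mathsf{Residual}(\mathcal{L}_\mathit{parity}(\mathcal{A}), w_0 \ldots w_{i-1})$, so $w_i w_{i+1} \ldots \in f(\pi_i)$. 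By the properties of minimized \HdTbcBW{} listed in Section~\ref{sec:minimalHdTbcBWProperties}, the sub-automaton $\mathcal{A}_{\pi_i}$ is itself a history-deterministic co-Büchi automaton recognizing its residual language, so it admits an accepting run $\sigma = \sigma_0 \sigma_1 \ldots$ with $\sigma_0 = \pi_i$ on that suffix. Define $\pi' = \pi_0 \ldots \pi_{i-1} \sigma_0 \sigma_1 \ldots$; by construction $\pi'$ agrees with $\pi$ up to position $i$. Since $\sigma$ is accepting, only finitely many of its transitions are rejecting, so I can pick $i'$ large enough that every transition taken in $\pi'$ from position $i'$ onwards is accepting (color $2$). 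Property~3 of minimized \HdTbcBW{} (determinism in the accepting transitions) then says that whenever an accepting transition exists for a state-letter pair, it is the unique outgoing transition for that letter. Inductively, from $\pi'_{i'}$ reading $w_{i'} w_{i'+1} \ldots$, the existence of $\sigma$'s accepting continuation forces every run extending $\pi'_0 \ldots \pi'_{i'}$ to coincide with $\sigma$ from that point on, so all such runs have dominating color $2$.

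In both cases, the rerailed dominating color is at least that of $\rho$ (which is in $\{1,2\}$), and it is even exactly when $w \in \mathcal{L}_\mathit{rerail}(\mathcal{A}) = \mathcal{L}_\mathit{parity}(\mathcal{A})$, so the rerailing property is satisfied. The main obstacle is the invocation that $\mathcal{A}_{\pi_i}$ itself recognizes $f(\pi_i)$ in a history-deterministic way: this is standard for canonical \HdTbcBW{} from Abu Radi and Kupferman's construction, but one may have to unpack the argument by combining minimality, completeness, and the observation that an HD strategy for $\mathcal{A}$ restricts, along any finite prefix reaching $\pi_i$, to an HD strategy for $\mathcal{A}_{\pi_i}$ over exactly the residual language $f(\pi_i)$.
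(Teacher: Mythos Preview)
Your proof is correct and follows the same overall plan as the paper: split on whether $w$ is accepted, and in the accepting case produce a continuation from $\pi_i$ that eventually takes only accepting transitions, then invoke determinism in the accepting transitions (property~3) so that all extensions past the rerailing point coincide. The difference is in how the accepting continuation from $\pi_i$ is obtained. The paper does not appeal to history-determinism of $\mathcal{A}_{\pi_i}$; it fixes one global accepting run $\pi''$ for $w$, lets $j$ be the last index at which $\pi''$ takes a rejecting transition, and then extends $\pi$ past $\max(i,j)$: either $\pi$ takes no further rejecting transition (so $\pi'=\pi$ already works), or at the first rejecting transition after $\max(i,j)$ property~2 guarantees a rejecting edge to $\pi''_{i'+1}$ (same residual language by property~1), after which the run is deterministically locked onto $\pi''$. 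This is fully self-contained from the six listed properties.

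Your suggested unpacking --- restricting an HD strategy for $\mathcal{A}$ along the prefix reaching $\pi_i$ --- does not quite work as stated: the HD strategy for $\mathcal{A}$ need not visit $\pi_i$ at step $i$, so it does not hand you an HD strategy for $\mathcal{A}_{\pi_i}$. The fact you need, that every word in $f(\pi_i)$ has an accepting run from $\pi_i$, is true for canonical minimized \HdTbcBW{} and is implicit in Abu Radi and Kupferman's characterization, but the most direct derivation from the six listed properties is exactly the property~2 splice the paper performs. Replacing your appeal to HD of $\mathcal{A}_{\pi_i}$ with that splice closes the only soft spot in your argument.
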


\subsection{Some properties of Rerailing Automata}

To lay the foundations for the polynomial-time minimization algorithm for rerailing automata described later in this paper, we state some important properties of rerailing automata next.
In particular, we show that just like minimized~\HdTbcBW{} and deterministic parity automata, rerailing automata track residual languages. %

\begin{lemma}
\label{lem:rerailingAutomatonSuffixLanguageLabeling}
For every rerailing automaton $\mathcal{A} = (Q,\Sigma,\delta,q_0)$ (with all states being reachable), there exists a \emph{residual language labeling} function $f : Q \rightarrow 2^{\Sigma^\omega}$, i.e., such that $f(q_0) = \mathcal{L}_\mathit{rerail}(\mathcal{A})$ and for each $(q,x,q',c) \in \delta$, we have $f(q') = \{ w \in \Sigma^\omega \mid xw \in f(q)\}$.
\end{lemma}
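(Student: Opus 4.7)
My plan is to set $f(q) := \mathcal{L}_{\mathit{rerail}}(\mathcal{A}_q)$ and verify the two required properties directly. The equality $f(q_0) = \mathcal{L}_{\mathit{rerail}}(\mathcal{A})$ is immediate, and for each transition $(q,x,q',c) \in \delta$, the condition $f(q') = \{w \mid xw \in f(q)\}$ reduces, upon unfolding both sides, to the following central claim: for every (reachable) state $q$ and every finite word $v$ labelling some finite run prefix of $\mathcal{A}$ ending in $q$, we have $\mathcal{L}_{\mathit{rerail}}(\mathcal{A}_q) = \mathsf{Residual}(\mathcal{L}_{\mathit{rerail}}(\mathcal{A}), v)$. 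Indeed, if $v$ reaches $q$ then $vx$ reaches $q'$, and both sides of the transition condition become $\{w \mid vxw \in \mathcal{L}_{\mathit{rerail}}(\mathcal{A})\}$ once the central claim is applied at $q$ and at $q'$.

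To prove the central claim I fix a finite run prefix $\pi^v$ of $\mathcal{A}$ on $v$ ending in $q$ and argue, for an arbitrary $w \in \Sigma^\omega$, that $w \in \mathcal{L}_{\mathit{rerail}}(\mathcal{A}_q)$ iff $vw \in \mathcal{L}_{\mathit{rerail}}(\mathcal{A})$. A simple lifting observation underlies the whole argument: concatenating $\pi^v$ with any run $\sigma$ of $\mathcal{A}_q$ on $w$ produces a run of $\mathcal{A}$ on $vw$, and since $\pi^v$ contributes only finitely many colors, this lifted run has the same dominating color as $\sigma$. Conversely, the suffix after position $|v|$ of any run of $\mathcal{A}$ on $vw$ that starts with $\pi^v$ is a run of $\mathcal{A}_q$ on $w$ with the same dominating color.

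Because $\delta$ uses only finitely many distinct colors, the maximum $M_q$ of $\mathsf{dominatingColor}(\rho)$ over $(\pi,\rho) \in \mathsf{Runs}(\mathcal{A}_q, w)$ is attained. I pick a maximizing run $\sigma$ and apply the rerailing property of Definition~\ref{def:rerailingautomata} to the lifted run $\pi^v \sigma$ at index $i = |v|$. This yields a rerailing run $\pi'$ agreeing with $\pi^v$ on positions $0, \ldots, |v|$, a rerailing point $i' > |v|$, and a common dominating color $c^* \geq M_q$ shared by every run of $\mathcal{A}$ on $vw$ that extends $\pi'[0..i']$, where moreover $c^*$ is even iff $vw \in \mathcal{L}_{\mathit{rerail}}(\mathcal{A})$. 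In particular $\pi'$ itself has dominating color $c^*$, and its suffix $\pi'[|v|..]$ is a run of $\mathcal{A}_q$ on $w$ of dominating color $c^*$. Maximality of $M_q$ then forces $c^* \leq M_q$, so $c^* = M_q$, and the parity characterisation of $c^*$ transfers verbatim to $M_q$, which is exactly the desired equivalence.

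The step I expect to be the main obstacle is invoking the rerailing clause in a way that lets the single color $c^*$ pin down maxima on both sides at once. Two details need care: that $\pi'[|v|..]$ really is a run of $\mathcal{A}_q$ on $w$ (which holds because $\pi'$ agrees with $\pi^v$ through position $|v|$ and therefore sits at $q$ just before the suffix begins), and that the lower bound $c^* \geq M_q$ delivered by the rerailing clause collapses to equality once combined with maximality of $M_q$, so that the ``even iff $vw \in \mathcal{L}_{\mathit{rerail}}(\mathcal{A})$'' guarantee transports directly to $M_q$, and hence to membership in $\mathcal{L}_{\mathit{rerail}}(\mathcal{A}_q)$.
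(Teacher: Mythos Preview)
Your argument is correct. The squeeze $M_q \leq c^* \leq M_q$ is clean: the lower bound comes from the rerailing clause applied to the lifted run $\pi^v\sigma$, and the upper bound comes from the fact that $\pi'[|v|\,..\,]$ is itself a run of $\mathcal{A}_q$ on $w$ (since $\pi'_{|v|}=q$), so its dominating color cannot exceed the maximum $M_q$. The reduction of the transition condition to the ``central claim'' via a single reaching word $v$ (and then $vx$ for $q'$) is also fine.

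Your route differs from the paper's in packaging. The paper argues by contradiction: it takes two states $q,q'$ reachable by the \emph{same} finite word $w$, assumes $\mathcal{L}_\mathit{rerail}(\mathcal{A}_q)\neq\mathcal{L}_\mathit{rerail}(\mathcal{A}_{q'})$, and then treats the accepting and rejecting cases separately, each time invoking rerailing to push an even (respectively odd) dominating color back to $\mathcal{A}$, yielding $w\tilde w\in\mathcal{L}_\mathit{rerail}(\mathcal{A})$ and $w\tilde w\notin\mathcal{L}_\mathit{rerail}(\mathcal{A})$ simultaneously. Your version is more direct: a single application of the rerailing clause at index $|v|$ to a maximising run gives a single color $c^*$ whose parity is tied to acceptance of $vw$, and the maximality squeeze pins $c^*$ to $M_q$ in one stroke, handling both directions of the equivalence at once. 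The paper's framing makes explicit that the real content is ``equi-reachable states have equal languages'', which is useful elsewhere in the paper; your framing makes explicit that $\mathcal{L}_\mathit{rerail}(\mathcal{A}_q)=\mathsf{Residual}(\mathcal{L}_\mathit{rerail}(\mathcal{A}),v)$, which is the form one actually wants for defining $f$. One minor advantage of your version is that you only ever invoke the rerailing property in $\mathcal{A}$ itself, never implicitly in $\mathcal{A}_q$.
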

\begin{proof}
Assume the converse, i.e., so that for some states $q,q' \in Q$ both reachable by some finite word $w \in \Sigma^*$, we have that $\mathcal{A}_q$ and $\mathcal{A}_{q'}$ have different languages. Say, w.l.o.g, that for some word $\tilde w \in \Sigma^\omega$, we have $\tilde w \in \mathcal{L}_\mathit{rerail}(\mathcal{A}_q)$ but $\tilde w \notin \mathcal{L}_\mathit{rerail}(\mathcal{A}_{q'})$. 

Since $\tilde w \in \mathcal{L}(\mathcal{A}_{q})$, we have that there exists a run $\pi$ from $q$ for $\tilde w$ so that a rerailing run $\pi'$ for $\pi$ exists such all runs starting with $\pi'$ up to the rerailing point have the same even color. Since there is a run in $\mathcal{A}$ starting with a path from $q_0$ to $q$, followed by $\pi'$, and this run is a rerailing run for $w\tilde w$, we have, by the rerailing property, that $w \tilde w \in \mathcal{L}_\mathit{rerail}(\mathcal{A})$.

Let $w\tilde w \in \mathcal{L}(\mathcal{A})$. Then, we have that all runs for some prefix of $w \tilde w$ can be extended to reach a rerailing point after which the color of the run is always even. 
Since $\tilde w \notin \mathcal{L}(\mathcal{A}_{q'})$, we have that there exists a run $\pi$ from $q'$ for $\tilde w$ with an odd dominating color $c$. Let $\pi' = \pi'_0 \pi'_1 \ldots$ be some rerailing run for $\pi$ with $i'$ as rerailing point, so that all runs starting with $\pi'_0 \ldots \pi'_i$ for $\tilde w$ have the same color. It has to be the same odd color $c$, as $c$ is the largest color along any run from $q'$ for $\tilde w$, and colors of runs after a rerailing point can only be at least as high. But then, by concatenating a run from $q_0$ to $q'$, followed by $\pi'$, we have that from the second rerailing point, all runs have an odd dominating color, which by the rerailing property implies that $w\tilde w \notin \mathcal{L}(\mathcal{A})$, which in turn contradicts the previous point.
\end{proof}

\begin{lemma}
\label{lem:rerailingPoint}
Let $w = w_0 w_1 \ldots$ be a word and $\mathcal{R} = (Q, \Sigma, \delta, q_0)$ be a rerailing automaton. 
For every finite prefix run $\pi_0 \pi_1 \ldots \pi_v$ for $w_0 \ldots w_{v-1}$, there exists a continuation of it, i.e., some infinite run $\pi = \pi_0 \pi_1 \ldots$ for $w$ in $\mathcal{R}$, some SCC $(\hat Q,\hat \delta)$ (\emph{terminal SCC}), and some \emph{terminal rerailing point} $j>v$ such that for every prefix run $\pi' = \pi'_0 \pi'_1 \ldots \pi'_k$ in $\mathcal{R}$ for $w_0 \ldots w_{k-1}$ for some $k>j$ that has $\pi'_j = \pi_j$, we have:
\begin{enumerate}
\item for every transition in $\hat \delta$, there exists an extension of $\pi'$ that takes the transition, and
\item all extensions of $\pi'$ only take transition in $\hat \delta$ (from index $j$ onwards).
\end{enumerate}
\end{lemma}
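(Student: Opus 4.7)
The plan is to extract a rerailing run from the given prefix, study the set of transitions reachable from its states as the index grows, and use its monotone decrease (and eventual stabilization) together with the rerailing property to identify a terminal SCC.

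First, I would extend the given prefix run $\pi_0 \ldots \pi_v$ to any infinite run of $w$ (which exists by completeness of $\mathcal{R}$) and apply the rerailing property at index $v$. This yields a rerailing run $\pi = \pi_0 \pi_1 \ldots$ that agrees with the given prefix up to index $v$, together with a rerailing point $i' > v$ such that every run for $w$ starting with $\pi_0 \ldots \pi_{i'}$ shares the same dominating color $c$, which is even if and only if $w \in \mathcal{L}_\mathit{rerail}(\mathcal{R})$.

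Second, I would consider, for each $i \geq i'$, the set $T_i \subseteq \delta$ of transitions used by at least one infinite run from $(\pi_i, i)$ for the suffix $w_i w_{i+1} \ldots$. The sequence $T_i$ is non-increasing in $i$: any run from $(\pi_{i+1}, i+1)$ can be prefixed by $\pi$'s $i$-th transition to obtain a run from $(\pi_i, i)$, so $T_{i+1} \subseteq T_i$. Since $2^\delta$ is finite, $T_i$ stabilizes to some $\hat\delta \subseteq \delta$ at an index $j^* \geq i'$. I would take $j$ to be any index past $j^*$, let $\pi_j$ be the state given by the rerailing run at that index, and let $\hat Q$ be the set of states incident to transitions in $\hat\delta$.

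The remaining task, and the main obstacle, is to verify both stated conditions and to show that $(\hat Q,\hat\delta)$ is indeed an SCC. Condition~(2) is straightforward from stabilization: for any $\pi'$ through $(\pi_j,j)$ of length $k > j$, every transition used by $\pi'$ or by an extension of $\pi'$ from index $j$ onwards belongs to $T_j = \hat\delta$ by monotonicity of $T_i$. Condition~(1) is more delicate: for every $t \in \hat\delta$ and every such $\pi'$, an extension must actually exist that takes $t$, i.e., the per-state reachability sets $T(\pi'_k, k)$ (transitions usable by a run starting from $\pi'_k$ at position $k$) must all equal $\hat\delta$ rather than being proper subsets. I expect to establish this through a finer pigeonhole on the vector of per-state reachability profiles $(T(q,j))_{q \in R_j}$ over the bundle $R_j$ of states reachable at position $j$, which again stabilizes in finitely many $j$, combined with a rerailing-based argument: if some reachable state had a strict-subset reachability, a run from it would systematically avoid some color-$c$ transitions that the remaining runs must take infinitely often, contradicting the uniform-dominating-color guarantee of the first rerailing step. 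The SCC property of $(\hat Q,\hat\delta)$ would then follow by weaving together reachable paths inside the stabilized set to exhibit a single witnessing run; formalizing why a deficient reachability forces a different dominating color is the most delicate point I anticipate.
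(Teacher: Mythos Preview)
Your setup mirrors the paper's: extend the given prefix via the rerailing property to obtain an anchor state at some index past $v$, define the set of transitions reachable from that anchor, and shrink it until both conditions hold. The monotonicity argument for $T_i$ along the fixed rerailing run $\pi$ is correct and does deliver condition~(2) once $T_i$ stabilizes to $\hat\delta$.

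The gap is in your treatment of condition~(1). Stabilization of $T_i$ along $\pi$ only tells you that from $(\pi_m,m)$ every transition in $\hat\delta$ is reachable for all large $m$; it says nothing about states $\pi'_k \neq \pi_k$ that are reached from $(\pi_j,j)$ by branching off $\pi$. Your proposed color contradiction does not close this: the rerailing guarantee is that all runs through $(\pi_{i'},i')$ share the same \emph{dominating color}, but a run that permanently avoids some specific transition $t \in \hat\delta$ can still visit the minimal color of $\hat\delta$ infinitely often via other transitions carrying that color. Avoiding a transition is not the same as avoiding a color, so no contradiction with uniform dominating color arises. The ``finer pigeonhole on per-state reachability profiles'' is also not sufficient as stated: the map $j \mapsto (T(q,j))_{q \in R_j}$ is not monotone in any evident sense, and eventual repetition of a finite-valued sequence is not stabilization.

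The paper closes exactly this gap by allowing the anchor to move. Whenever condition~(1) fails there is a prefix run $\pi'_0 \ldots \pi'_k$ through the current $\pi_j$ and a transition $t \in \hat\delta$ unreachable from $(\pi'_k,k)$; one then replaces $j$ by $k$, replaces the anchor by $\pi'_k$, and removes $t$ from $\hat\delta$. Condition~(2) is preserved (every transition reachable from $(\pi'_k,k)$ was already in the old $\hat\delta$ and $t$ is not among them), and $\hat\delta$ strictly shrinks, so the process terminates. This iterative re-anchoring is the missing idea in your argument: rather than keeping $\pi$ fixed and appealing to colors, you must be willing to jump to the deficient state each time condition~(1) fails and let $\hat\delta$ drop accordingly.
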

\begin{proof}
We can find suitable choices of $\pi$ (from index $v$ onwards), $j$, and $(\hat Q,\hat \delta)$ as follows:
First, take an arbitrary run $\pi$ (starting with $\pi_0 \pi_1 \ldots \pi_v$), which by the definition of rerailing automata have to eventually have an index $j$ such that all runs going through $\pi_j$ for $w$ need to have the same dominating color. Then take as first guess for $\hat \delta$ every transition that can be taken along some run for $w$ starting with $\pi_0 \ldots \pi_j$.

Then, repeat the following procedure while maintaining the property that point 2) from the claim holds: Check if $\hat \delta$ already has the needed property.  
In that case, pick as $\hat Q$ all transition ends in $\hat \delta$ and $(\hat Q,\hat \delta)$ is a suitable terminal SCC. Otherwise, there is some extension $\pi_0 \ldots \pi_{k}$ of $\pi_0 \ldots \pi_j$ for $w$ such that after point $k$, only transitions within the current set $\hat \delta$ are taken, and for some transition $t$ in $\hat \delta$, the transition is not taken afterwards by any run extending  $\pi_0 \ldots \pi_{k}$. Then, replace the current value of $j$ by $k$ and remove the transition from $\hat \delta$.

This procedure terminates after a finite number of steps as there are only finitely many transitions in $\hat \delta$ that can be removed.
\end{proof}

Note that a word can have multiple terminal SCCs in a rerailing automaton. 

\section{Decomposing the Structure of Rerailing automata}

Now that rerailing automata have been defined, 
the next step is to show how a minimal such automaton for a target language $L$ can be built.
In this chapter, we show that the key to doing so is a precise characterization of how strongly connected components for the different colors with which words can be recognized need to be \emph{nested} in a rerailing automaton in order to represent $L$.

Starting point is a \emph{chain of co-Büchi automata} (COCOA, \cite{DBLP:conf/tacas/EhlersK24,DBLP:conf/fsttcs/EhlersS22}) representation for a language $L$.
\begin{definition}
\label{def:COCOA}
Let $\mathcal{A}^1, \ldots, \mathcal{A}^n$ (for some $n \in \NN$) be a sequence of co-Büchi automata such that the automata in the chain have a (not necessarily strictly) falling sequence of languages, so that $\mathcal{L}(\mathcal{A}^1) \supseteq \ldots \supseteq \mathcal{L}(\mathcal{A}^n)$. 

Given an infinite word $w$, we say that the chain induces a color of $i$ for $w$ iff either $i=0$ and $w \notin \mathcal{L}(\mathcal{A}^1)$, or $i$ is the greatest index such that $w \in \mathcal{L}(\mathcal{A}^i)$.
If $i$ is even, we say that $w$ is accepted by the chain. The language of the chain is the set of words accepted by the chain.\end{definition}

Chains of co-Büchi automata have recently been proposed as a new representation for arbitrary omega-regular languages \cite{DBLP:conf/fsttcs/EhlersS22}. It was shown that the language of a deterministic parity automaton can be decomposed in a canonical way (by the \emph{natural color of each word}) to a chain of co-Büchi automata in which the chain of languages is strictly falling. All co-Büchi automata in the chain can then be minimized and made canonical with the construction by Abu Radi and Kupferman \cite{DBLP:journals/lmcs/RadiK22} in order to obtain an overall canonical representation of the language. 
For the constructions in the rest of the paper, we will not need the canonicity of a given chain. %

There are multiple reasons for employing such a chain of co-Büchi automata as starting representation in the following:
\begin{enumerate}
\item Obtaining such a chain from a given rerailing automaton (and hence from a deterministic parity automaton as a special case) is relatively simple, as Lemma~\ref{lem:decomposingRerailingAutomatonIntoNonCanonicalCOCOA} below shows. This allows us to define the algorithm for obtaining minimal rerailing automata in the next section in a way that it operates on a chain of co-Büchi automata as starting representation without the construction being unsuitable for minimizing rerailing automata.%
\item We can reuse the main ideas of the minimization procedure for history-deterministic co-Büchi automata with transition-based acceptance by Abu Radi and Kupferman~\cite{DBLP:journals/lmcs/RadiK22}.
While in the first step of processing the COCOA (Section~\ref{subsec:preparation}), we replace the co-Büchi automata with a slightly different automaton type, the minimization process of this alternative automaton type follows the ideas of the minimization procedure by Abu Radi and Kupferman, and the properties of the resulting automata will be used in the correctness argument of the procedure for computing a minimal rerailing automaton in the next section of the paper.
\item A chain of co-Büchi automata essentially represents the colors that some parity automaton assigns to words, but does so in a decomposed way, which we use conceptually in the main result of this section and the main construction of the paper in the next section.
\end{enumerate}
We start by proving the lemma mentioned above.
\begin{lemma}
\label{lem:decomposingRerailingAutomatonIntoNonCanonicalCOCOA}
Let $\mathcal{R} = (Q,\Sigma,\delta,q_0)$ be a rerailing automaton with maximum color $k$ with language $L$.
Let furthermore $C = \{(q,q') \in Q^2 \mid \exists w \in \Sigma^*. \{q,q'\} \subseteq \delta^+(q_0,w) \}$ be a relation that encodes which states can be reached from $q_0$ under the same finite word.
Finally, let for $i \in \{1, \ldots, k\}$ a co-Büchi automaton $\mathcal{A}^i = (Q,\Sigma,\delta^i,q_0)$ be defined with:
\begin{multline*}
\delta^i = \{ (q,x,q',c) \in Q \times \Sigma \times Q \times \NN \mid \exists (q,x,q'',c') \in \delta. \\
(c' < i \wedge (q',q'')  \in C \wedge c=1) \vee (c' \geq i \wedge q'=q'' \wedge c=2) \}
\end{multline*}
Then we have that $\mathcal{A}^1, \ldots, \mathcal{A}^n$ is a chain of history-deterministic co-Büchi automata with language $L$.
\end{lemma}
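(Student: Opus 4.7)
The plan is to break the claim into four parts and attack them in a definite order: (i) each $\mathcal{A}^i$ is a co-Büchi automaton structure, (ii) the languages form a descending chain $\mathcal{L}(\mathcal{A}^1) \supseteq \cdots \supseteq \mathcal{L}(\mathcal{A}^k)$, (iii) each $\mathcal{A}^i$ is history-deterministic, and (iv) the chain's language is $L$. Part (i) is immediate from the construction, since every transition in $\delta^i$ carries color $1$ or $2$ by definition.

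For the remaining parts I would first establish a semantic bridge between runs in $\mathcal{R}$ and runs in $\mathcal{A}^i$. In the easy direction: given a $\mathcal{R}$-run $\pi$ for $w$ whose dominating color is $\geq i$, the same sequence of states is a run in $\mathcal{A}^i$ by invoking the $c' \geq i$ branch of $\delta^i$ (color $2$) for color-$\geq i$ original transitions and the $c' < i$ branch with $q' := q''$ (valid since $(q'',q'') \in C$ trivially, color $1$) for the others. Because $\pi$ eventually uses only color-$\geq i$ transitions, this $\mathcal{A}^i$-run eventually only takes color-$2$ transitions and is accepting. This yields the inclusion $\{w \mid \exists\, \mathcal{R}\text{-run of }w\text{ with dom.\ color }\geq i\} \subseteq \mathcal{L}(\mathcal{A}^i)$, and simultaneously handles the chain inclusion for suffixes (the tail of an accepting $\mathcal{A}^{i+1}$-run uses only color-$\geq i+1$ original transitions, which are color-$2$ in $\mathcal{A}^i$ as well).

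For (ii) chain property proper, the subtle case is the prefix of an accepting $\mathcal{A}^{i+1}$-run that uses a color-$1$ teleport originating in a color-$i$ original transition: in $\mathcal{A}^i$ this becomes an exact color-$2$ transition with no teleport. To repair the prefix I would maintain as an inductive invariant that the $\mathcal{A}^i$-reachable states after a prefix $w_{<k}$ are all $C$-equivalent to some $\mathcal{R}$-reachable state on the same prefix; Lemma~\ref{lem:rerailingAutomatonSuffixLanguageLabeling} then gives agreement of residual languages, which together with the rerailing property (and Lemma~\ref{lem:rerailingPoint}) lets one re-route the prefix in $\mathcal{A}^i$ so that it ends in a state from which the original accepting color-$2$ suffix can still be realized.

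For (iv) I would use the fact that $C$-equivalence is an invariant of the teleport mechanism together with Lemma~\ref{lem:rerailingAutomatonSuffixLanguageLabeling}: any accepting $\mathcal{A}^i$-run witnesses a $\mathcal{R}$-suffix from a $C$-equivalent, hence language-equivalent, $\mathcal{R}$-reachable state, so the parity of ``$w$ in $\mathcal{L}(\mathcal{A}^i)$ for the largest such $i$'' matches the parity of the maximum dominating color of $\mathcal{R}$-runs for $w$, which in turn matches $w \in L$ by the global acceptance condition of the rerailing automaton. For (iii), I would define the strategy by treating $\mathcal{R}$ as the arena of a one-player parity game in which the aim is to reach a dominating color $\geq i$; by positional determinacy a memoryless winning strategy exists on the winning region, and using this choice (with trivial teleports elsewhere) yields a strategy that, on every $w \in \mathcal{L}(\mathcal{A}^i)$, produces the required color-$2$-eventually run.

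The hard part, I expect, will be step (ii) — specifically reconstructing an $\mathcal{A}^i$-accepting prefix when the $\mathcal{A}^{i+1}$-witness essentially depends on a color-$1$ teleport over a color-$i$ edge that $\mathcal{A}^i$ no longer offers. The $C$-equivalence inductive invariant together with the rerailing property should suffice, but verifying that one can always realign the simulated state while preserving the ability to transition into a color-$\geq i$-only tail is where most of the technical work will go.
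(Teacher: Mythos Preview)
Your decomposition into (i)--(iv) is fine, but the paper's proof is considerably shorter because it establishes the full biconditional semantic bridge at the outset: $w \in \mathcal{L}(\mathcal{A}^i)$ \emph{if and only if} $w$ has an $\mathcal{R}$-run with dominating color $\geq i$. You only argue the backward direction explicitly and then spend most of your effort on (ii) and (iv) working around the missing forward direction. Once the biconditional is in hand, both (ii) and (iv) are one-liners: a run of dominating color $\geq i{+}1$ is also a run of dominating color $\geq i$, and the greatest index $i$ with $w \in \mathcal{L}(\mathcal{A}^i)$ is literally the maximum dominating color over $\mathcal{R}$-runs for $w$, whose parity is acceptance in $\mathcal{R}$ by definition. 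Your ``hard part'' in (ii)---repairing a prefix that used a color-$i$-triggered teleport---simply does not arise once you argue the biconditional directly, so the appeal to Lemma~\ref{lem:rerailingPoint} there is unnecessary.

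Your approach to (iii) has a genuine gap. Treating $\mathcal{R}$ as the arena of a \emph{one-player} game is wrong: the letters are not chosen by the player, so the game is two-player (adversary picks letters, strategy picks transitions). And even in the correct two-player co-B\"uchi game, a positional (state-and-letter only) strategy does not suffice for history-determinism in general---this is exactly why history-determinism is a nontrivial notion beyond determinism. The paper instead gives an explicit \emph{history-dependent} strategy: after every rejecting transition, move to whichever currently-reachable state maximizes the number of steps since its last rejecting transition among all prefix runs ending there. The correctness of this strategy hinges on the rerailing property: if $w$ has any $\mathcal{R}$-run with dominating color $\geq i$, then some prefix run eventually reaches a state after which \emph{every} continuation avoids colors $<i$ forever, and the maximizing strategy is guaranteed to latch onto that run. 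Your proposal does not invoke the rerailing property for (iii) at all, which is where it is actually needed.
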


\begin{proof}
We first of all note that for any $i \in \{1, \ldots, k\}$, all accepting runs of $\mathcal{A}^i$ eventually only take transitions that are also in $\mathcal{R}$ with a color of at most $i$. Also, since $C$ relates equi-reachable states, we have that if and only if a word is accepted by $\mathcal{A}^i$, there is a corresponding run in $\mathcal{R}$ with a dominating color of at least $i$.

To prove that the chain has the right language, let $w$ be a word for which the run with the greatest dominating color  in $\mathcal{R}$ is $c$. Then there is no run for $w$ in $\mathcal{R}$ with dominating color $c+1$. Hence, be the argument above, there cannot be an accepting run in $\mathcal{A}^{c+1}$. Now let us consider a run of color $c$. It exists as accepting run in $\mathcal{A}_c$ as well by the argument above.

It remains to show that for all $i \in \{1, \ldots, k\}$, the automaton $\mathcal{A}^i$ is actually history-deterministic. To prove this, we need that there is a strategy $f$ such that if a word $w$ is accepted, resolving the non-determinism in $\mathcal{A}^i$ by the strategy leads to an accepting run for $w$. We can observe that $\mathcal{A}^i$ is defined such that whenever a rejecting transition is taken, a run can move to \emph{any} state that some other run for the prefix word read so far could be in. 
If $w$ is accepted by $\mathcal{A}^i$, then $w$ has a dominating color of at most $i$ by the reasoning above, and since $\mathcal{R}$ has the property that there is some rerailing run for color $i$, eventually a state along this run is reached along such that \emph{any} extension of it does not take colors smaller than $i$. As a consequence, we can choose a history-determinism resolving strategy that
always takes a transition to some state $q'$ such that the number of steps since last having seen a rejecting transition for some prefix run in $q'$ for the word seen so far is maximized.
If the word is accepted by $\mathcal{A}^i$, with this strategy, eventually a transition to a state from which no more rejecting transitions can occur is taken.
\end{proof}

Note that the construction in Lemma~\ref{lem:decomposingRerailingAutomatonIntoNonCanonicalCOCOA} can be applied in time polynomial in the number of states of $\mathcal{R}$.

\subsection{Preparation: Floating co-Büchi automata}
\label{subsec:preparation}
Starting from the next subsection, we 
relate the structure of a rerailing automaton $\mathcal{R}$ for some language $L$
to the SCCs of automata in some chain of co-Büchi automata for $L$.

In particular, we will reason about in which combination of accepting SCCs runs in different co-Büchi automata can get stuck in for some infinite words. 
In this context, reasoning about through which states runs in automata need to traverse before getting stuck in the accepting SCCs would lead to a larger number of details. %
To keep the exposition simpler, will hence use a proxy model for minimized \HdTbcBW{} that is optimized for such reasoning.

\begin{definition}
Let $L$ be a language and $R^L = (S^L, \Sigma,\delta^L,s_{0})$ be a residual language tracking automaton for $L$. 

We say that some tuple $\mathcal{A} = (Q,\Sigma,\delta,f)$ is a \emph{floating} co-Büchi automaton (for $L$ and $R^L$) if $Q$ is a finite set of states, $\delta : Q \times \Sigma \rightharpoonup Q$ is a partial transition function, and $f : Q \rightarrow S^L$ is a residual language labeling function such that for all $q \in Q, x \in \Sigma$, we have $f(\delta(q,x)) = \delta^L(f(q),x)$. 

We say that some word $w = w_0 w_1 \ldots \in  \Sigma^\omega$ is accepted by $\mathcal{A}$ if there exists some $k \in \NN$ such that there exists some run $\pi = \pi_k \pi_{k+1} \pi_{k+2} \ldots \in Q^\omega$ such that $f(\pi_k) = \delta^L(s^L_0,w_0 \ldots w_{k-1})$ and for all $i \geq k$, we have $\delta(\pi_k,w_k) = \pi_{k+1}$. We also say that $w$ is accepted from position $k \in \NN$ onwards within the state set $\{\pi_k, \pi_{k+1}, \ldots\}$ and within the transition set $\{(\pi_k,w_k),(\pi_{k+1},w_{k+1}), \ldots \}$.

The language of $\mathcal{A}$ (which need not be $L$) is the set of words it accepts.
For every $q \in Q$, we define $\mathsf{Safe}(q)$ to be the set of \emph{safely accepted} words $w \in \Sigma^*$, i.e., those for which $\delta(q,w)$ is defined.
\end{definition}
A floating co-Büchi automaton is, in a sense, a co-Büchi automaton without the rejecting transitions and such that states with different residual language (regarding some language $L$) need to be separate. In this context, $L$ is not necessarily the language of the floating co-Büchi automaton, but could also be some language that the floating co-Büchi automaton helps to define.
A minimized co-Büchi automaton can be translated to an floating co-Büchi automaton as follows:

\begin{definition}
\label{def:residualizedAutomaton}
Let $\mathcal{A}^i = (Q^i,\Sigma,\delta^i,q^i_0)$ be a minimized \HdTbcBW{} of a COCOA for a given language $L^i$, where we denote the set of states being part of an accepting SCC as $\tilde Q^i$.
Let furthermore $R^L = (S^L,\Sigma,\delta^L,s^L_0)$ be a residual language tracking automaton for some language $L$. We define the floating co-Büchi automaton of $\mathcal{A}^i$ as the tuple $\mathcal{A}'^i = (Q'^i,\Sigma,\delta'^i,f)$ with:
\begin{itemize}
\item $Q'^i = \{ (q, s^L) \in Q^i \times S^L \mid \exists w \in \Sigma^*: q \in \delta^i(q^i_0,w), s^L \in \delta^L(s^L_0,w)\}$
\item $\delta'^i = \{((q^i_1,s^L_1),x) \mapsto (q^i_2,s^L_2) \in Q'^i \times \Sigma \times Q'^i \mid (q^i_1,x,q^i_2,2) \in \delta^i, \delta^L(s^L_1,x) = s^L_2 \}$
\item $f((q^i,s^L)) = s^L$ for all $(q^i,s^L) \in Q'^i$
\end{itemize}
\end{definition} 
We can translate a chain of co-Büchi automata to a chain of floating co-Büchi automata for the same language by applying the construction of Def.~\ref{def:residualizedAutomaton} to each automaton, as the following lemma shows.
\begin{lemma}
\label{lem:residualizedAutomaton}

Let a COCOA $\mathcal{A}^1, \ldots, \mathcal{A}^n$ for some language $L$ be given. Let $R^L$ be the residual language tracking automaton for $L$ and
$\mathcal{A}'^1, \ldots, \mathcal{A}'^n$ be floating co-Büchi automata for $\mathcal{A}^1, \ldots, \mathcal{A}^n$ and $R^L$, respectively.

Let the color of a word of $\mathcal{A}'^1, \ldots, \mathcal{A}'^n$ and the language of this chain of floating co-Büchi automata be defined analogously to Def.~\ref{def:COCOA}. 
Then, the languages of the two chains are the same.
\end{lemma}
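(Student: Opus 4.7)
The plan is to derive chain-language equality from a clean characterization of each $\mathcal{L}(\mathcal{A}'^i)$ and then exploit the fact that the parity of the chain colour is invariant under $L$-residual equivalence on any finite prefix. Specifically, for each $i$ I would establish
\[
w \in \mathcal{L}(\mathcal{A}'^i) \iff \exists k \in \NN,\ u \in \Sigma^{*} \text{ with } \delta^L(s^L_0, u) = \delta^L(s^L_0, w_0 \ldots w_{k-1}) \text{ and } u \cdot w_k w_{k+1} \ldots \in \mathcal{L}(\mathcal{A}^i).
\]
The $(\Leftarrow)$ direction takes an accepting run of $\mathcal{A}^i$ on $u\cdot w_k w_{k+1}\ldots$; by the properties of minimized \HdTbcBW{}, its tail eventually stays in an accepting SCC using only colour-$2$ transitions. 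I would then pair each such tail state with the $R^L$-state reached by the \emph{actual} prefix of $w$ up to the corresponding position. The resulting pairs lie in $Q'^i$ (witnessed by $u$ concatenated with the matching suffix of $w$, using that $\delta^L$ is deterministic and $u$ agrees with $w_0 \ldots w_{k-1}$ on $R^L$), the transitions between them lie in $\delta'^i$, and the starting pair satisfies the $f$-constraint for floating acceptance. The $(\Rightarrow)$ direction extracts a witness $u$ from the membership $(q_k, s^L_k) \in Q'^i$ and prepends a prefix run of $\mathcal{A}^i$ from $q^i_0$ to $q_k$ under $u$ to the accepting suffix already present in the floating run.

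Writing $\mathsf{col}(w)$ and $\mathsf{col}'(w)$ for the chain colours of $w$ in the original and floating chains, the characterization yields that $\mathsf{col}'(w)$ equals the maximum of $\mathsf{col}(u \cdot w_k w_{k+1}\ldots)$ taken over all pairs $(k,u)$ with $\delta^L(s^L_0,u)=\delta^L(s^L_0,w_0\ldots w_{k-1})$; the case $k=0, u=\epsilon$ puts $\mathsf{col}(w)$ itself into this set. For every such pair, $u\cdot w_k w_{k+1}\ldots$ and $w$ share an $L$-residual at the splitting point, so they have the same $L$-membership; since the original chain represents $L$, parity of the chain colour is exactly the indicator of $L$-membership, and every value in the set over which the maximum is taken therefore has the same parity as $\mathsf{col}(w)$. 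Taking a maximum preserves common parity, so $\mathsf{col}'(w)$ and $\mathsf{col}(w)$ agree in parity, and the two chain languages coincide, both equalling $L$.

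The main obstacle I anticipate is the $(\Leftarrow)$ direction of the characterization, specifically the index bookkeeping needed to realign the $R^L$-tracking from the substitute prefix $u$ to the actual prefix of $w$ at the switching point; once this is in place, the parity argument closes the proof without requiring further structural properties of the $\mathcal{A}^i$ beyond their being minimized \HdTbcBW{} with deterministic accepting transitions.
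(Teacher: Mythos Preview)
Your proposal is correct and rests on the same key insight the paper uses---that parity of the chain colour is determined by $L$-membership, which is invariant under replacing a prefix by an $L$-residually equivalent one---but you organise the argument differently and, I would say, more transparently.

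The paper splits into two cases. When $R^L$ happens to refine the residual structure of $\mathcal{A}^i$, it shows $\mathcal{L}(\mathcal{A}^i)=\mathcal{L}(\mathcal{A}'^i)$ by direct run translation (using the minimized-\HdTbcBW{} property that rejecting transitions fan out to all language-equivalent states). When $R^L$ is coarser, it picks two concrete prefix words that $R^L$ identifies but $\mathcal{A}^i$ does not, takes the one with maximal original colour $j$, and argues that acceptance in $\mathcal{A}'^j$ and non-acceptance in $\mathcal{A}'^{j+1}$ transfer between them, so the floating colour may change but its parity cannot. Your route avoids this split entirely: the characterisation
\[
w\in\mathcal{L}(\mathcal{A}'^i)\ \Longleftrightarrow\ \exists k,u\ \bigl(\delta^L(s^L_0,u)=\delta^L(s^L_0,w_0\ldots w_{k-1})\ \wedge\ u\,w_k w_{k+1}\ldots\in\mathcal{L}(\mathcal{A}^i)\bigr)
\]
handles both cases uniformly, gives $\mathcal{L}(\mathcal{A}'^i)\supseteq\mathcal{L}(\mathcal{A}^i)$ for free (via $k=0$, $u=\epsilon$), and makes the parity conclusion a one-liner once you note that the set whose maximum you take is nonempty and parity-homogeneous. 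The only place where you actually need properties of minimized \HdTbcBW{} is the $(\Leftarrow)$ direction, and there ``accepting run eventually uses only colour-$2$ transitions'' plus determinism of $\delta^L$ is all that is required; you do not need the rejecting-fan-out property that the paper's first case leans on. Your index-realignment concern is exactly the right point to flag, and the witness $u\,w_k\ldots w_{k+m-1}$ for the pair $(q_{k+m},\delta^L(s^L_0,w_0\ldots w_{k+m-1}))\in Q'^i$ resolves it cleanly.
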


Note that when translating the chain to floating co-Büchi automata, the color induced by a word can change if the co-Büchi automata distinguish finite prefix words that actually have the same residual language regarding the overall language $L$ being represented with the chain. However, as the lemma shows, this cannot change the language of the chain.

Floating co-Büchi automata have the property that they can be minimized in polynomial time (just as \HdTbcBW{}), as we show next. Later in the paper, when we nest the SCCs of different floating automata, we need to be able to minimize the inner floating co-Büchi automaton in a nesting in a way that it maintains some labeling of inner automaton states with outer automaton states. To support this, the following minimization lemma takes such a labeling into account. This labeling will be represented by a marking in addition to the floating co-Büchi automaton tracking some residual language.

\begin{definition}
A tracking automaton $R = (Q^R,\Sigma,\delta^R,q^R_0)$ is a finite-state automaton with the set of states $Q^R$, the transition function $\delta : Q \times \Sigma \rightarrow Q^R$, and some initial state $q^R_0$. Runs of the tracking automaton are defined as usual.
\end{definition}

\begin{lemma}
\label{lem:minimizationLemma}
Let a $\mathcal{A} = (Q,\Sigma,\delta,f)$ be a floating Büchi automaton for some residual language tracking automaton $R^L$.
Let furthermore some other tracking automaton $R = (Q^R,\Sigma,\delta^L,q^R_0)$ be given and $m : Q \rightarrow Q^R$ be some marking function of the states of $\mathcal{A}$ such that for each $q \in Q$ and $x \in \Sigma$, we have $\delta^R(m(q),x) = m(\delta(q,x))$.

We can compute, in polynomial time, an automaton $\mathcal{A}' = (Q',\Sigma,\delta',f')$ and some marking function $m' : Q' \rightarrow Q^R$ that minimizes $\mathcal{A}$ modulo $m$, i.e., that is as small as possible while ensuring that the languages of $\mathcal{A}$ and $\mathcal{A}'$ are the same, and whenever $\mathcal{A}'$ has a run $\pi'_k \pi'_{k+1} \ldots \in Q'^\omega$ for some word $w = w_0 w_1 \ldots$, then there exists a run $\pi_k \pi_{k+1} \in Q^\omega$ in $\mathcal{A}$ such that $m(\pi_j) = m'(\pi'_j)$ for all $j \geq k$.
\end{lemma}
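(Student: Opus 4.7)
The plan is to adapt the Abu Radi--Kupferman minimization for \HdTbcBW{} to the floating setting, while treating the marking function as a refinement of the residual language labeling that must be preserved. Since $m$ satisfies the same homomorphism property with respect to transitions as $f$, the pair $(f(q), m(q))$ acts as a combined ``labeling'' that is consistent along all transitions. The whole construction will be viewed as minimization of $\mathcal{A}$ under the refined labeling $\hat{f}(q) = (f(q), m(q))$ taking values in $S^L \times Q^R$.

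First, I would identify the maximal accepting SCCs of the underlying directed graph of $\mathcal{A}$ (that is, the maximal strongly connected subgraphs reachable via $\delta$), since these are exactly the places where accepted words get stuck. For each state $q$ belonging to such an SCC, compute its safe language $\mathsf{Safe}(q)$, which (as in the Abu Radi--Kupferman approach) can be represented by the set of safe languages of sinks in a DFA obtained by treating undefined transitions as going to a rejecting sink. Equivalence of safe languages is then decidable in polynomial time by standard DFA equivalence checks. Next, within each maximal accepting SCC, merge any two states $q,q'$ whenever $\hat{f}(q) = \hat{f}(q')$ and $\mathsf{Safe}(q) = \mathsf{Safe}(q')$. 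Across different accepting SCCs, if two states share the same $\hat{f}$-label and one's safe language is a subset of the other's, remove the one with the smaller safe language and redirect transitions into it accordingly (exactly the cross-SCC pruning of Abu Radi and Kupferman, extended to account for $m$). Finally, for $\hat{f}$-labels whose states never participate in any accepting SCC, retain a single representative and redirect into it.

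The correctness argument splits into three parts. First, language preservation: because merges only happen among states with identical safe languages and identical $\hat{f}$, every run in $\mathcal{A}'$ corresponds to (and is induced by) a run in $\mathcal{A}$ on the same word, and vice versa for accepting runs stuck in SCCs; this is exactly where the Abu Radi--Kupferman correctness lemmas transfer. Second, marking preservation: since $m$ is included in the $\hat{f}$-label, whenever a run $\pi'_k \pi'_{k+1} \ldots$ exists in $\mathcal{A}'$, taking any preimage run $\pi_k \pi_{k+1} \ldots$ in $\mathcal{A}$ (which exists because each merged class is nonempty and transitions are consistent) automatically yields $m(\pi_j) = m'(\pi'_j)$ for all $j \geq k$, since the marking component of $\hat{f}$ is fixed within each class. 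Third, minimality: any smaller automaton that preserves both the language and markings would induce a collapse of two $\hat{f}$-equivalent states with comparable safe languages or of two $\hat{f}$-equivalent states within a single SCC sharing safe languages, but our construction has already performed all such collapses, so by the Abu Radi--Kupferman minimality argument adapted to refined labels, $\mathcal{A}'$ is size-minimal among automata language-equivalent to $\mathcal{A}$ with markings projecting through $R$.

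The main obstacle will be the cross-SCC comparison step: in the original Abu Radi--Kupferman argument, states with identical residual language across different accepting SCCs can be pruned using the incomparability property of safe languages, but here the refined label $(f,m)$ may create pairs of states that would have been merged under $f$ alone but must now be kept separate because they differ on $m$. Showing that this does not sacrifice language equivalence --- and in particular that the history-determinism-style strategy used to select transitions can always be recovered when constrained to preserve $m$ --- requires verifying that the marking consistency condition $\delta^R(m(q),x) = m(\delta(q,x))$ makes the $m$-compatible choice available at every branching. Fortunately, this follows directly from the assumed commutation of $m$ with $\delta$ and $\delta^R$, so the modification is a refinement, not a restructuring, of the known minimization procedure, and runs in polynomial time overall.
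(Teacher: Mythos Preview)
Your approach is essentially the paper's: treat $\hat f(q)=(f(q),m(q))$ as a combined label, merge states with identical labels and identical safe languages, and prune cross-SCC dominated states, all following Abu Radi--Kupferman. Two points deserve attention. First, your step of ``retaining a single representative'' for $\hat f$-labels with no state in any accepting SCC is carried over from the \HdTbcBW{} setting, where an initial state forces prefix runs through such states; floating co-B\"uchi automata have no initial state and no rejecting transitions, so states outside every SCC contribute nothing to the language and the paper simply removes them (its first reduction rule). Keeping them would make your output non-minimal. Second, your minimality sketch is thin: the paper argues explicitly that for each combined label the maximal safe languages (by inclusion) occurring in the minimized automaton must coincide with those in any language- and marking-equivalent automaton, and then counts SCCs from that; your assertion that ``all such collapses have already been performed'' does not by itself exclude a smaller automaton organized differently, so you would still need that matching-of-maximal-safe-languages step to close the argument.
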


For analyzing the structure of a minimal rerailing automaton representing the language of a chain of floating co-Büchi automata, we will need to find parts in a rerailing automaton that correspond to the SCCs in the floating automata (and some other floating automata that we derive from the chain's automata). Terminal SCCs (Lemma~\ref{lem:rerailingPoint}) can fulfill this role if we can define a word that characterizes the SCC, for which we employ the following definition/lemma pair:

\begin{definition}
\label{def:SaturatingWord}
Let $\mathcal{A} = (Q,\Sigma,\delta,f)$ be a floating co-Büchi automaton for some RLTA $R^L = (S^L,\Sigma,\delta^L,s^L_0)$ and $(Q',\delta')$ be a maximal SCC in $\mathcal{A}$.

We say that some word $w = w_0 w_1 \ldots \in \Sigma^\omega$ is a \emph{saturating word} together with a \emph{saturating run} $\pi = \pi_k \pi_{k+1} \ldots$ and \emph{saturation point} $k \in \NN$ for $(Q',\delta')$ if 
\begin{enumerate}
\item[(a)] $\pi$ is a run of $\mathcal{A}$ for $w_k w_{k+1} \ldots$ and $f(\pi_k) = \delta^L(s^L_0,w_0 \ldots w_{k-1})$,
\item[(b)] for every sequence of transitions $q_1 \xrightarrow{x_1} \ldots \xrightarrow{x_{n-1}} q_n$ in $\delta'$, (i.e., we have $(q_{i},x,q_{i+1}) \in \delta'$ for all $1 \leq i < n$), there exist infinitely many $j>k$ such that $\pi_{j} w_j \ldots w_{j+n-1} \pi_{j+n} = q_1 x_1 \ldots x_{n-1} q_n$.
\end{enumerate}
\end{definition}
Informally, a saturating word for a maximal SCC of a floating co-Büchi automaton is one that takes all possible finite sequences of transitions within a maximal SCC infinitely often. 

\begin{lemma}
Let $\mathcal{A} = (Q,\Sigma,\delta,f)$ be a floating co-Büchi automaton for some RLTA $R^L = (S^L,\Sigma,\delta^L,S^L_0)$ and $(Q',\delta')$ be a maximal SCC in $\mathcal{A}$. There exists a saturating word and a corresponding run for $\mathcal{A}$ and $(Q',\delta')$.
\end{lemma}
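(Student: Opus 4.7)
The plan is to explicitly construct a saturating run $\pi$ and then read off the corresponding saturating word $w$ from it. First, I would pick any state $q \in Q'$ and exploit that the labeling of $R^L$ is bijective onto $\mathsf{ResidualLanguages}(L)$, with every residual language arising from some concrete finite word: this yields a finite prefix $u = w_0 w_1 \ldots w_{k-1} \in \Sigma^*$ with $\delta^L(s^L_0, u) = f(q)$. Setting $\pi_k := q$ then already takes care of condition~(a) of Definition~\ref{def:SaturatingWord}, since every subsequent transition of $\mathcal{A}$ preserves the residual-language labeling by the defining property $f(\delta(p,x)) = \delta^L(f(p),x)$ of floating co-B\"uchi automata.

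Second, I would note that $(Q',\delta')$, being an SCC of $\mathcal{A}$, is strongly connected as a directed graph in the following sense: by the SCC definition there is a run visiting each state of $Q'$ infinitely often while eventually taking only transitions in $\delta'$, hence for any $p,p' \in Q'$ a finite transition path within $\delta'$ from $p$ to $p'$ exists. I would fix, once and for all, such a bridging path $B_{p,p'}$ for every ordered pair $(p,p') \in Q' \times Q'$.

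Third, I would enumerate the countable family $T_1, T_2, T_3, \ldots$ of all finite transition paths of the form $q_1 \xrightarrow{x_1} \ldots \xrightarrow{x_{n-1}} q_n$ lying entirely in $\delta'$, and build the suffix of $\pi$ in blocks by dovetailing: the $i$-th block performs, for each $j = 1, \ldots, i$ in turn, first the bridging path from the current state to the source of $T_j$, then the transitions of $T_j$ itself. Reading off the transition labels yields the letters $w_k w_{k+1} \ldots$ which, concatenated after $u$, constitute the infinite word $w$. Condition~(a) continues to hold throughout by construction; condition~(b) holds because for every fixed $T_j$ each block with index $i \geq j$ contributes a faithful copy of $T_j$ starting at its source state and emitting its letters, supplying infinitely many witness positions $j>k$.

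I expect the main obstacle to be not a deep step but the careful bookkeeping: one has to check that every bridging excursion stays inside $\delta'$ (so the run does not drift out of the SCC from position~$k$ onwards) and that the dovetailing really visits every $T_j$ infinitely often while keeping the source states aligned. Both points follow directly from the strong connectedness of $(Q',\delta')$ and the standard diagonal-enumeration argument, so I do not anticipate any substantive technical surprise beyond careful indexing.
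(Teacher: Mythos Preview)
Your proposal is correct and takes essentially the same approach as the paper: both constructions pick a start state in $Q'$ reachable via a suitable prefix in $R^L$, then concatenate an enumeration of all finite transition paths in $\delta'$ interleaved with bridging paths inside the SCC. The only cosmetic difference is that you use explicit dovetailing (block $i$ replays $T_1,\ldots,T_i$) to guarantee each path appears infinitely often, whereas the paper enumerates each path once (ordered by length) and observes that every path reappears infinitely often as a prefix of longer paths.
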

\begin{proof}
We construct $w$ (the word) and $\pi$ (the run as follows):
\begin{itemize} 
\item We let $w$ start with a prefix $w_0 \ldots w_{k-1}$ such that for some state $q \in Q'$, we have $f(q) = \delta^L(s^L_0,w_0 \ldots w_{k-1})$ and start $\pi$ with $\pi_k = q$.
\item We enumerate all sequences of transitions $q_1 \xrightarrow{x_1} \ldots \xrightarrow{x_{n-1}} q_n$ ordered by length (shortest first). For each such sequence, we add a finite path positioning the (current) end of $\pi$ to $q_1$ and add a corresponding finite word to $w$. Then, we add the sequence elements to $\pi$ and $w$.
\end{itemize}
Every possible finite sequence of transitions appears infinitely often the resulting word/run pair as since $(Q',\delta')$ is an SCC, the sequence is the prefix of infinitely many sequences appearing later in the combination. 
\end{proof}

\color{black}

Floating co-Büchi automata have two more properties that we will need in the following:

\begin{lemma}
\label{lem:synchronizingWord}
Let $\mathcal{A} = (Q,\Sigma,\delta,f)$ be a minimized floating co-Büchi automaton for some $R^L$ and $q$ and $q'$ be two states with $f(q)=f(q')$ and $\mathsf{Safe}(q) \subset \mathsf{Safe}(q')$. Then, there exists some finite word $w$ such that $\delta(q,w) = \delta(q',w)$ (\emph{synchronizing word}).
\end{lemma}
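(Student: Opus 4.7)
The strategy is proof by contradiction: assume no finite word $w$ satisfies $\delta(q, w) = \delta(q', w)$, and derive a contradiction with the minimality of $\mathcal{A}$ by exhibiting a strictly smaller floating co-Büchi automaton with the same language.

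The first step is to establish a simulation-style closure property. Define the relation $B = \{(p, p') \in Q \times Q : f(p) = f(p'),\ \mathsf{Safe}(p) \subseteq \mathsf{Safe}(p')\}$. Using the $f$-consistency of $\mathcal{A}$ and the definition of $\mathsf{Safe}$, one checks directly that whenever $(p, p') \in B$ and $x \in \mathsf{Safe}(p)$, we have $x \in \mathsf{Safe}(p')$ with $(\delta(p, x), \delta(p', x)) \in B$. The hypothesis gives $(q, q') \in B$, so every pair $(\delta(q, w), \delta(q', w))$ reached along the pair automaton started at $(q, q')$ lies in $B$, and under the no-synchronization assumption every such reachable pair must be off-diagonal.

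I then consider the candidate smaller automaton $\mathcal{A}'' = (Q \setminus \{q\}, \Sigma, \delta'', f|_{Q \setminus \{q\}})$ with $\delta''(p, x) = \delta(p, x)$ when $p \in Q \setminus \{q\}$ and $\delta(p, x) \in Q \setminus \{q\}$, and undefined otherwise. The inclusion $\mathcal{L}(\mathcal{A}'') \subseteq \mathcal{L}(\mathcal{A})$ is immediate since every $\mathcal{A}''$-run is also an $\mathcal{A}$-run. The main work is proving $\mathcal{L}(\mathcal{A}) \subseteq \mathcal{L}(\mathcal{A}'')$: given $w \in \mathcal{L}(\mathcal{A})$ with accepting run $\pi$, I iteratively repair $\pi$ by replacing its first $q$-visit with $q'$ and continuing via the unique $\mathcal{A}$-run that starts at $q'$ from that position; this continuation is infinite and acceptingly labeled thanks to the simulation from $(q, q') \in B$, and the first $q$-visit is pushed strictly later. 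Taking the limit of these repairs yields a run whose $q$-visits eventually terminate, and the suffix from the last $q$-visit is a valid accepting $\mathcal{A}''$-run for $w$.

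The main obstacle is the convergence of the iterative repair. A naive limit argument can fail in a ``cascade'' scenario where $\delta(q', x) = q$ for the relevant characters, making the repaired runs oscillate rather than stabilize into a valid run. The no-synchronization assumption is exactly what blocks this: for any $x$ with $\delta(q', x) = q$ it forces $\delta(q, x) \neq q$, so along the original run $\pi$ the successors of $q$ are always distinct from $q$, and combined with the finiteness of the pair automaton this rules out the pathological oscillation and forces the repair process to terminate with a run that avoids $q$ from some position onwards. Once this convergence is in hand, $\mathcal{A}''$ is a floating co-Büchi automaton with $|Q|-1$ states and the same language as $\mathcal{A}$, contradicting minimality and establishing the existence of a synchronizing word.
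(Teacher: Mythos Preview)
Your approach differs from the paper's, which gives a direct construction: it uses that minimization forces $q$ and $q'$ to lie in the same maximal SCC, picks a state $q''$ in that SCC whose safe language is maximal (so that some finite word $u$ is safely accepted only from $q''$ among the relevant states), and takes $w$ to be any word with $\delta(q,w)=q''$; since $wu\in\mathsf{Safe}(q)\subseteq\mathsf{Safe}(q')$ and $u$ singles out $q''$, one gets $\delta(q',w)=q''$ as well.

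Your contradiction approach via deleting $q$ is a reasonable alternative strategy, but the convergence argument you give for the iterative repair is not a proof. The implication ``$\delta(q',x)=q\Rightarrow\delta(q,x)\neq q$'' is correct, but it does not yield ``along the original run $\pi$ the successors of $q$ are always distinct from $q$'': the latter statement concerns $\delta(q,w_t)$ at positions where $\pi_t=q$, and that has nothing to do with whether $\delta(q',w_t)=q$. The subsequent appeal to ``finiteness of the pair automaton'' is hand-waving; you still owe an argument for why repeatedly restarting from $q'$ cannot produce an infinite sequence $i_0<i_1<\cdots$ of restart positions.

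Here is the missing idea that makes your route go through. For each $j\ge 1$, the run $\pi^{(j)}$ (restarted from $q'$ at position $i_{j-1}$) satisfies $(\pi^{(j)}_t,\pi^{(j+1)}_t)\in B$ off the diagonal for all $t\ge i_j$, because at $t=i_j$ that pair is exactly $(q,q')$ and you then use your $B$-closure together with the no-synchronization hypothesis. Consequently, at time $t=i_{m-1}$ the states $\pi^{(0)}_t,\pi^{(1)}_t,\ldots,\pi^{(m)}_t$ have strictly increasing safe languages (strictness uses that in a minimized automaton two distinct states with the same $f$-label cannot share a safe language), so they are pairwise distinct and $m+1\le|Q|$. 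This bounds the number of repairs and gives you a run avoiding $q$ from some position on. Without this chain argument or an equivalent, the proposal has a genuine gap precisely at the point you yourself flag as the main obstacle.
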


\begin{lemma}
\label{lem:uniquenessLemma}
Let $\mathcal{A} = (Q,\Sigma,\delta,f)$ be a minimized floating co-Büchi automaton for some $R^L$, $m$ be some marking function, and $Q' \subseteq Q$ be the states of a maximal strongly connected component in $\mathcal{A}$. There exists a state $q$ in $Q'$ such that for some word $w = w_0 w_1 \ldots \in \Sigma^\omega$, $q$ is the only state in $\mathcal{A}$ with marking $m(q)$ and the same $R^L$ label as $q$ from which $w$ is safely accepted.
\end{lemma}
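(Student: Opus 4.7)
The plan is to pick $q \in Q'$ based on safe-language maximality in the candidate set and construct $w$ by iteratively reading finite "killing" words that eliminate other candidates one by one, exploiting structural properties of the minimized floating co-Büchi automaton.

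First I would introduce $T = \{s \in Q : f(s) = f(q), m(s) = m(q)\}$ and invoke the structural properties that a minimized floating co-Büchi automaton inherits from Abu Radi--Kupferman-style minimization (properties 4--6 of Section~\ref{sec:minimalHdTbcBWProperties}, adapted via Lemma~\ref{lem:minimizationLemma}): within any maximal SCC, distinct states with the same $f$-label have distinct safe languages; states with the same $f$-label lying in distinct maximal SCCs have pairwise incomparable safe languages. With these in hand I pick $q \in Q' \cap T$ such that $\mathsf{Safe}(q)$ is $\subseteq$-maximal in $\{\mathsf{Safe}(s) : s \in T\}$. Such $q$ exists because cross-SCC incomparability rules out any strict dominator outside $Q'$, and the finite set $Q' \cap T$ admits a maximal element.

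Maximality of $\mathsf{Safe}(q)$ in $T$ gives $\mathsf{Safe}(q) \not\subseteq \mathsf{Safe}(s)$ for every $s \in T \setminus \{q\}$, yielding a finite killing word $u_s \in \mathsf{Safe}(q) \setminus \mathsf{Safe}(s)$ for each such $s$. The plan for $w$ is two-phased: first a finite prefix phase that iteratively appends killing words for every still-alive shadow state, maintaining the invariant that after each prefix the main run state $\delta(q, \text{prefix})$ has $\subseteq$-maximal safe language among alive shadows; then an infinite tail phase that continues with a safely accepted extension. The invariant propagates through reading a single letter because $\mathsf{Safe}(s) \subseteq \mathsf{Safe}(q)$ implies $\mathsf{Safe}(\delta(s, x)) \subseteq \mathsf{Safe}(\delta(q, x))$ whenever both transitions are defined, and the infinite tail exists because the floating automaton's SCC structure (ultimately inherited from accepting SCCs of the originating minimized \HdTbcBW{} via Def.~\ref{def:residualizedAutomaton}) guarantees an infinite safely accepted continuation from any reachable state.

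The main obstacle will be handling shadow states whose safe languages are \emph{incomparable} with $\mathsf{Safe}(q)$ rather than strictly dominated by it: maximality in the poset sense only rules out strict dominators, and pointwise propagation of dominance through transitions can fail for incomparable pairs. My plan is to kill these incomparable shadows up front, at the very beginning of $w$, using their killing words before any other extension is performed; once all incomparable shadows are eliminated, every remaining alive shadow's current safe language lies strictly below the main run state's, and the invariant then propagates through the remainder of the construction. The technically delicate step will be verifying that reading these front-loaded killing words does not introduce new incomparable shadows among the successors, which I expect to handle by a careful induction on the number of surviving shadows combined with the distinctness/incomparability properties of the minimized automaton and, where needed, the synchronizing word guarantee of Lemma~\ref{lem:synchronizingWord}.
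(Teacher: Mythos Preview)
Your high-level approach matches the paper's: pick $q\in Q'$ with safe language maximal among the relevant candidates, then iteratively append finite words that kill each competing shadow while keeping the main run alive. Where you diverge is in how you keep control of the main run between kills, and this is where your proposal becomes vague.

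The paper avoids your invariant-maintenance problem entirely with a simple trick you do not mention: after each killing word, it appends a word that brings the main run \emph{back to $q$}, using that $q$ lies in the SCC $Q'$ (concretely, a function $h:Q'\to\Sigma^*$ with $\delta(q',h(q'))=q$). Thus every time a new shadow is to be eliminated, the main run sits at $q$ itself, and one can directly use a word in $\mathsf{Safe}(q)\setminus\mathsf{Safe}(\text{current shadow position})$; such a word exists because cross-SCC states with the same labels have incomparable safe languages in a minimized floating automaton. No invariant about maximality of the \emph{current} main-run state is needed, and the issue you flag---that reading a killing word for one incomparable shadow might disturb the relationship with the others---simply does not arise.

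Your proposed workaround (front-load the incomparable kills, then maintain dominance by induction, with Lemma~\ref{lem:synchronizingWord} as a fallback) is not obviously wrong, but you have not explained how the induction closes: after reading $u_{s_1}$ the main run is at $\delta(q,u_{s_1})$, and you would need that this state still admits a killing word for the \emph{new} position $\delta(s_2,u_{s_1})$ of the next incomparable shadow $s_2$, which does not follow from maximality of $q$ alone. The return-to-$q$ trick dissolves this difficulty in one line; I recommend adopting it.
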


\subsection{How SCCs need to be nested in rerailing automata}

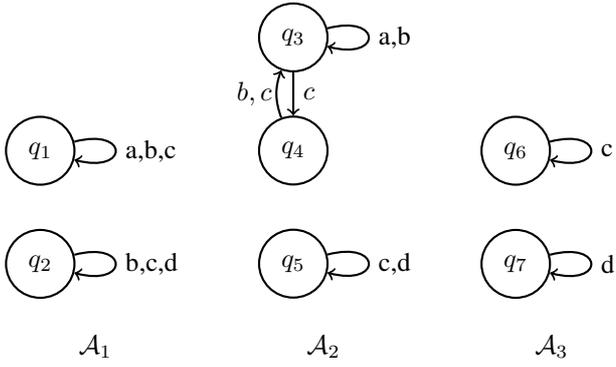
\begin{figure}[t]
\centering
\begin{tikzpicture}

\node[anchor=south] (a1) at (0,0) {
  \begin{tikzpicture}
  \draw (0,0) node[draw,shape=circle,minimum size=0.9cm,thick] (q1) {$q_1$};
  \draw[->,thick] (q1) to[loop right] node[right] {a,b,c} (q1) ;
  
  \draw (0,-1.5) node[draw,shape=circle,minimum size=0.9cm,thick] (q2) {$q_2$};
  \draw[->,thick] (q2) to[loop right] node[right] {b,c,d} (q2) ;

  \end{tikzpicture}
};

\draw (0,-0.5) node {$\mathcal{A}_1$};

\node[anchor=south] (a2) at (3.0,0) {
  \begin{tikzpicture}
  \draw (0,0) node[draw,shape=circle,minimum size=0.9cm,thick] (q3) {$q_3$};
  \draw[->,thick] (q3) to[loop right] node[right] {a,b} (q3) ;
  
  \draw (0,-1.5) node[draw,shape=circle,minimum size=0.9cm,thick] (q4) {$q_4$};
  
  \draw[->,thick] (q3) to[bend left=0] node[right] {$c$} (q4);
  \draw[->,thick] (q2) to[bend left=20] node[left=-2pt] {$b,c$} (q1);

  \draw (0.0,-3.0) node[draw,shape=circle,minimum size=0.9cm,thick] (q5) {$q_5$};
  \draw[->,thick] (q5) to[loop right] node[right] {c,d} (q5) ;

  \end{tikzpicture}
};

\draw (3.0,-0.5) node {$\mathcal{A}_2$};

\node[anchor=south] (a3) at (6,0) {
  \begin{tikzpicture}
  \draw (0,0) node[draw,shape=circle,minimum size=0.9cm,thick] (q1) {$q_6$};
  \draw[->,thick] (q1) to[loop right] node[right] {c} (q1) ;
  
  \draw (0,-1.5) node[draw,shape=circle,minimum size=0.9cm,thick] (q2) {$q_7$};
  \draw[->,thick] (q2) to[loop right] node[right] {d} (q2) ;

  \end{tikzpicture}
};

\draw (6,-0.5) node {$\mathcal{A}_3$};

\node (k) at (13,1) {

};

\end{tikzpicture}
\caption{An example chain of minimized floating co-Büchi automata for a uniform residual lanaguage over $\Sigma= \{a,b,c,d\}$}
\label{fig:chain}
\end{figure}

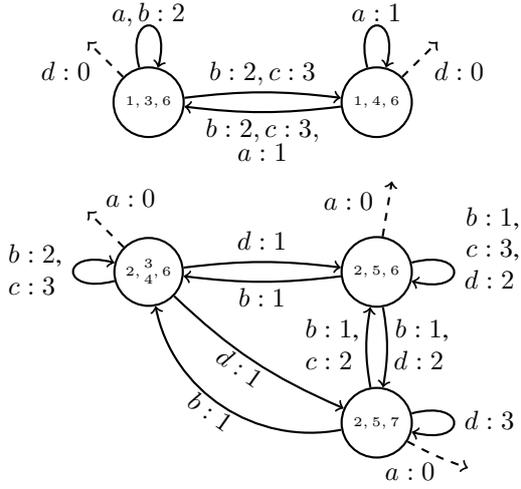
\begin{figure}[t]
\centering
\begin{tikzpicture}

  \draw (0,0.25) node[draw,shape=circle,minimum size=0.9cm,thick] (q1) {\tiny $1,3,6$};
  \draw[->,thick] (q1) to[loop above] node[above=-2mm] {\begin{tabular}{c} $a,b:2$\end{tabular}} (q1) ;
  \draw (3,0.25) node[draw,shape=circle,minimum size=0.9cm,thick] (q2) {\tiny $1,4,6$};

  \draw[->,thick] (q1) to[bend left=7] node[above=-1mm] {\begin{tabular}{c} $b:2, c:3$\end{tabular}} (q2) ;

  \draw[->,thick] (q2) to[bend left=7] node[below=-1mm] {\begin{tabular}{c} $b:2, c:3, \vspace{-1mm} $\\ $a:1$\end{tabular}} (q1) ;

  \draw[->,thick] (q2) to[loop above] node[above=-2mm] {\begin{tabular}{c} $a:1$\end{tabular}} (q2) ;

  \draw[->,thick,dashed] (q1) to node[below left=-1.5mm] {\begin{tabular}{c} $d:0$\end{tabular}} ($(q1)+(-0.8,0.8)$);

  \draw[->,thick,dashed] (q2) to node[below right=-1.5mm] {\begin{tabular}{c} $d:0$\end{tabular}} ($(q2)+(+0.8,0.8)$);

  \draw (0,-2) node[draw,shape=circle,minimum size=0.9cm,inner sep=0pt,thick] (q3) {\tiny $2,\begin{matrix}{3}\\{4}\end{matrix},6$};

  \draw (3,-2) node[draw,shape=circle,minimum size=0.9cm,thick] (q4) {\tiny $2,5,6$};
  
  \draw (3,-4) node[draw,shape=circle,minimum size=0.9cm,thick] (q5) {\tiny $2,5,7$};

  \draw[->,thick] (q3) to[loop left] node[left=-2mm] {\begin{tabular}{c} $b:2$,\\ $c:3${\color{white},}\end{tabular}} (q3) ;
  
  \draw[->,thick] (q4) to[loop right] node[right=-2mm,yshift=8pt] {\begin{tabular}{c} $b:1$,\\ $c:3$, \\ $d:2${\color{white},} \end{tabular}} (q4) ;

  \draw[->,thick] (q5) to[loop right] node[right=-2mm] {\begin{tabular}{c} $d:3$ \end{tabular}} (q5) ;

  \draw[->,thick] (q3) to[bend left=7] node[above=-1mm] {\begin{tabular}{c} $d:1$\end{tabular}} (q4) ;
  \draw[->,thick] (q4) to[bend left=7] node[below=-1mm] {\begin{tabular}{c} $b:1$\end{tabular}} (q3) ;

  \draw[->,thick] (q4) to[bend left=10] node[right=-2.5mm] {\begin{tabular}{c} $b:1$, \\ $d:2${\color{white},}\end{tabular}} (q5) ;
  \draw[->,thick] (q5) to[bend left=10] node[left=-2.5mm] {\begin{tabular}{c} $b:1$, \\ $c:2${\color{white},}\end{tabular}} (q4) ;

  \draw[->,thick] (q3) to[bend left=-10] node[below left=-5mm] {\rotatebox{-35}{\begin{tabular}{c} $d:1$\end{tabular}}} (q5) ;
  \draw[->,thick] (q5) to[bend left=45] node[below left=-5mm] {\rotatebox{-35}{\begin{tabular}{c} $b:1$\end{tabular}}} (q3) ;

  \draw[->,thick,dashed] (q3) to node[above right=-1.5mm,pos=0.9] {\begin{tabular}{c} $a:0$\end{tabular}} ($(q3)+(-0.8,0.8)$);

  \draw[->,thick,dashed] (q4) to node[left=-1.5mm,pos=0.6] {\begin{tabular}{c} $a:0$\end{tabular}} ($(q4)+(0.2,1.2)$);

  \draw[->,thick,dashed] (q5) to[bend right=5] node[inner sep=0pt,below left =-0.5mm,pos=0.7] {\begin{tabular}{c} $a:0$\end{tabular}} ($(q5)+(1.2,-.6)$);

\end{tikzpicture}

\caption{A minimal rerailing automaton over the alphabet $\Sigma = \{a,b,c\}$. Dashed lines represent cases in which transitions to all states (including the state from which the transition starts) exist. }
\label{fig:rerailing}
\end{figure}

We discuss the connection between a chain of floating co-Büchi automata for a language and a corresponding rerailing automaton by means of an example shown in Figure~\ref{fig:chain} and Figure~\ref{fig:rerailing}. To allow focusing on how SCCs need to be nested,
we use a language $L$ with only a single residual language, so that in Figure~\ref{fig:rerailing}, the residual languages of 
the states do not need to be marked.
Figure~\ref{fig:chain} contains a chain of floating co-Büchi automata $\mathcal{A}^1, \mathcal{A}^2, \mathcal{A}^3$ with $\mathcal{L}(\mathcal{A}^1) \supset \mathcal{L}(\mathcal{A}^2) \supset \mathcal{L}(\mathcal{A}^3)$ for this language $L$, and a minimal rerailing automaton for $L$ is shown in Figure~\ref{fig:rerailing}. 

Let us consider the words that contain both the letters $a$ and $d$ infinitely often. Such words are rejected by $\mathcal{A}^1$, and hence by the definition of acceptance of such a chain in Lemma~\ref{def:residualizedAutomaton}, the rerailing automaton $\mathcal{R}$ for $L$ has to accept all such words.
When looking at $\mathcal{R}$ in Figure~\ref{fig:rerailing}, it can be seen  that this is the case. From the two states at the top, reading a $d$ leads to taking a transition with color $0$, and such a transition is the only way to leave the states at the top. Likewise, from the states at the bottom, reading an $a$ will lead to taking a transition with color $0$, and that is the only way to leave these states. So every run for a word with infinitely many $a$s and $d$s has a dominating color of $0$.

With the $20$ transitions with color $0$ not explicitly shown in Figure~\ref{fig:rerailing}, it is easy to see that the automaton is \emph{split} into a part corresponding to state $q_1$ in $\mathcal{A}^1$ and a part corresponding to state $q_2$. Transitions between these parts have an even color that is smaller than all colors within the two parts.
Within the top part, the rerailing automaton tracks the states of $\mathcal{A}^2$, except that there is no state for the $q_1$/$q_5$ combination as whenever for a word, there are accepting runs from both of them, they are also accepted with the SCC of $\mathcal{A}^2$ comprising $q_3$ and $q_4$, so the rerailing automaton does not need a state for the $q_1$/$q_5$ combination. In the lower half of Figure~\ref{fig:rerailing}, however, there are states distinguishing the SCCs of $\mathcal{A}^2$, but here, the rerailing automaton states $q_3$ and $q_4$ can be the same because the two states' safe languages only differ by words that are not safely accepted by $q_2$. 

We show with the \emph{central nesting theorem} below that rerailing automata track in which SCCs of a chain of floating co-Büchi automata runs can get stuck, but the floating co-Büchi automata can be minimized by \emph{context}, just as in the example, in the $\{q_2\}$ context, the states $q_3$ and $q_4$ in $\mathcal{A}^2$ are merged.

As a first step, we characterize a property of strongly connected components in rerailing automata that are terminal for some saturating word of a co-Büchi automaton.

\begin{lemma}
\label{lem:MLabeling}
Let $\mathcal{R} = (Q,\Sigma,\delta,q_0)$ be a rerailing automaton and $\mathcal{A}^1 = (Q^1,\Sigma,\delta^1,f^1_0)$ be a floating co-Büchi automaton. 

Let furthermore $(Q'^1,\delta'^1)$ be a maximal accepting SCC in $\mathcal{A}^1$. Let $w$ be a word that such that all finite \hyperref[def:consecutiveSubword]{consecutive subwords} of a saturating word for $(Q'^1,\delta'^1)$ \hyperref[def:appears]{appear} in it at infinitely many positions, and $(\hat Q,\hat \delta)$ be a terminal SCC of $\mathcal{R}$ for $w$ for a run $\pi$ with terminal rerailing point $k \in \NN$ (with the properties given in Lemma~\ref{lem:rerailingPoint} and a prefix run of $\epsilon$). Let, without loss of generality, the start of $\mathcal{A}^1$'s run for $w$ be before position $k$.

Then there exists some mapping $m: Q'^1 \rightarrow 2^{\hat Q}$ such that
\begin{enumerate}
\item for every state $q^1 \in Q'^1$, $m(q^1)$ is non-empty
\item for every state $q^1 \in Q'^1$, every $q \in m(q^1)$ and $x_0 \ldots x_n \in \Sigma^*$, if $\delta'^1(q^1,x_0 \ldots x_n)$ is defined, then for every 
sequence $q'_0 \xrightarrow{x_0} q'_1 \ldots q'_{n} \xrightarrow{x_n} q'_{n+1}$ with 
$q'_0 = q$ and $(q'_i,x_i,q'_{i+1},c_i) \in \delta$ for all $0 \leq i \leq n$ for some $c_i \in \NN$, we have that all such transitions $(q'_i,x_i,q'_{i+1},c_i)$ are in $\hat \delta$.
\end{enumerate}
\end{lemma}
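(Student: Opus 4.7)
The proof strategy is to define $m$ using a synchronous product argument. First, construct a run $\pi^1$ of $\mathcal{A}^1$ for $w$ that eventually stays in $(Q'^1,\delta'^1)$ and traverses every transition of $\delta'^1$ infinitely often; such a $\pi^1$ exists because the hypothesis on $w$ — that every finite consecutive subword of a saturating word for $(Q'^1,\delta'^1)$ appears in $w$ at infinitely many positions — allows $\pi^1$ to be assembled by aligning successively longer fragments of a saturating run with matching occurrences in $w$. Then define $m(q^1) = \{q \in \hat Q : \pi_j=q \text{ and } \pi^1_j=q^1 \text{ for infinitely many } j>k\}$ (letting $\pi^1$ range over the above family of candidate runs). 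Property~1 is immediate from the pigeonhole principle, since $\pi^1$ visits $q^1$ at infinitely many $j>k$ and $\hat Q$ is finite.

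For property~2 I would induct on $n$, with base step: for $q \in m(q^1)$ and $x \in \Sigma$ with $\delta'^1(q^1,x)=q^1_1$ defined, every $\delta$-transition $q \xrightarrow{x} q'$ of $\mathcal{R}$ is in $\hat\delta$, and moreover $q' \in m(q^1_1)$. To prove this, I would locate a single position $j'>k$ with $\pi_{j'}=q$, $\pi^1_{j'}=q^1$, and $w_{j'}=x$. Given such $j'$, the prefix run obtained from $\pi_0\ldots\pi_{j'}$ by appending $q'$ through the transition $q \xrightarrow{x} q'$ agrees with $\pi$ at the terminal rerailing point $k$, so Lemma~\ref{lem:rerailingPoint}(2) forces this transition to lie in $\hat\delta$; extending the resulting prefix run synchronously with $\pi^1$'s transition $q^1 \xrightarrow{x} q^1_1$ to an infinite joint run in $\hat\delta\times\delta'^1$ exhibits $(q',q^1_1)$ as a synchronously visited pair, so $q'\in m(q^1_1)$. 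Iterating yields the induction step, which then implies the full statement by concatenating base transitions along $x_0\ldots x_n$.

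The main obstacle is the triple synchronization at $j'$. Here the saturating-word hypothesis is indispensable: it guarantees that for every outgoing letter $x$ of $q^1$ in $\delta'^1$ there are infinitely many $j>k$ at which some run of $\mathcal{A}^1$ for $w$ is at $q^1$ and $w_j=x$; a pigeonhole over $\hat Q$, combined with the freedom to re-route $\pi^1$ across the family of candidate runs, ensures that among these positions one can be chosen with $\pi_j=q$ for the specific $q$ at hand. The cleanest presentation is to phrase $m$ via the second-coordinate preimage of $q^1$ in the strongly connected component of the restricted product graph of $(\hat Q,\hat\delta)$ and $(Q'^1,\delta'^1)$ that is visited by the joint run in the limit, and to show that within this component every $\delta'^1$-enabled letter out of any second coordinate is realised in the required way — the combinatorial content of which is precisely the rerailing guarantee applied to all branchings of $\pi$ after position $k$.
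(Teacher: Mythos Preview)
Your constructive approach runs into a real obstacle that you correctly identify but do not resolve: the triple synchronisation $\pi_{j'}=q$, $\pi^1_{j'}=q^1$, $w_{j'}=x$. Membership $q\in m(q^1)$ only gives you infinitely many positions $j$ with $\pi_j=q$ and $\pi^1_j=q^1$; at those specific positions the next letter $w_j$ is fixed and need not ever equal the particular $x$ you want. Pigeonhole over $\hat Q$ does not help, because it would at best hand you \emph{some} state $q^*$ occurring at $x$-positions, not the given $q$. Re-routing $\pi^1$ does not help either, since the positions where $\pi_j=q$ are determined by $\pi$ alone. The product-SCC reformulation has the same defect: the product is built from $\hat\delta$-edges, so membership of $(q,q^1)$ in that SCC says nothing about $\delta$-transitions from $q$ that might leave $\hat\delta$. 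In short, your $m$ may be too large and fail property~2.

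The paper proceeds in the opposite direction. It takes $m$ to be the \emph{maximal} map satisfying property~2 (the union of all maps with that property), so property~2 holds by construction, and then proves property~1 by contradiction. If $m(q^1)=\emptyset$ for some $q^1$, then for every $q\in\hat Q$ there is a finite word $u^q$ safely accepted from $q^1$ in $(Q'^1,\delta'^1)$ under which some $\delta$-run from $q$ leaves $\hat\delta$. One then stitches these $u^q$ together (interleaving with return words inside $\delta'^1$ back to $q^1$) into a single finite word $\tilde w$ that is safely accepted from $q^1$ yet, from \emph{every} state of $\hat Q$, admits an escaping $\delta$-run. Because $\tilde w$ corresponds to a transition sequence in $\delta'^1$, it is a consecutive subword of the saturating word and hence occurs in $w$ at infinitely many positions past $k$; at any such position the state $\pi_j\in\hat Q$ then has an extension leaving $\hat\delta$, contradicting the terminal-SCC guarantee of Lemma~\ref{lem:rerailingPoint}. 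The point is that by making $\tilde w$ bad for \emph{all} starting states simultaneously, the paper sidesteps the synchronisation problem entirely.
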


The main theorem below shows that rerailing automata have SCCs that correspond 1-to-1 to some maximal SCCs in floating co-Büchi automata. The following definition makes precise what this means:

\begin{definition}[Characterizing SCC]
\label{def:characterizingSCC}
Let $\mathcal{R} = (Q,\Sigma,\delta,q_0)$ be a rerailing automaton for some language $L$ and $\mathcal{A} = (Q^\mathcal{A},\Sigma,\delta^\mathcal{A},f^A)$ be a floating co-Büchi automaton for some residual language tracking automaton $R^L$ of $L$.

Let furthermore $(Q', \delta')$ be an SCC of $\mathcal{R}$ and $(Q'^{\mathcal{A}},\delta'^{\mathcal{A}})$ be a maximal accepting SCC of $\mathcal{A}$. We say that $(Q', \delta')$ \emph{characterizes} $(Q'^{\mathcal{A}},\delta'^{\mathcal{A}})$ if there exists some \emph{characterization function} $f : Q' \rightarrow Q'^{\mathcal{A}}$ such that for each $q \in Q'$ and every character $x \in \Sigma$, we have either have:
\begin{itemize}
\item there exists a transition from $f(q)$ for x in $\delta'^{\mathcal{A}}$, there exists a transition for $x$ from $q$ inside $\delta'$, and for all $q'' \in Q$ such that $(q,x,q'',c) \in \delta$ for some $c \in \NN$, we have that $(q,x,q'',c) \in \delta'$, or
\item there does not exist a transition from $f(q)$ for $x$ in $\delta'^{\mathcal{A}}$, and 
 for all transitions $(q,x,q'',c) \in \delta$ for some $c \in \NN$ and $q'' \in Q$, we have $(q,x,q'',c) \notin \delta'$.
\end{itemize}
Furthermore, for every $q' \in Q'$, we have that the residual language of the state is $f^A(f(q'))$.
\end{definition}

We need one more definition in order to be able to make precise what it means to nest maximal SCCs in different floating co-Büchi automata:
\begin{definition}
\label{def:ALabeledFloatingAutomaton}
Let $\mathcal{A}^A = (Q^A,\Sigma,\delta^A,f^A)$ and $\mathcal{A}^B = (Q^B,\Sigma,\delta^B,f^B)$ be two floating co-Büchi automata for some residual language tracking automaton $R^L$.
We say that $\mathcal{A}^B$ is $\mathcal{A}^A$-labeled with some labeling function $f : Q^B \rightarrow Q^A$ if for each $q \in Q^B$ and $x \in \Sigma$, if $\delta^B(q,x)=q'$ for some $q'$, then we have $\delta^A(f(q),x)=f(q')$.
\end{definition}

Note that if $\mathcal{A}^B$ is $\mathcal{A}^A$-labeled, then all words accepted by $\mathcal{A}^B$ are also accepted by $\mathcal{A}^A$.
By Lemma~\ref{lem:minimizationLemma}, we can minimize such an automaton $\mathcal{A}^B$ modulo some $\mathcal{A}^A$-labeling.
We are now prepared for the main main theorem:

\begin{theorem}[Central nesting theorem]
\label{thm:centralNestingThm}
Let $\mathcal{R} = (Q,\Sigma,\delta,q_0)$ be a rerailing automaton for some language $L$, $R^L$ be a residual language tracking automaton for $L$, $\mathcal{A}^A = (Q^A,\Sigma,\delta^A,f^A)$ be a floating co-Büchi automaton with a single maximal SCC, $w$ be a saturating word for $(Q^A, \delta^A)$, and $(Q', \delta')$ be an SCC of $\mathcal{R}$ for $w$ that characterizes $(Q^A, \delta^A)$ with a characterization function $f : Q' \rightarrow Q^A$.

Let furthermore $\mathcal{A}^B = (Q^B,\Sigma,\delta^B,f^B)$ be an $\mathcal{A}^A$-labeled minimized floating co-Büchi automaton (with labeling function $f':Q^B \rightarrow Q^A$) 
such that saturating words for $\mathcal{A}^A$ are uniform w.r.t containment in $L$, words rejected by $\mathcal{A}^B$ but accepted by $\mathcal{A}^A$ are uniform w.r.t.~containment in $L$, and containment in $L$ of these two groups of words differs.

Then, each of the maximal SCCs $(Q'^B_1, \delta'^B_1), \ldots, (Q'^B_m, \delta'^B_m)$ of $\mathcal{A}^B$ have disjoint characterizing SCCs of $\mathcal{R}$ within $(Q', \delta')$.

\end{theorem}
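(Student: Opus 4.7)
My plan is to produce, for each maximal SCC $(Q'^B_i, \delta'^B_i)$ of $\mathcal{A}^B$, a dedicated witness word $\tilde w_i$ and then extract its characterizing SCC of $\mathcal{R}$ within $(Q',\delta')$ as a terminal SCC of $\mathcal{R}$ on $\tilde w_i$. The word $\tilde w_i$ has to satisfy two saturation-style subword properties simultaneously: (a) every finite consecutive subword of a saturating word for $(Q^A,\delta^A)$ appears in it infinitely often, and (b) every finite consecutive subword of a saturating word for $(Q'^B_i, \delta'^B_i)$ appears in it infinitely often. Such a word can be constructed by a joint enumeration of the finite transition sequences of $\delta^A$ and $\delta'^B_i$ (the latter lift to paths usable by (a), since $\mathcal{A}^B$ is $\mathcal{A}^A$-labeled), concatenated with short connecting paths in $\mathcal{A}^A$; this is analogous to the construction in the proof of existence of saturating words.

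Given $\tilde w_i$, Lemma~\ref{lem:rerailingPoint} yields a terminal SCC $(\hat Q_i, \hat \delta_i)$ of $\mathcal{R}$, and Lemma~\ref{lem:MLabeling} applied twice yields labelings $m^A_i : Q^A \to 2^{\hat Q_i}$ (non-empty, consistent with $\delta^A$) and $m^B_i : Q'^B_i \to 2^{\hat Q_i}$ (non-empty, consistent with $\delta'^B_i$). I then have to show $(\hat Q_i, \hat \delta_i) \subseteq (Q',\delta')$: this I would do by prepending a long enough prefix of the original $w$ to $\tilde w_i$ (which does not affect the subword-saturation properties since only the infinite tail matters) so that the prefix run of $\mathcal{R}$ enters $(Q',\delta')$ before the rerailing point guaranteed by Lemma~\ref{lem:rerailingPoint}, and then invoking the rerailing property together with the uniform-$L$-status of saturating words for $\mathcal{A}^A$ to conclude that the run cannot leave $(Q',\delta')$ without violating the dominating color forced by this $L$-status. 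The characterization function $f_i : \hat Q_i \to Q'^B_i$ is then read off from $m^B_i$; to make it single-valued I would use Lemma~\ref{lem:uniquenessLemma} applied to the minimized $\mathcal{A}^B$ together with the residual-language labeling of $\mathcal{R}$ from Lemma~\ref{lem:rerailingAutomatonSuffixLanguageLabeling}, matching each $q \in \hat Q_i$ to the unique state of $Q'^B_i$ that shares its residual language and is compatible with the admissible-transition profile at $q$.

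For disjointness, I would argue by contradiction: if $q \in \hat Q_i \cap \hat Q_j$ for $i \neq j$, then $q$ lies in $m^B_i(q^B_i) \cap m^B_j(q^B_j)$ for some $q^B_i \in Q'^B_i$ and $q^B_j \in Q'^B_j$ sharing the residual language of $q$. Lemma~\ref{lem:synchronizingWord}, combined with the incomparability of safe languages across distinct maximal SCCs of a minimized floating co-Büchi automaton (the floating analogue of property~(5) of Section~\ref{sec:minimalHdTbcBWProperties}), then yields a word that simultaneously would force two incompatible terminal behaviors of $\mathcal{R}$ from $q$, contradicting uniqueness of the dominating color guaranteed by the rerailing property. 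The main obstacle I anticipate is the step showing $(\hat Q_i, \hat \delta_i) \subseteq (Q', \delta')$: since characterizing SCCs of $\mathcal{R}$ are in general word-specific, transferring the characterization from $w$ to $\tilde w_i$ requires a careful interplay of the rerailing property, the uniform-$L$-status hypothesis on the two groups of words, and the control exerted on terminal behavior via the joint-saturation construction of $\tilde w_i$; by contrast, once this containment is established, the remaining assembly of $f_i$ and the disjointness argument reduce to fairly direct applications of the lemmas already proved.
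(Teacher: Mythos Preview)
Your proposal inverts where the difficulty lies, and the construction you set up for the ``easy'' part actually breaks the part you call hard.

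\textbf{Containment is immediate from Definition~\ref{def:characterizingSCC}.} You flag $(\hat Q_i,\hat\delta_i)\subseteq(Q',\delta')$ as the main obstacle, but the paper dispatches it in one line: since $(Q',\delta')$ characterizes $(Q^A,\delta^A)$, Definition~\ref{def:characterizingSCC} says that from any $q\in Q'$ and any letter $x$ for which $\delta^A(f(q),x)$ is defined, \emph{every} $\mathcal{R}$-transition from $q$ on $x$ lies in $\delta'$. A saturating word for an SCC of the $\mathcal{A}^A$-labeled automaton $\mathcal{A}^B$ only uses letters that are $\mathcal{A}^A$-safe, so starting the terminal-SCC computation from a suitably labeled state \emph{inside} $Q'$ (rather than from $q_0$) traps every run in $(Q',\delta')$. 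No rerailing or uniform-$L$-status argument is needed here.

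\textbf{Your doubly-saturating word makes the terminal SCC too large.} You require $\tilde w_i$ to contain every finite subword of an $\mathcal{A}^A$-saturating word. Such subwords in general use $\mathcal{A}^A$-transitions that are \emph{not} in $\delta'^B_i$, so the terminal SCC $(\hat Q_i,\hat\delta_i)$ of $\mathcal{R}$ for $\tilde w_i$ will contain states and transitions that do not correspond to any move of $(Q'^B_i,\delta'^B_i)$. Lemma~\ref{lem:MLabeling} only guarantees that each $m^B_i(q^B)$ is non-empty; it does not say every state of $\hat Q_i$ lies in some $m^B_i(q^B)$. Hence your proposed $f_i:\hat Q_i\to Q'^B_i$ is not well-defined on all of $\hat Q_i$, and $(\hat Q_i,\hat\delta_i)$ cannot itself be the characterizing SCC. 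The paper avoids this by taking $w^i$ saturating only for $(Q'^B_i,\delta'^B_i)$.

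\textbf{Disjointness is the real core, and your sketch omits its mechanism.} The paper's argument is not a ``direct application of the lemmas'': it (i) chooses each terminal SCC to have the \emph{highest possible lowest color} among terminal SCCs for $w^i$, (ii) assumes a state $q'$ carries two distinct $\mathcal{A}^B$-labels and constructs a mixed word $w^{i\oplus l}$ (interleaving pieces of $w^i$, $w^l$, and a distinguishing word $u^+$) that is accepted by $\mathcal{A}^A$ but rejected by $\mathcal{A}^B$, (iii) invokes the uniformity hypotheses to force the parity of the dominating color of $w^{i\oplus l}$, and (iv) uses Lemma~\ref{lem:MLabeling} on the resulting higher-color terminal SCC to contradict the maximality choice in (i). Lemmas~\ref{lem:synchronizingWord} and~\ref{lem:uniquenessLemma} appear only as ingredients in step (ii), to ensure $w^{i\oplus l}$ is genuinely rejected by $\mathcal{A}^B$ in the same-SCC and different-SCC cases respectively; they do not by themselves yield ``two incompatible terminal behaviors''. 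Your sketch mentions neither the maximal-color choice nor the mixed-word construction, and the phrase ``contradicting uniqueness of the dominating color'' does not match the actual contradiction, which is against the extremal choice of terminal SCC.
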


\begin{proof}
For all $1 \leq i \leq m$, let an SCC $(Q'^B_i, \delta'^B_i)$ of $\mathcal{A}^B$ be given, $q^B \in Q'^B_i$ be some state and $q \in Q'$ be given with $f'(q) = f(q^B)$. Let furthermore $w^i = w^i_0 w^i_1 \ldots \in \Sigma^\omega$ be a saturating word for $(Q'^B_i,\delta'^B_i)$ be given that has some run $\pi = \pi_k \pi_{k+1} \ldots$ with $\pi_k = q^B$.

Let us consider a terminal SCC $(\hat Q^i, \hat \delta^i)$ of $\mathcal{R}$ for $w^i_k w^i_{k+1} \ldots$ when using  $q$ as initial state. Among all possible terminal SCCs for the word, we pick one with the highest possible (among the SCCs) lowest color occurring along the transitions of the SCC.
By the fact that $\mathcal{A}^B$ is $\mathcal{A}^A$-labeled and $(Q',\delta')$ characterizes the only SCC in $\mathcal{A}^A$, we know that no run for $w^i_k w^i_{k+1} \ldots \ldots$ from $q$ leaves $(Q',\delta')$, and hence $(\hat Q^i, \hat \delta^i)$ is contained in $(Q',\delta')$.

Let furthermore $r^i \subseteq \hat Q^i \times Q'^B_i$ be a relation containing exactly those state pairs $(q',q'^B)$ such that along some run $\pi^i = \pi^i_0 \ldots$ from $q$ for $w^i_k w^i_{k+1} \ldots$, we have infinitely many $j \in \NN$ such that $\delta^B(q^B,w^i_k \ldots w^i_{k+j}) = q'^B$ and $\pi^i_j = q'$. Let the color of the run be $c^i$, which by the definition of terminal SCCs and rerailing automata is then also the minimal color among the transitions in $\hat \delta^i$. Without loss of generality, we assume that $\pi^i$ is one of the runs taking all transitions in $\hat \delta^i$ infinitely often. Note that by Def.~\ref{def:characterizingSCC} and Def.~\ref{def:ALabeledFloatingAutomaton}, for all $(q',q'^B) \in r^i$, we have $f'(q') = f(q'^B)$.

Let in the following $r = r^1 \cup \ldots \cup r^m$.

\textbf{Analyzing $r$:} We now prove that $r$ satisfies a non-overlap property in order to show a bit later that we can use $r$ to find the characterizing SCCs in $\mathcal{R}$ for the maximal SCCs of $\mathcal{A}^B$.

Assume that for some $1 \leq i \leq m$ and $1 \leq l \leq m$, there exist $q' \in \hat Q^i$ and $q'^B \in Q^B_i, q''^B \in Q^B_l$ with $q'^B \neq q''^B$ and $\{(q',q'^B),(q',q''^B)\} \subseteq r$.
Let, without loss of generality, $q''^B$ be a state such that the safe language of $q''^B$ is not a subset of the safe language of $q'^B$ (they cannot have the same safe language as $\mathcal{A}^B$ is minimized modulo some $\mathcal{A}^A$-labeling and the residual languages, and $q'^B$ and $q''^B$ have the same $\mathcal{A}^A$-labeling and the same residual language since they are related to the same state $q'$), so there is some finite word $u$ with $\delta^B(q''^B,u)$ being defined but $\delta^B(q'^B,u)$ is not.

We mix the words $w^i$ and $w^l$ into a new word $w^{i \oplus l}$ and a new run $\pi^{i \oplus l}$ in $\mathcal{R}$ such that the word is not accepted by $\mathcal{A}^B$ but the run takes all transitions of the SCCs $(\hat Q_i, \hat \delta_i)$ and $(\hat Q_l, \hat \delta_l)$ infinitely often.
Let us consider an extension of $u$, namely $u^+$ such that there is a run (part) from $q'$ to $q'$ in $(\hat Q^i,\hat \delta^i)$ and let $\pi^u$ be the corresponding run part.

We split $w^i$/$\pi^i$ and $w^l$/$\pi^l$ into pieces such that in piece number $j \in \NN$, all finite words of length $j$ that appear infinitely often in $w^i$ and $w^l$, respectively, are contained at least once. Since both words are saturating words, such a split can be found.

We then mix $w^i$ and $w^l$ and $\pi^i$ and $\pi^l$ by taking, in a round-robin fashion, first one part from the word in $(\hat Q_i, \hat \delta_i)$, one part from the word in $(\hat Q_l, \hat \delta_l)$, and then $u^+$/$\pi^+$.

The resulting word/run combination is such that the run stays in $(Q',\delta')$, and by the fact that $(Q',\delta')$ is characterizing for $\mathcal{A}^A$, is accepted by $\mathcal{A}^A$. The word is also rejected by $\mathcal{A}^B$. To see this, we can consider two cases:

Either $q''^B$ and $q'^B$ are in different SCCs of $\mathcal{A}^B$ (so $i \neq l$), and then the rerailing words contain, infinitely often, sub-words that are only safely accepted from one state in $\mathcal{A}^B$. This is because $\mathcal{A}^B$ is minimized, so that each SCC contains a state with a unique safely accepted word (Lemma~\ref{lem:uniquenessLemma}). Since such words appears infinitely often in $w^{i \oplus l}$ for both maximal SCCs of $\mathcal{A}^B$, we have that the word is rejected by $\mathcal{A}^B$.

Alternatively, we have $i=l$, and then the word $w^{i \oplus l}$ contains infinitely often sequences of synchronizing words (Lemma~\ref{lem:synchronizingWord}) and words that rule out some states such that any run from some state in the SCC $(\hat Q^l, \hat \delta^l)$ before this sequence is in a single known state afterwards. When transitioning between the word parts, any run in $\mathcal{A}^B$ is then in the state $q'^B$ when $u$ is the next part of $w^{i \oplus l}$. This happening infinitely often also precludes $w^{i \oplus l}$ from being accepted by $\mathcal{A}^B$.

By acceptance uniformity of words with the criteria in the claim, we have that either (a) $c^i$ is not the right color for a saturating word for the SCC $(Q'^B_i,\delta'^B_i)$, but then such a word having the terminal SCC $(\hat Q^i, \hat \delta^i)$ contradicts that $\mathcal{R}$ has the correct language (or that $\mathcal{R}$ has the rerailing property), (b) $c^l$ is not the right color for a saturating word for the SCC $(Q'^B_l,\delta'^B_l)$, but then such a word having the terminal SCC $(\hat Q^l, \hat \delta^l)$ contradicts that $\mathcal{R}$ has the correct language (or that $\mathcal{R}$ has the rerailing property), or (c) there is a different terminal SCC for which $w^{i \oplus l}$ has a different color than $\min(c^i,c^l)$. 

By the fact that a run for $w^{i \oplus l}$ in exists in $(Q',\delta')$ with color $\min(c^i,c^l)$, by the fact that runs for $w^{i \oplus l}$ are stuck in $(Q',\delta')$ (as $(Q',\delta')$ characterizes the SCC of $\mathcal{A}^A$, and $w^{i \oplus l}$  is accepted by $\mathcal{A}^A$), and by the rerailing property, there has to exist some terminal SCC $(\hat Q'', \hat \delta'')$ within $(Q', \delta')$ (i.e., such that $\hat \delta'' \subseteq \delta'$) with which $w^{i \oplus l}$ is recognized, and the greatest color along the transitions in $\hat \delta''$ is a color $c' > \min(c^i,c^l)$.

We can show that in this case, either a $w^i$ has a run with a higher color than $c^i$, which contradicts the assumption from earlier that we picked an accepting terminal SCC for $w^i$ with the highest possible lowest color, or $w^l$ has a run with a higher color than $c^l$ for the same reason. 

Let, w.l.o.g., assume that $c'>c^i$. Since $w^{i \oplus l}$ satisfies the conditions from Lemma~\ref{lem:MLabeling} with the SCC $(\hat Q'', \hat \delta'')$ of $\mathcal{R}$, there exists a labeling function $m$ mapping each state in $Q'^B_i$ to a state in $(\hat Q'', \hat \delta'')$ such that \emph{every} run from there stays in the SCC. Since each such run can only take transitions in $\delta''$, the smallest color of these is greater than $c^i$, which completes the contradiction. The other case (for $w^l$/$c^l$ is analogous).

\textbf{Using $r$:} 
With the property of $r$ in place that every state $q \in Q'$ can only be labeled by one state of $\mathcal{A}^B$, we can now partition the states of $Q'$ into those for each accepting SCC of $\mathcal{A}^B$ (and perhaps some that are not labeled by any state in $\mathcal{A}^B$). 
Let $Q'^i$ be these states for an accepting SCC $(Q^B_i, \delta^B_i)$ of $\mathcal{A}^B$. By the fact that $(\hat Q'_i, \hat \delta'_i)$ is a terminal SCC of $w^i$, we have that whenever for some $q,q^B$, we have $(q,q^B) \in r$, then for all $q'$ with $(q,x,q',c) \in \delta$ for some $x$ and $c$, we have that $(q',q'^B) \in r$ for the only $q'^B$ with $\delta^B(q^B,x) = q'^B$ (if this transition exists). This means that $Q'^i$ together with the transitions of $\hat \delta$ between them for transitions corresponding to $\delta^B_i$ is an SCC that characterizes $(Q^B_i, \delta^B_i)$, and they are disjoint by the property of $r$ that we found above.
\end{proof}

\section{Building minimal rerailing automaton}

With the central rerailing theorem from the previous section in place, we are now ready to define %
a procedure for computing a minimal rerailing automaton from a chain of co-Büchi automata.
Its core property is that it only separates those states that have to be separate according to the central nesting theorem. %

\subsection{Preparation: Computing the residual language tracking automaton for a given chain of co-Büchi automata}

Let a chain of co-Büchi automata (COCOA) $\mathcal{A}^1, \ldots, \mathcal{A}^n$ for a language $L$ represented by the chain according to Def.~\ref{def:COCOA} be given. This COCOA could, for instance, have been obtained by decomposing an existing rerailing automaton (or deterministic parity automaton) in polynomial time (Lemma~\ref{lem:decomposingRerailingAutomatonIntoNonCanonicalCOCOA}). Let each of the automata in this chain have already been minimized with the minimization algorithm by Abu Radi and Kupferman~\cite{DBLP:journals/lmcs/RadiK22}, so they have the properties stated in Section~\ref{sec:minimalHdTbcBWProperties}.

We first translate all automata in the chain to floating co-Büchi automata, without changing the language $L$ of the chain (Lemma~\ref{lem:residualizedAutomaton}).
Definition~\ref{def:residualizedAutomaton}, which states how to do so, requires a residual language tracking automaton to be given in addition to the history-deterministic co-Büchi automaton. We employ the residual language tracking automaton for $L$ for this purpose rather than residual language tracking automata for the input co-Büchi automata.

In order to do so, we however first need to obtain a residual language tracking automaton $R^L$ for $L$. A naive approach to obtaining it is to 
compute residual language tracking automata for every history-deterministic co-Büchi automaton in the chain and then take the product of these residual language tracking automata. 
This naive approach has two problems: (a) it can happen that
two distinct states in the product represent the same residual language with respect to $L$, and (b) the product built can be of size exponential in the sizes of the input automata despite the fact that $R^L$ may be of size only polynomial in the sum of the input automaton sizes.
While problem (a) can be addressed by minimizing the product after building it, problem (b) means computing $R^L$ may take time exponential in the input size even in cases in which such a long computation time is not actually needed. 

Figure~\ref{fig:exampleSuffixLanguageBlowup} shows an example 
chain of (minimized) co-Büchi automata with two automata that demonstrates that $R^L$ can be smaller that the product of the residual language tracking automata of the individual co-Büchi automata in the chain. While the language represented by this chain is trivial (it is the universal language over $\Sigma = \{a,b\}$), the example highlights the problem that the relatively complex algorithm below (in Figure~\ref{fig:residualComputationAlgorithm}) for computing $R^L$ solves. 
The overall language represented by the chain in Figure~\ref{fig:exampleSuffixLanguageBlowup} is the universal language, but not every word is accepted at the same level of the chain. Words starting with $a$ are accepted by both automata in the chain, while words starting with a $b$ are accepted by none of them. In both cases, Def.~\ref{def:COCOA} states that the respective words are in the represented language.

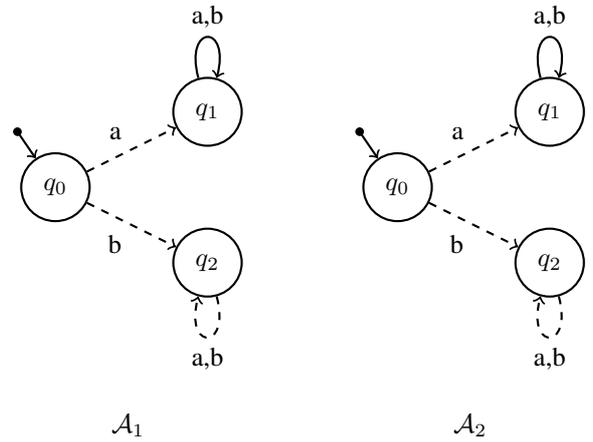
\begin{figure}[t]
\centering
\begin{tikzpicture}

\node[anchor=south] (a1) at (0,0) {
  \begin{tikzpicture}
  \draw (0,0) node[draw,shape=circle,minimum size=0.9cm,thick] (q0) {$q_0$};
  \draw (2,1) node[draw,shape=circle,minimum size=0.9cm,thick] (q1) {$q_1$};
  \draw (2,-1) node[draw,shape=circle,minimum size=0.9cm,thick] (q2) {$q_2$};
  
  \draw[->,thick] (q1) to[loop above] node[above] {a,b} (q1);
  \draw[->,thick,dashed] (q2) to[loop below] node[below] {a,b} (q2);
  \draw[->,thick,dashed] (q0) to node[above left] {a} (q1);
  \draw[->,thick,dashed] (q0) to node[below left] {b} (q2);
  
  \draw[fill=black] (-0.5,1.2) circle (0.05cm);
  \draw[->,thick] (-0.5,1.2) -- (q0);
  
  \end{tikzpicture}
};

\draw (0,-0.5) node {$\mathcal{A}_1$};

\node[anchor=south] (a2) at (4.5,0) {
    \begin{tikzpicture}
  \draw (0,0) node[draw,shape=circle,minimum size=0.9cm,thick] (q0) {$q_0$};
  \draw (2,1) node[draw,shape=circle,minimum size=0.9cm,thick] (q1) {$q_1$};
  \draw (2,-1) node[draw,shape=circle,minimum size=0.9cm,thick] (q2) {$q_2$};
  
  \draw[->,thick] (q1) to[loop above] node[above] {a,b} (q1);
  \draw[->,thick,dashed] (q2) to[loop below] node[below] {a,b} (q2);
  \draw[->,thick,dashed] (q0) to node[above left] {a} (q1);
  \draw[->,thick,dashed] (q0) to node[below left] {b} (q2);
  
  \draw[fill=black] (-0.5,1.2) circle (0.05cm);
  \draw[->,thick] (-0.5,1.2) -- (q0);
  
  \end{tikzpicture}
};

\draw (4.5,-0.5) node {$\mathcal{A}_2$};

\end{tikzpicture}
\caption{The automata $\mathcal{A}^1$ and $\mathcal{A}^2$ in a chain of co-Büchi automata $\mathcal{A}^1,\mathcal{A}^2$ for which the number of residual languages in the indivdiual co-Büchi automata is larger than the number of residual languages of the overall language represented by the chain.}
\label{fig:exampleSuffixLanguageBlowup}
\end{figure}

We counter both problems of the naive approach by replacing it with an alternative approach that computes $R^L$ in a step-by-step fashion and that merges states representing the same residual language eagerly. We start by computing residual language tracking automata $R^{L,1}, \ldots, R^{L,n}$ for each history-deterministic co-Büchi automaton of $\mathcal{A}^1, \ldots, \mathcal{A}^n$ (Lemma~\ref{lem:residualLanguageAutomatonForSingleCase}). %
Then, we show how to compute state combinations of automata in the chain that witness different residual languages of $L$ (Lemma~\ref{lem:productPart}). Finally, we show how to employ these witnesses in a process that computes $R^L$ step-by-step, merging equivalent states eagerly and having a running time that is polynomial in the input and output sizes.

We start with computing residual language tracking automata for the individual history-deterministic co-Büchi automata:
\begin{lemma}
\label{lem:residualLanguageAutomatonForSingleCase}
Let $\mathcal{A} = (Q,\Sigma,\delta,q_0,F)$ be a minimized history-deterministic co-Büchi automaton. 
We can compute a residual language tracking automaton $R^L = (S^L,\Sigma,\delta^L,s^L_0)$ for the language of $\mathcal{A}$ and a function $f^{\mathcal{A},R} : Q \rightarrow S^L$ mapping each state of $\mathcal{A}$ to the corresponding state of $R^L$ in polynomial time.
\end{lemma}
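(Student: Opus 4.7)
The plan is to identify which states of $\mathcal{A}$ witness the same residual language of $L = \mathcal{L}(\mathcal{A})$, then quotient by this equivalence to obtain both $R^L$ and the labeling $f^{\mathcal{A},R}$.

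For each reachable pair $q, q' \in Q$, I would test whether $\mathcal{L}(\mathcal{A}_q) = \mathcal{L}(\mathcal{A}_{q'})$. Language containment between two history-deterministic co-Büchi automata is decidable in polynomial time: $\mathcal{L}(\mathcal{A}_q) \subseteq \mathcal{L}(\mathcal{A}_{q'})$ is equivalent to Player~0 winning a co-Büchi game in which Player~1 picks input letters one by one, Player~0 moves a token in $\mathcal{A}_q$ according to a fixed history-deterministic resolver for $\mathcal{A}$ while simultaneously picking a successor in $\mathcal{A}_{q'}$, and Player~0 wins iff whenever the $\mathcal{A}_q$-run is accepting the $\mathcal{A}_{q'}$-run is accepting as well. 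This game is of polynomial size and solvable in polynomial time. Running it in both directions for all reachable pairs yields the equivalence $q \sim q'$ iff $\mathcal{L}(\mathcal{A}_q) = \mathcal{L}(\mathcal{A}_{q'})$.

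I then set $S^L = Q/{\sim}$, $s^L_0 = [q_0]_{\sim}$, $f^{\mathcal{A},R}(q) = [q]_{\sim}$, and $\delta^L([q]_{\sim}, x) = [q']_{\sim}$ for any transition $(q, x, q', c) \in \delta$. Well-definedness of $\delta^L$ with respect to the choice of $q'$ uses Property~1 of Section~\ref{sec:minimalHdTbcBWProperties} (language-determinism of minimized \HdTbcBW{}), which forces every $x$-successor of $q$ to represent the residual language $\{w \in \Sigma^\omega \mid xw \in \mathcal{L}(\mathcal{A}_q)\}$; well-definedness with respect to the choice of representative $q$ of $[q]_{\sim}$ follows because language-equivalent states have language-equivalent $x$-successors. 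The implicit map $\tilde f([q]_{\sim}) = \mathcal{L}(\mathcal{A}_q)$ is a bijection onto $\mathsf{ResidualLanguages}(L)$: injectivity is by definition of $\sim$, and surjectivity follows since every $\mathsf{Residual}(L, w)$ with $w \in \Sigma^*$ equals $\mathcal{L}(\mathcal{A}_q)$ for any state $q$ reached in $\mathcal{A}$ via a prefix run on $w$, which exists by completeness.

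The main obstacle is justifying polynomial-time language equivalence between two history-deterministic co-Büchi automata. This is a consequence of the good-for-games property: an HD co-Büchi automaton can serve as the right-hand side of a containment check against another HD co-Büchi automaton with only a polynomial-size reduction to a co-Büchi game, which is solvable in polynomial time. Once this is in place, the remaining steps — enumerating pairs, forming equivalence classes, and reading off $\delta^L$ from $\delta$ — are routine and clearly polynomial in $|Q|$ and $|\Sigma|$.
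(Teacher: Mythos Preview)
Your proposal is correct and follows essentially the same approach as the paper: compute the language-equivalence relation on states via a polynomial-time containment check (the paper cites Kuperberg and Skrzypczak for this, encoding containment as a parity game with a constant number of colors, while you describe an equivalent co-B\"uchi simulation game), then quotient by this equivalence and use language-determinism of minimized \HdTbcBW{} to see that the quotient transition function is well defined. The only cosmetic difference is that the paper invokes the cited containment result as a black box rather than spelling out the game, and it does not explicitly fix a resolver on the left-hand automaton.
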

\begin{proof}
To build $R^L$, we can employ a result by Kuperberg and Skrzypczak \cite[Theorem 13 in the appendix]{DBLP:conf/icalp/KuperbergS15}. They show that given a history-deterministic co-Büchi automaton $\mathcal{B}$ and some non-deterministic co-Büchi automaton $\mathcal{C}$, one can decide in polynomial time if $\mathcal{L}(\mathcal{B}) \subseteq \mathcal{L}(\mathcal{C})$ by encoding the language inclusion problem in a parity game (with a constant number of colors) and subsequently solving the game. If and only if the initial position in the game is winning for the \emph{duplicator} player, we have $\mathcal{L}(\mathcal{B}) \subseteq \mathcal{L}(\mathcal{C})$. The parity game is of size quadratic in the numbers of states of $\mathcal{A}$ and $\mathcal{B}$, and because for parity game solving over a constant number of colors, we have a polynomial time complexity, we obtain a polynomial computation time.

We can use the procedure to compute an equivalence relation $R$ between states of $\mathcal{A}$ regarding their residual language as two states $q$ and $q'$ have the same residual language if and only if we have $\mathcal{L}(\mathcal{A}_q) \subseteq \mathcal{L}(\mathcal{A}_{q'})$ and $\mathcal{L}(\mathcal{A}_{q'}) \subseteq \mathcal{L}(\mathcal{A}_{q})$. Computing $R$ is also possible in time polynomial in $\mathcal{A}$ because it involves quadratically many invocations of a polynomial-time procedure. Since language equivalence is transitive, reflexive and symmetric, we have that $R$ is an equivalence relation.

Given $R$, we can now define $R^L$ by setting $S^L$ to be the equivalence classes of $R$ (which implies that $|S^L| \leq |Q|$).
Then, we define $q^0_L$ to be class of $R^L$ including $q_0$ and set, for all $c \in S^L, x \in \Sigma$, $\delta^L(c,x) = c'$ to some $c' \in S^L$ such that $\left(\cup_{q \in c} \delta(q,x) \right) \subseteq c'$. 
There is exactly one such class because a minimized history-deterministic co-Büchi automaton is deterministic in its residual language, so all $x$-successors of states in the same residual language have the same residual language.

As mapping $f^{\mathcal{A},R} : Q \rightarrow R^L$, we use a function mapping each state to the equivalence class (w.r.t.~$R$) that it contains.

We note that the parity game construction by Kuperberg and Skrzypczak~\cite[Theorem 13 in the appendix]{DBLP:conf/icalp/KuperbergS15} is for the case of state-based acceptance instead of transition-based acceptance. The construction however also works for the transition-based case if the parity game also uses a winning condition that is transition-based. Furthermore, we note that there is, in practice, no need to build separate parity games for each combination of states $q$ and $q'$, as the games have positions for each combination of states and exactly the positions $(q,q')$ that are winning for the duplicator player in the game have $\mathcal{L}(\mathcal{A}_q) \subseteq \mathcal{L}(\mathcal{A}_{q'})$. So it suffices to build two games and to determine the winning positions in the games in order to obtain $R$.
\end{proof}

\begin{lemma}
\label{lem:productPart}
Let $\mathcal{A}^1, \ldots, \mathcal{A}^n$ be a chain of co-Büchi automata for the language $L$, $0 \leq i < n$ and $0 \leq j < n$ be given, and for notational simplicity $\mathcal{A}^0$ be a single-state co-Büchi automaton for the universal language.
Let furthermore $R^{L,0}, \ldots, R^{L,n}$ be residual language tracking automata for $\mathcal{A}^0, \ldots, \mathcal{A}^n$ with $R^{L,k} = (S^{L,k},\Sigma,\delta^{L,k},s^{L,k}_{0})$ for all $0 \leq j \leq n$.

We can compute, in polynomial time, a set $R^{i,j} \subseteq S^{L,i} \times S^{L,{i+1}} \times S^{L,j} \times S^{L,{j+1}}$ such that if for some finite words $w^1, w^2 \in \Sigma^*$ and 
$s^i \in S^{L,i}$, $s^{i+1} \in S^{L,{i+1}}$, 
$s^j \in S^{L,j}$, and $s^{j+1} \in S^{L,{j+1}}$, 
we have:
\begin{itemize}
\item $\delta^{L,i}(s_{0}^{L,i},w^1) = s^i$,
\item $\delta^{L,{i+1}}(s_{0}^{L,{i+1}},w^1) = s^{i+1}$,
\item $\delta^{L,j}(s_{0}^{L,j},w^2) = s^j$, and
\item $\delta^{L,{j+1}}(s_{0}^{L,j+1},w^2) = s^{j+1}$,
\end{itemize}
then if and only if there exists some infinite word $\tilde w \in \Sigma^\omega$ such that 
\begin{itemize}
\item $w^1 \tilde w \in \mathcal{L}(\mathcal{A}^i)$, 
\item $w^1 \tilde w \notin \mathcal{L}(\mathcal{A}^{i+1})$, 
\item $w^2 \tilde w \in \mathcal{L}(\mathcal{A}^j)$, and
\item $w^2 \tilde w \notin \mathcal{L}(\mathcal{A}^{j+1})$,
\end{itemize}
we have $(s^i,s^{i+1},s^j,s^{j+1}) \in R^{i,j}$.
\end{lemma}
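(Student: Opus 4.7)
The plan is to reduce the construction of $R^{i,j}$ to a single language-inclusion check between two co-Büchi automata obtained from $\mathcal{A}^i, \mathcal{A}^{i+1}, \mathcal{A}^j, \mathcal{A}^{j+1}$ by standard product constructions, and then to invoke the Kuperberg-Skrzypczak polynomial-time parity-game algorithm already used in Lemma~\ref{lem:residualLanguageAutomatonForSingleCase}.

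For each $k \in \{i, i+1, j, j+1\}$, I use the map $f^{\mathcal{A}^k,R} : Q^k \to S^{L,k}$ from Lemma~\ref{lem:residualLanguageAutomatonForSingleCase} to pick, for every $s^k \in S^{L,k}$, a representative state $q^k_{s^k} \in Q^k$ with $f^{\mathcal{A}^k,R}(q^k_{s^k}) = s^k$. Because any two states of a minimized HD co-Büchi automaton that share an $f^{\mathcal{A}^k, R}$-label also represent the same residual language, the four bullet-point membership conditions in the lemma only depend on $s^i, s^{i+1}, s^j, s^{j+1}$, and the existence of a witnessing $\tilde w$ for the quadruple $(s^i, s^{i+1}, s^j, s^{j+1})$ is equivalent to the failure of the language inclusion $\mathcal{L}(\mathcal{A}^i_{q^i_{s^i}}) \cap \mathcal{L}(\mathcal{A}^j_{q^j_{s^j}}) \subseteq \mathcal{L}(\mathcal{A}^{i+1}_{q^{i+1}_{s^{i+1}}}) \cup \mathcal{L}(\mathcal{A}^{j+1}_{q^{j+1}_{s^{j+1}}})$.

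Next I build two auxiliary co-Büchi automata. The left-hand intersection is realized by the standard product HD co-Büchi automaton $\mathcal{B}$ on state space $Q^i \times Q^j$, whose accepting transitions are those where both components take an accepting transition and whose HD strategy is the product of the two input HD strategies. The right-hand union is realized as a non-deterministic co-Büchi automaton $\mathcal{C}$ on state space $(Q^{i+1} \times Q^{j+1}) \cup Q^{i+1} \cup Q^{j+1}$, where from a pair-state $(q^{i+1}, q^{j+1})$ on any letter $x$ the automaton non-deterministically continues either into $Q^{i+1}$ via any $\mathcal{A}^{i+1}$-transition from $q^{i+1}$ on $x$ or into $Q^{j+1}$ via any $\mathcal{A}^{j+1}$-transition from $q^{j+1}$ on $x$, keeping the corresponding colors; inside $Q^{i+1}$ and $Q^{j+1}$ it behaves like $\mathcal{A}^{i+1}$ and $\mathcal{A}^{j+1}$ respectively. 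A straightforward run-analysis gives $\mathcal{L}(\mathcal{C}_{(q^{i+1}, q^{j+1})}) = \mathcal{L}(\mathcal{A}^{i+1}_{q^{i+1}}) \cup \mathcal{L}(\mathcal{A}^{j+1}_{q^{j+1}})$.

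Feeding $(\mathcal{B}, \mathcal{C})$ into the Kuperberg-Skrzypczak construction yields a parity game with a constant number of priorities and at most polynomially many positions (positions are pairs of a state of $\mathcal{B}$ and a state of $\mathcal{C}$), solvable in polynomial time; as already observed in the proof of Lemma~\ref{lem:residualLanguageAutomatonForSingleCase}, a single solve labels each such position with whether the duplicator wins, which precisely encodes whether $\mathcal{L}(\mathcal{B}_{q_\mathcal{B}}) \subseteq \mathcal{L}(\mathcal{C}_{q_\mathcal{C}})$. I place $(s^i, s^{i+1}, s^j, s^{j+1})$ into $R^{i,j}$ iff the position $((q^i_{s^i}, q^j_{s^j}), (q^{i+1}_{s^{i+1}}, q^{j+1}_{s^{j+1}}))$ is spoiler-winning; the ``if''-direction of the lemma follows because spoiler-winning produces a word in $\mathcal{L}(\mathcal{B}) \setminus \mathcal{L}(\mathcal{C})$, while the ``only if''-direction follows because a witness $\tilde w$ certifies that the inclusion fails at the chosen representatives. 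The main obstacle I foresee is checking that the union construction for $\mathcal{C}$ really is a (non-deterministic) co-Büchi automaton of the right shape for the Kuperberg-Skrzypczak black box, which is straightforward since their algorithm already handles an HD left-hand side against an arbitrary non-deterministic right-hand side, and noting that one single solve already covers all quadruples because the game's position space ranges over all relevant pairs.
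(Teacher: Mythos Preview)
Your argument is correct and the reduction to a single Kuperberg--Skrzypczak inclusion test is sound: the product of two history-deterministic co-B\"uchi automata is again history-deterministic (the componentwise product of the two resolving strategies works), and your union automaton $\mathcal{C}$ is a legitimate non-deterministic co-B\"uchi automaton whose language from a pair state is exactly the required union. Since the membership conditions in the lemma depend only on the residual-language classes $s^i,s^{i+1},s^j,s^{j+1}$, reading off spoiler-winning positions of the form $((q^i_{s^i},q^j_{s^j}),(q^{i+1}_{s^{i+1}},q^{j+1}_{s^{j+1}}))$ gives precisely the required $R^{i,j}$.

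The paper takes a different but closely related route. Instead of black-boxing Kuperberg--Skrzypczak, it writes down an explicit two-colour parity game on $Q^i\times Q^{i+1}\times Q^j\times Q^{j+1}\times\{0,1,2\}$ in which Player~0 chooses the letters and is \emph{restricted to accepting transitions} of $\mathcal{A}^i$ and $\mathcal{A}^j$, while Player~1 resolves the non-determinism in $\mathcal{A}^{i+1}$ and $\mathcal{A}^{j+1}$; the $\{0,1,2\}$ counter enforces that rejecting transitions are seen infinitely often in \emph{both} $\mathcal{A}^{i+1}$ and $\mathcal{A}^{j+1}$. Because Player~0 is confined to accepting transitions, this game only witnesses suffixes that are safely accepted from the current $\mathcal{A}^i$/$\mathcal{A}^j$ states, and the paper then needs an additional backward-saturation pass through the RLTA transitions to propagate the information to all reachable tuples. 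Your approach avoids this second phase entirely because you keep the full (non-restricted) co-B\"uchi automata inside $\mathcal{B}$ and let the inclusion test handle the prefix part implicitly. The trade-off is modularity versus self-containedness: your argument leans on the product-preserves-HD fact and on the per-position correctness of the Kuperberg--Skrzypczak game (both straightforward but worth stating), whereas the paper's hand-built game is completely explicit.
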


The set of relations $\{R^{i,j}\}_{1 \leq i \leq n, 1 \leq j \leq n}$ computed according to the previous lemma is used by the procedure for computing the residual langauge tracking automaton for the language encoded in a chain of co-Büchi automata given in Figure~\ref{fig:residualComputationAlgorithm}.

\begin{figure*}
\begin{algorithmic}[1]
\Procedure{ResidualAutomatonComputation}{$\mathcal{A}^1, \ldots, \mathcal{A}^n$}
\State compute $R^{L,1}, \ldots, R^{L,n}$ for $\mathcal{A}^1, \ldots, \mathcal{A}^n$ according to Lemma~\ref{lem:residualLanguageAutomatonForSingleCase}
\State compute $\{R^{i,j}_L\}_{1 \leq i \leq n, 1 \leq j \leq n}$ for $\mathcal{A}^1, \ldots, \mathcal{A}^n$ according to
Lemma~\ref{lem:productPart}
\State $S^L \gets \{(s^{L,0}_0, \ldots, s^{L,n}_0)\}$
\State $\delta^L \gets \emptyset$, $s^L_0 = (s^{L,0}_0, \ldots, s^{L,n}_0)$
\State $\mathsf{ToDo} \gets S^L$
\While{$\mathsf{ToDo} \neq \emptyset$}
\State remove some $(s^1, \ldots, s^n)$ from $\mathsf{ToDo}$
\For{$x \in \Sigma$} \label{lin:characterloop}
\State $(s'^1, \ldots, s'^n) \gets (\delta^{L,1}(s^1,x), \ldots, \delta^{L,n}(s^n,x))$
\For{$(s''^1, \ldots, s''^n) \in S^L$} \Comment{Check if there is already a suitable successor state in $S^L$} \label{lin:forLoopFindingOldState}
\If{for no $i,j$ with different evenness, we have $(s'^i,s'^{i+1},s''^j,s''^{j+1}) \in R^{i,j}$}
\State $\delta^L \gets \delta^L \cup \{((s^1, \ldots, s^n),x) \mapsto (s''^1, \ldots, s''^n)\}$
\State continue with next iteration of the loop starting in Line~\ref{lin:characterloop}
\EndIf
\EndFor
\State Add $(s'^1, \ldots, s'^n)$ to both $S^L$ and $\mathsf{ToDo}$ \Comment{No suitable previous equivalent state found} \label{line:addState}
\State $\delta^L \gets \delta^L \cup \{((s^1, \ldots, s^n),x) \mapsto (s'^1, \ldots, s'^n)\}$\label{line:addTransitionToNewState}
\EndFor
\EndWhile
\State return $R^L = (S^L,\Sigma,\delta^L,s^0_L)$
\EndProcedure
\end{algorithmic}
\caption{Algorithm for computing a residual language tracking automaton $R^L$ from a chain of history-deterministic co-Büchi automata $\mathcal{A}^1, \ldots, \mathcal{A}^n$}
\label{fig:residualComputationAlgorithm}
\end{figure*}

We now show that this procedure actually computes $R^L$.

\begin{theorem}
\label{theorem:buildingRL}
Let $\mathcal{A}^1, \ldots, \mathcal{A}^n$ be a chain of history-deterministic co-Büchi automata together representing a language $L$ (by Def.~\ref{def:COCOA}). Given this chain, the algorithm in Figure~\ref{fig:residualComputationAlgorithm} computes the residual language tracking automaton $R^L$ of $L$ in time polynomial in the sum of the numbers of states in $\mathcal{A}^1, \ldots, \mathcal{A}^n$ and $R^L$.
\end{theorem}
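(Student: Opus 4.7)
The plan is to split the argument into a semantic characterization of state-vector equivalence, an invariant-based correctness proof, and a routine running-time analysis.

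First, I would establish the key semantic invariant. Because each $R^{L,i}$ deterministically tracks the residual language of $\mathcal{L}(\mathcal{A}^i)$ by Lemma~\ref{lem:residualLanguageAutomatonForSingleCase}, a state vector $(s^0,\ldots,s^n)$ reached by reading a finite prefix $w$ records, for every suffix $\tilde w$, enough information to determine which $\mathcal{L}(\mathcal{A}^i)$'s contain $w\tilde w$, and hence the color that the chain assigns to $w\tilde w$ by Def.~\ref{def:COCOA}. Two prefixes yielding the same vector therefore have the same residual language under $L$. The critical refinement is the converse at the level of vectors: two vectors represent the same residual language of $L$ iff for every suffix $\tilde w$ the two induced chain-colors have the same parity, i.e., both place the resulting extension in $L$ or both reject it. By Lemma~\ref{lem:productPart}, a parity-disagreement witness $\tilde w$ exists exactly when there is a pair $i,j$ with $i\not\equiv j\pmod 2$ and $(s'^i,s'^{i+1},s''^j,s''^{j+1})\in R^{i,j}$, which is precisely the condition tested by the inner loop of the algorithm.

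With this equivalence in hand, I would prove by induction on the order in which states are added to $S^L$ that the algorithm maintains two invariants: (i)~no two states in $S^L$ represent the same residual language of $L$, and (ii)~every assigned transition $\delta^L(s,x)$ points to the unique state already in $S^L$ representing $\mathsf{Residual}(\tilde f(s),x)$. Invariant~(i) follows from the inner test together with the equivalence above; invariant~(ii) is immediate because the successor vector is computed coordinatewise by the deterministic transitions of the $R^{L,i}$, and the equivalence check routes to an existing state only when it shares the residual language. A standard worklist argument, using that every fresh state is enqueued in $\mathsf{ToDo}$ upon creation, shows that every residual language reachable from $L$ is eventually represented, so the returned tuple is indeed an RLTA for $L$.

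For the complexity, I would bound $|S^L|$ by the number of residual languages of $L$, i.e., by $|R^L|$, using invariant~(i). The outer while loop then runs $|R^L|$ times; per iteration, the character loop iterates $|\Sigma|$ times and, for each character, scans up to $|S^L|$ candidate states and performs $O(n^2)$ membership tests in the precomputed $R^{i,j}$ per candidate. Since each such test is answered in constant time after precomputation, and the precomputations of the $R^{L,i}$ and of the $R^{i,j}$ take polynomial time by Lemmas~\ref{lem:residualLanguageAutomatonForSingleCase} and~\ref{lem:productPart}, the total running time is polynomial in the combined sizes of $\mathcal{A}^1,\ldots,\mathcal{A}^n$ and of $R^L$. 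The main obstacle I expect is the semantic equivalence step: one must verify carefully that parity-disagreement of the chain-colors assigned to two extensions corresponds exactly to the ``different evenness of $i,j$'' condition in the algorithm, via the fact that the indices in Lemma~\ref{lem:productPart} are precisely the greatest chain indices accepting a suffix, i.e., the chain-colors in the sense of Def.~\ref{def:COCOA}. Once this correspondence is pinned down and the $R^{i,j}$ serve as polynomial-time lookup oracles, the inductive maintenance of the invariants and the complexity bookkeeping are routine.
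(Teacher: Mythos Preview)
Your proposal is correct and follows essentially the same approach as the paper: an inductive maintenance of the invariants that (i) distinct states in $S^L$ represent distinct $L$-residuals and (ii) transitions are semantically correct, together with the worklist argument for completeness and the polynomial bound via $|S^L|\le |R^L|$ and the precomputed $R^{i,j}$ oracles. One small point to be careful with: your opening claim that the stored vector ``determines which $\mathcal{L}(\mathcal{A}^i)$'s contain $w\tilde w$'' is only literally true for the freshly computed successor vector, not necessarily for a state after re-routing; the paper sidesteps this by phrasing its first invariant directly in terms of $L$-membership (the parity of the highest accepting index from the stored coordinates) rather than per-level acceptance, and you will want to do the same when you write this out in full.
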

\begin{proof}
We note that the algorithm in Figure~\ref{fig:residualComputationAlgorithm} first computes the residual language tracking automata $R^{L,1}, \ldots, R^{L,n}$ for $\mathcal{A}^1, \ldots, \mathcal{A}^n$ and the relations $\{R^{i,j}_L\}_{1 \leq i \leq n, 1 \leq j \leq n}$ according to Lemma~\ref{lem:productPart}. 
Let $f^{L,1}, \ldots, 
 f^{L,n}$ furthermore be the functions mapping the states of $\mathcal{A}^1, \ldots, \mathcal{A}^n$ to the respective states encoding their residual languages.
 
The algorithm gradually builds $R^L$ in a state by state manner, where each state is marked by residual languages for each of $\mathcal{A}^1, \ldots, \mathcal{A}^n$. In particular, we want to show:

\begin{enumerate}
\item the algorithm maintains the invariant that if in $R^L$, we have $\delta^L(s^L_0,w) = (s'^1, \ldots, s'^n)$ for some $w \in \Sigma^*$, then for all $\tilde w \in \Sigma^\omega$, if and only if $w \tilde w \in L$, we have that the highest index $i$ such that $\tilde w$ is accepted by $\mathcal{A}^i_{q^i}$ with $f^{L,i}(q^i) = s'^i$ is even.
\item At no time there exist two states for the same residual language in $R^L$.
\item Eventually, $R^L$ is a complete automaton, i.e., for every finite word $w \in \Sigma^*$, $\delta(s^L_0,w)$ is defined.
\end{enumerate}

We note that the quantification over $q^i$ in the first bullet point is both universal and existential because $R^{L,i}$ tracks residual languages, so that states of $\mathcal{A}^i$ that map to the same $R^{L,i}$ state accept the same words.

The first two claims will be shown by induction, while the third claim follows from the fact that the algorithm maintains a list of states for which no outgoing transitions exist yet. For each of them, the algorithm adds outgoing transitions for each character and at the end of the procedure, no state remains in this list. The three claims together imply the correctness of $R^L$, so it remains to show the first two claims. 

Initially, when $S^L$ consists only of $S^L_0 = \{(s^{L,1}_1, \ldots, s^{L,n}_n)\}$, the second claim trivially holds, and the first claim holds as well, as the initial states of each of $\mathcal{A}^1, \ldots, \mathcal{A}^n$ have the initial residual languages of $R^{L,1}, \ldots, R^{L,n}$. We hence only need to show that no added transition destroys the property that the first two claims hold.

To see this, we consider the possible ways in which the algorithm can add a transition for a character $x$ from some state $(s^1,\ldots,s^n)$: either the loop starting in Line~\ref{lin:forLoopFindingOldState} finds an existing state or not, in which case lines~\ref{line:addState} to \ref{line:addTransitionToNewState} are executed.

In both cases, the algorithm first computes $(s'^1, \ldots, s'^n) = (\delta^{L,1}(s^1,x), \ldots, \delta^{L,n}(s^n,x))$. Under the first claim holding for every $w \in \Sigma$, we know that 
$(s'^1, \ldots, s'^n)$ has the property that for all $\tilde w$ in $\Sigma^\omega$, if and only if the greatest index $1 \leq i \leq n$ such that $\tilde w \in \mathcal{L}(\mathcal{A}^i_{q'^i})$ for some $q'^i \in Q^i$ with $f^{L,i}(q'^i) = s'^i$ is even, we have that $w x \tilde w \in L$.

\looseness-1 Let us now consider the case that $(s'^1, \ldots, s'^n)$ is discarded by the loop starting in Line~\ref{lin:forLoopFindingOldState} finding a suitable alternative state $(s''^1, \ldots, s''^n) \in S^L$.
This can only happen if we do not have $(s'^i,s'^{i+1},s''^j,s''^{j+1}) \in R^{i,j}$ for any $i$ and $j$ with different evenness. But then, by the description of what $R^{i,j}$ represents, this means that there is no word $\tilde w$ that witnesses that $wx$ has a different residual language (w.r.t.~$L$) than any other word $w'$ under which $(s''^1, \ldots, s''^n)$ is reachable in $S^L$. This in turn implies that adding $\{((s^1, \ldots, s^n),x) \mapsto (s''^1, \ldots, s''^n)\} $ to $\delta^L$ maintains both the first and the second invariant.

If however it is found that $(s'^1, \ldots, s'^n)$ is to be added to $S^L$ by executing lines~\ref{line:addState} to \ref{line:addTransitionToNewState}, then this means that for every other state $(s''^1, \ldots, s''^n) \in S^L$, \emph{some} tuple in $(s'^i,s'^{i+1},s''^j,s''^{j+1}) \in R^{i,j}$ has been found. As these tuples witness that $wx$ induces a residual language that is different to the residual languages induced by any word $w'$ under which the state in $R^L$ can be reached \emph{and} by the fact that all states in $S^L$ are reached under words witnessing different residual languages of $L$ (by the second inductive hypothesis), we can conclude that $wx$ induces a new, different residual language and add $(s'^1, \ldots, s'^n)$ and a corresponding transition from $(s^1, \ldots, s^n)$ for $x$ to $\delta^L$.

The overall complexity of  the algorithm is polynomial in the number of states of the input and output automata as computing $R^{L,1}, \ldots, R^{L,n}$ and $\{R^{i,j}\}_{1 \leq i \leq n, 1 \leq j \leq n}$ can be done in polynomial time, the main loop of the algorithm runs once for every state in the output automaton, and the number of steps executed for each state of the output automaton is also polynomial in the sizes of the input automata.
\end{proof}

\subsection{Building the Rerailing automaton}
Let now $L$ be a language, $R^L$ be a residual langauge tracking automaton for $L$, and 
$\mathcal{A}^1, \ldots, \mathcal{A}^n$ be a COCOA that encodes $L$. For notational convenience, we add the automaton $\mathcal{A}^0$ with the universal language.

We first construct an equivalent chain of floating co-Büchi automata $\mathcal{A}'^0, \mathcal{A}'^1, \ldots, \mathcal{A}'^n$ (using the construction from Def.~\ref{def:residualizedAutomaton} with $R^L$). We note that the chain of floating co-Büchi automata still has the same language (by Lemma~\ref{lem:residualizedAutomaton}).

We define a recursive algorithm for building a minimal rerailing automaton $\mathcal{R} = (Q,\Sigma,\delta,q_0)$ from the chain, which is shown in Figure~\ref{fig:mainAlgo}. In base cases of the recursion, states are added to $Q$, and transitions are added to $\delta$ in the cases in which the algorithm recurses, where for notational convenience we assume $Q$ and $\delta$ to be global variables throughout the recursive calls. The names of all rerailing automaton states (i.e., the elements of $Q$) are combinations of states in the floating co-Büchi automata that the algorithm computes and processes.

The algorithm takes as input a floating co-Büchi automaton $\mathcal{A}^C$ with a single accepting strongly connected component as \emph{context} and a color number $i$ to be considered next, and returns a set of states in $Q$ 
that was added to the rerailing automaton to cover the words that are accepted by the supplied floating co-Büchi automaton. To construct $\mathcal{R}$, the algorithm is called once with $(\mathcal{A}'^0,1)$ as parameter values. Afterwards, the initial state is chosen arbitrarily but with the (only) state of $\mathcal{A}'^0$ marked by $L$ as the residual language as initial state.

\begin{figure}
\begin{algorithmic}[1]
\Procedure{RecurseBuild}{$\mathcal{A}^C,i$}
\State $\mathcal{A}^B \gets \mathcal{A}_\emptyset$ \Comment{Automaton without states}
\For{$j \in \{1, \ldots, n\}$ \textbf{in ascending order}} \label{line:firstForLoop}
\If{$j$ is even $\equiv$ $i$ is even}
\State $\mathcal{A^B} \gets \mathcal{A^B} \cup (\mathcal{A}^C \times \mathcal{A}'^j)$
\Else
\State \label{lin:removeSomeSCCs} Remove all states $(q^c,q^i)$ from $\mathcal{A}^B$ with $\mathsf{safe}((q^c,q^i)) \subseteq \mathsf{safe}((q'^c,q'^i))$ for some  state $(q'^c,q'^i)$ of $\mathcal{A}^C \times \mathcal{A}'^j$ with the same $Q^C$ label and the same residual language
\EndIf
\EndFor
\State Minimize $\mathcal{A}^B$ modulo its $\mathcal{A}^C$ labels \label{line:minimize}
\State $\mathit{newStates} \mapsto \emptyset$ \label{line:newStates}
\For{maximal accepting SCC $(Q',\delta')$ in $\mathcal{A}^B$}
\State $\mathcal{A}' \gets \mathcal{A}^B $ but remove all states apart from $Q'$
\State $\mathit{newStates} \gets \mathit{newStates} \cup  \Call{RecurseBuild}{\mathcal{A}',i+1}$

\EndFor
\For{all states $q^C \in Q^C$ not appearing as first state components in a state of a maximal accepting SCC of $\mathcal{A}^B$}
  \State $\mathit{newStates} \gets \mathit{newStates} \cup \{(q^C,\cdot)\}$
\EndFor \label{line:endForTwo}
\State
\label{line:addTransitions} $\delta \gets \delta \cup \{ ((q^C,q),x,(q'^C,q'),i-1) \mid (q^C,q),(q'^C,q') \in \mathit{newStates}, \delta^C(q^C,x)=q'^C,  \nexists ((q^C,q),x,(q'^C,q''),d) \in \delta \textrm{ for some } d \textrm{ and } q'' \}$
\State \textbf{return} $\mathit{newStates}$
\EndProcedure
\end{algorithmic}
\caption{Recursive Algorithm for constructing a minimal rerailing automaton from a chain of floating co-Büchi automata with $\mathcal{A}^C = (Q^C,\Sigma,\delta^C,f^C)$}
\label{fig:mainAlgo}
\end{figure}

Theorem~\ref{thm:centralNestingThm} forms the main idea of the algorithm. 
In every call of \textsc{RecurseBuild}, an automaton $\mathcal{A}^B$ with the properties mentioned in the claim of the theorem is constructed. To do so, the algorithm
takes products of $\mathcal{A}^C$ with all automata $\mathcal{A}'^j$ for $j \in \{1, \ldots, n\}$ in order to ensure that only words are considered that are accepted by the SCC in $\mathcal{A}^C$. This is done by using a simple product construction, which also guarantees that the states in $\mathcal{A}^B$ are $\mathcal{A}^A$-labeled.

When iterating through the automata $\mathcal{A}'^j$, the algorithm distinguishes between even and odd colors. The ones with the same evenness as $i$ are the ones containing accepting SCCs that need to potentially be characterized in the claim of Theorem~\ref{thm:centralNestingThm}. However, if the newly added SCC \emph{only} accepts words that are also accepted with a later automaton in the chain, it is then removed again in the next step.

Afterwards, the algorithm minimizes $\mathcal{A}^B$ modulo its $\mathcal{A}^C$-labeling in order for Theorem~\ref{thm:centralNestingThm} to be applicable. For each remaining maximal SCC in $\mathcal{A}^B$, the algorithm then recurses in order to compute a sufficiently large $\mathcal{A}^B$-labeled state set for $\mathcal{R}$. 
These state sets (for each maximal SCC in $\mathcal{A}^B$) are later connected with transitions of color $i-1$ in Line~\ref{line:addTransitions}, 
but only for state/input combinations that do not have outgoing transitions within the SCCs of $\mathcal{R}$ computed within a recursive call already. The idea behind doing so is that if there is no transition in the SCCs computed within a recursive call, then the run should be allowed to switch between the corresponding SCCs of $\mathcal{R}$ for $\mathcal{A}^C$.

To account for the fact that some state in $\mathcal{A}^C$ may not be a label of any state in $\mathcal{A}^B$, the algorithm also adds states to $Q$ for the missing $\mathcal{A}^C$ states, as otherwise the state set built in the recursive call would not characterize the maximal SCC of $\mathcal{A}^C$.

In the algorithm, we use the following sub-constructions:
\begin{definition}
Let $\mathcal{A}^1 = (Q^1,\Sigma,\delta^1,f^1)$ and $\mathcal{A}^2 = (Q^2,\Sigma,\delta^2,f^2)$ be two floating co-Büchi automata for some residual language tracking automaton $R^L = (Q^L,\Sigma,\delta^L,q^L_0)$. 

We define $\mathcal{A}^1 \times \mathcal{A}^2 = (Q^\times,\Sigma,\delta^\times,f^\times)$ as the floating co-Büchi automaton with $Q^\times = \{(q^1,q^2) \in Q^1 \times Q^2 \mid f^1(q^1) = f^2(q^2) \}$, 
and for every $(q^1,q^2) \in Q^\times$ and $x \in \Sigma$, we have $f^\times((q^1,q^2)) = f^1(q^1)$ and $\delta^\times((q^1,q^2),x)  = (\delta^1(q^1,x),\delta^2(q^2,x))$ whenever $\delta^1(q^1,x)$ and $\delta^2(q^2,x)$ are both defined.

We also define $\mathcal{A}^1 \cup \mathcal{A}^2 = (Q^\cup,\Sigma,\delta^\cup,f^\cup)$ as the floating co-Büchi automaton with $Q^\cup = Q^1 \cup Q^2$ (where we assume, w.l.o.g., $Q^1$ and $Q^2$ to be disjoint sets), 
and for every $q \in Q^\cup$ and $x \in \Sigma$, we have $f^\cup(q) = f^1(q)$ if $	q \in Q^1$ and $f^\cup(q) = f^2(q)$ otherwise, and $\delta^\cup(q,x)  = \delta^1(q,x)$ if defined, $\delta^\cup(q,x)  = \delta^2(q,x)$ if defined, and $\delta^\cup(q,x)$ being undefined otherwise.
\end{definition}

\begin{lemma}
\label{lem:productAndUnionLemma}
Let $\mathcal{A}^1 = (Q^1,\Sigma,\delta^1,f^1)$ and $\mathcal{A}^2 = (Q^2,\Sigma,\delta^2,f^2)$ be two floating co-Büchi automata for some residual language tracking automaton $R^L = (Q^L,\Sigma,\delta^L,q^L_0)$. 

We have that $\mathcal{A}^1 \times \mathcal{A}^2$ accepts the intersection of the languages of $\mathcal{A}^1$ and $ \mathcal{A}^2$ and $\mathcal{A}^1 \cup \mathcal{A}^2$ accepts the union of the languages of $\mathcal{A}^1$ and $\mathcal{A}^2$. Furthermore, $\mathcal{A}^1 \times \mathcal{A}^2$ is an $\mathcal{A}^1$-labeled automaton. If both $\mathcal{A}^1$  and $\mathcal{A}^2$ are $\mathcal{A}^A$-labeled for some other floating automaton $\mathcal{A}^A$, then so is $\mathcal{A}^1 \cup \mathcal{A}^2$.
\end{lemma}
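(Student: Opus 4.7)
The plan is to verify the four claims in turn, leveraging the definitions of acceptance, product, union, and $\mathcal{A}^A$-labeling. None of the arguments should require heavy machinery; the main care is with the floating co-Büchi acceptance, which only requires a run to exist from \emph{some} position onwards, so different automata may ``kick in'' at different positions. The key structural fact I will use is that all three automata share the same residual language tracking automaton $R^L$, so states reachable by the same finite prefix carry the same $R^L$-label.

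For the intersection claim, I will show both inclusions for $\mathcal{L}(\mathcal{A}^1 \times \mathcal{A}^2)$. The forward direction follows from projecting an accepting run in the product to its two components: both projections are valid runs in $\mathcal{A}^1$ and $\mathcal{A}^2$ by the definition of $\delta^\times$, and each witnesses acceptance from the same position $k$. For the reverse direction, suppose $w$ is accepted by $\mathcal{A}^1$ from position $k_1$ and by $\mathcal{A}^2$ from position $k_2$. Taking $k = \max(k_1,k_2)$, the tail runs $\pi^1_k \pi^1_{k+1} \ldots$ and $\pi^2_k \pi^2_{k+1} \ldots$ exist (any suffix of an accepting run in a floating co-Büchi automaton is still an accepting run), and by the defining property $f^1(\pi^1_k) = \delta^L(s^L_0, w_0 \ldots w_{k-1}) = f^2(\pi^2_k)$, so the pair $(\pi^1_k, \pi^2_k)$ lies in $Q^\times$. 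An easy induction using the defining equation for $\delta^\times$ shows the pairwise sequence is a legal run in $\mathcal{A}^1 \times \mathcal{A}^2$ witnessing $w \in \mathcal{L}(\mathcal{A}^1 \times \mathcal{A}^2)$.

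For the union claim, I will use that $\delta^\cup$ simply juxtaposes the two transition functions on disjoint state sets. Any accepting run of $\mathcal{A}^i$ is a legal accepting run of $\mathcal{A}^1 \cup \mathcal{A}^2$ (transitions stay in $Q^i$ throughout), and conversely any accepting run of $\mathcal{A}^1 \cup \mathcal{A}^2$ must, from the point of acceptance onwards, stay inside a single $Q^i$ since outgoing transitions of a state in $Q^i$ land back in $Q^i$ (they are only defined via $\delta^i$). Hence the set of accepted words is exactly $\mathcal{L}(\mathcal{A}^1) \cup \mathcal{L}(\mathcal{A}^2)$.

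The labeling claims are then just checking Def.~\ref{def:ALabeledFloatingAutomaton}. For $\mathcal{A}^1 \times \mathcal{A}^2$ being $\mathcal{A}^1$-labeled, the labeling function is the first-component projection $(q^1,q^2) \mapsto q^1$; whenever $\delta^\times((q^1,q^2),x)$ is defined and equals $(q'^1,q'^2)$, the definition gives $\delta^1(q^1,x) = q'^1$, which is exactly the required compatibility. For $\mathcal{A}^1 \cup \mathcal{A}^2$ being $\mathcal{A}^A$-labeled when both components are, I will define the labeling function $f : Q^\cup \to Q^A$ piecewise from the two given labelings $f^1 : Q^1 \to Q^A$ and $f^2 : Q^2 \to Q^A$; compatibility with $\delta^A$ holds on each component by assumption, and since $\delta^\cup$ never crosses between $Q^1$ and $Q^2$, this suffices. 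The only potential obstacle in the whole proof is being careful about the asymmetry of acceptance positions in the product case, which I handle by taking the maximum of the two starting indices and using that suffixes of accepting runs remain accepting.
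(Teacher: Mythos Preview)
Your proposal is correct and follows essentially the same approach as the paper's proof: projecting product runs to components (and pairing them back at index $\max(k_1,k_2)$), observing that union runs stay within a single $Q^i$, and using first-component projection and piecewise merging for the two labeling claims. Your write-up is in fact more careful than the paper's in spelling out why the paired states at index $\max(k_1,k_2)$ share the same $R^L$-label and hence lie in $Q^\times$.
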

\begin{proof}
For the intersection, assume that a word is accepted by both $\mathcal{A}^1$ and $\mathcal{A}^2$. Then, there exist runs from word indices $k$ and $k'$, respectively. The definition of $\mathcal{A}^1 \times \mathcal{A}^2$ guarantees that then, there is a run from index $\max(k,k')$ in $\mathcal{A}^1 \times \mathcal{A}^2$. Similarly, if a run in $\mathcal{A}^1 \times \mathcal{A}^2$ exists, by mapping it to the first and second state component, we can obtain runs for $\mathcal{A}^1$ and $\mathcal{A}^2$, respectively. To see that the automaton is $\mathcal{A}^A$-labeled, we can use a function mapping states to their first component as labeling function.

For $\mathcal{A}^1 \cup \mathcal{A}^2$, assume that a word is accepted by that floating automaton. Since the SCCs from $\mathcal{A}^1$ and $\mathcal{A}^2$ are separate in $\mathcal{A}^1 \cup \mathcal{A}^2$, one of the automata in $\mathcal{A}^1$ or $\mathcal{A}^2$ accepts the word. On the other hand, runs for $\mathcal{A}^1$ or $\mathcal{A}^2$ are also runs for $\mathcal{A}^1 \cup \mathcal{A}^2$.
If both $\mathcal{A}^1$ and $\mathcal{A}^2$ are $\mathcal{A}^A$-labeled, then we can obtain a labeling function for $\mathcal{A}^1 \cup \mathcal{A}^2$ by merging the labeling functions of $\mathcal{A}^1$ and $\mathcal{A}^2$.
\end{proof}

\begin{theorem}
Running the algorithm in Figure~\ref{fig:mainAlgo} on $(\mathcal{A}^0,1)$ as input for a chain of floating co-Büchi automata for some language $L$ yields a minimal rerailing automaton for $L$.
\label{theorem:mainAlgoCorrect}
\end{theorem}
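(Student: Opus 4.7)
My plan is to establish three things in turn: that the algorithm produces an automaton structure satisfying both conditions of Def.~\ref{def:rerailingautomata}, that its language is $L$, and that no rerailing automaton for $L$ has fewer states. The first two I would prove jointly by induction on the recursion depth of \textsc{RecurseBuild}, and the third by using the central nesting theorem (Thm.~\ref{thm:centralNestingThm}) to lower-bound the state count of an arbitrary rerailing automaton for $L$, matching it step-by-step against the recursion.

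For the inductive invariant, I would claim that a call \textsc{RecurseBuild}$(\mathcal{A}^C,i)$ returns a set $\mathit{newStates}$ together with added transitions such that for every $q \in \mathit{newStates}$ whose $\mathcal{A}^C$-component is $q^C$ and every infinite word $w$ safely accepted from $q^C$ in $\mathcal{A}^C$, every run from $q$ remains among the states returned by this call or its recursive descendants, and the maximum dominating color across such runs equals the natural color of $w$ relative to the chain tail $\mathcal{A}'^i, \ldots, \mathcal{A}'^n$ (with parity matching $i$). Moreover, every prefix of every such run can be extended to reach a terminal SCC produced by some recursive descendant, and all further extensions remain there. The construction of $\mathcal{A}^B$ in Lines~\ref{line:firstForLoop}--\ref{line:minimize}, combined with Lemma~\ref{lem:productAndUnionLemma}, guarantees that each accepting SCC of $\mathcal{A}^B$ collects exactly those words for which the natural color at this level has the evenness of $i$. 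The color-$(i-1)$ transitions added in Line~\ref{line:addTransitions} then carry all remaining words (those not stuck in any such SCC) to be recognized with the correct color from the outer recursion. The rerailing property follows because every prefix run either already lies inside a characterizing SCC of some maximal SCC of $\mathcal{A}^B$ (in which case the inner induction hypothesis applies), or it can be extended through a color-$(i-1)$ transition into such a characterizing SCC at the enclosing recursion level.

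For minimality, let $\mathcal{R}'$ be any rerailing automaton for $L$. Applying Thm.~\ref{thm:centralNestingThm} to the outermost call (with $\mathcal{A}^C = \mathcal{A}'^0$ and $\mathcal{A}^B$ as built by the algorithm), $\mathcal{R}'$ must contain pairwise disjoint characterizing SCCs for each maximal SCC of $\mathcal{A}^B$; each such characterizing SCC, by Lemma~\ref{lem:minimizationLemma}, has at least as many states as the corresponding SCC of the minimized $\mathcal{A}^B$. Descending the recursion inductively, each nested layer of $\mathcal{R}'$ must likewise contain disjoint characterizing SCCs for the $\mathcal{A}^B$ of the corresponding recursive call, whereas the ``uncovered'' $\mathcal{A}^C$-states added in Lines~\ref{line:newStates}--\ref{line:endForTwo} correspond to residual-language/SCC contexts through which runs for some words must pass in $\mathcal{R}'$ on the way to lower-color regions, so they cannot be omitted either. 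Summing these lower bounds across all recursion levels yields a total matching the algorithm's output.

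I expect the main obstacle to be verifying the preconditions of Thm.~\ref{thm:centralNestingThm} at every recursion level: specifically, that saturating words for any maximal SCC of $\mathcal{A}^B$ form a class uniform with respect to $L$-membership, and that words accepted by $\mathcal{A}^C$ but rejected by $\mathcal{A}^B$ form another uniform class of opposite $L$-membership. Both uniformity statements require carefully tracking how the chain of floating co-Büchi automata assigns natural colors to words and how the SCC removal in Line~\ref{lin:removeSomeSCCs} interacts with this tracking; the removal is essential because an SCC whose only accepted words are already captured by a deeper chain element would otherwise produce a ``fake'' characterizing requirement and contradict minimality. A secondary technical obstacle is the guarded transition-addition in Line~\ref{line:addTransitions}: showing that the recursive calls leave exactly the right set of state/letter pairs uncovered, so that the globally emitted color sequence along any run is the intended one, and so that the rerailing property is preserved at the boundary between adjacent recursion levels.
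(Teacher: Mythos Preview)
Your proposal takes essentially the same route as the paper: structural induction on the recursion of \textsc{RecurseBuild} for correctness (with the same dichotomy between words trapped in an inner SCC versus words repeatedly crossing the color-$(i{-}1)$ transitions of Line~\ref{line:addTransitions}), and iterated application of the central nesting theorem for minimality. One minor divergence is that your invariant pins down the exact dominating color via the chain's ``natural color,'' whereas the paper's invariant only records that recursive calls emit colors $\geq i$ and that the resulting parity is correct---this weaker invariant already suffices and sidesteps a claim about exact color values that the paper explicitly leaves for future work.
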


\begin{proof}
We prove the claim by structural induction. Let $Q'$ be the states generated by the algorithm for some context $\mathcal{A}^C = (Q^C,\Sigma,\delta^C,f^C)$. We prove that 
for the states $(q,q_\mathit{rest}) \in Q'$ with $q \in Q^C$ and $v = v_0 \ldots v_{|v|-1} \in \Sigma^*$ with $q \in \delta^+(q_0,v)$ and some word $\tilde w = w_0 w_1 \ldots \in \Sigma^\omega,$ if and only if we have that $\delta^C(q,w_0 \ldots w_i)$ is defined for all $i \in \NN$, then no run in $\mathcal{R}$ for $vw$ vising $q$ after $|v|$ steps leaves $Q'$ after $|v|$ steps, and $w$ is accepted by $\mathcal{R}_q$ iff $vw$ is in $L$. 

Furthermore, we prove inductively that any call to \textsc{RecurseBuild}$(\mathcal{A}^C,i)$ only adds transitions with a color of at least $i$ to $\mathcal{R}$, and 
that for any word/combination $w$/$q$, if all runs for $w$ from $q$ stay in $Q'$, then the word is accepted by $\mathcal{R}$ if and only if $vw \in L$. This includes showing the rerailing property for such words.

To see that all runs for $w$ are stuck in $Q'$ from $q$, note that all states added in recursive calls track the state of $\mathcal{A}^C$ (by Lemma~\ref{lem:productAndUnionLemma}), and in Line~\ref{line:addTransitions} of the algorithm, whenever for some state $(q,q_\mathit{rest})$ in the automaton, there is not already an outgoing transition for some character $x \in \Sigma$, but $\delta^C(q,x)$ is defined, then transitions to all other states in $Q'$ with $\delta^C(q,x)$ as first component are added, and there exists at least one such state in $Q'$ for each state in $Q^C$. Hence, a run for $w$ cannot leave $Q'$ if $\delta^{C}(q,w_0 \ldots w_i)$ is defined for every $i \in \NN$.
On the other hand, no $x$-transitions from a state $(q,q_\mathit{rest})$ are added if $\delta^C(q,x)$ is undefined. Since the algorithm is called recursively with $\mathcal{A}^B$, which is intersected with $\mathcal{A}^A$, by the induction hypothesis, this property also holds for the transitions within $Q'$ added during the recursive calls. The state set $Q'$ is never left as if $Q'$ is computed by a call to \textsc{RecurseBuild}, no transitions leading out of $Q'$ that can be taken for $w$ (when starting from $q$) are added later in the algorithm.

We now consider two cases (\textbf{A} and \textbf{B}) for words $w$ accepted from some state $q$ in $\mathcal{A}^C$: either the word has a terminal SCC entirely within a sub-SCC $(Q'',\delta'')$ with $Q'' \subseteq Q'$ computed by a recursive call (considering the transitions in $\delta''$ added by the recursive call) or not.

\textbf{(A:)} In the former case, assume that there is some run $\pi$ for $vw$ going through a state $(q,q_\mathit{rest})$ of $\mathcal{R}$ at position $|v|$ and reaching a rerailing point at position $k' > |v|$ in the run within state set $Q''$. Then let $m : \{k',k'+1, \ldots \} \rightarrow 2^{Q''}$ be a mapping showing in which states runs after the rerailing point can be in. Note that by the description of the state space of $Q'$ above, we have that for each $j \geq k'$, the first element of each state in $m(j)$ is $\delta^C(q,w_0 \ldots w_j)$.

By the inductive hypothesis, we have that for all words getting stuck in $Q''$, $\mathcal{R}$ already has the rerailing property and accepts with the right color. 
We hence only need to show that runs $\pi'$ for $vw$ with $\pi_{|v|} = q$ and such that such runs stay in $Q'$ after position $|v|$ can be rerailed to eventually get stuck in $Q''$ \emph{or} some other SCC $Q'''$ build during recursive calls to the algorithm. In both cases, the inductive hypothesis can be applied then. Getting stuck in $Q'''$ instead of $Q''$ can happen if the word under concern is accepted by multiple maximal SCCs of $\mathcal{A}^B$.

If a run neither gets stuck in $Q''$ nor gets stuck in a potentially other existing sub-SCC $Q'''$, then it infinitely often takes the transitions added in Line~\ref{line:addTransitions}, and these transitions have a color of $i-1$. Hence, this run has a lower color than any other run within $Q''$ (or $Q'''$), as by the inductive hypothesis, only colors of at least $i$ can occur in transitions added by a recursive call. At the same time, whenever at some point $j \geq k'$ in the run one of these added transitions is taken, there are other transitions for the same characters to all states in $m(j)$ for this word. Since after taking one of these transitions, a rerailing point has been hit that, by the inductive hypothesis, leads to the inductive claim holding, we have shown the automaton to be correct for this case, and the dominating color of the run  increase (to at least $i$). Also, the inductive claim over the minimum color appearing in an SCC holds because that is $i-1$ as all colors of sub-SCCs are bigger. 

\textbf{(B:)} Now consider the other case, so that every run entering a sub-SCC computed by recursive calls eventually leaves the sub-SCC (as otherwise there would be a terminal SCC in the sub-SCC). Since this means that the added transitions in Line~\ref{line:addTransitions} are taken infinitely often along every run, we have that every run has a color of $i-1$. The steps of the construction of $\mathcal{A}^B$ make sure that this can only happen if $i-1$ is even if and only if the overall word is in $L$. 
To see this, consider that 
$\mathcal{A}^B$ is built in a very specific way in the algorithm.
In particular, the algorithm iterates through the possible chain elements $j$, and whenever $j$'s evenness is the same as the evenness of $i$, SCCs are added to $\mathcal{A}^B$ that accept all words accepted both by $\mathcal{A}'^j$ and $\mathcal{A}^C$. Among the words that can get stuck in $\mathcal{A}^C$, these are the ones for which their saturating words are in $L$ if and only if the evenness of $i$ and $j$ coincide \emph{unless} possibly the respective word is also accepted by some automaton $\mathcal{A}'^{j'}$ with $j'>j$. This is the case for some SCC in $\mathcal{A}^B$ if and only if there is some SCC in $\mathcal{A}'^{j'}$ accepting \emph{all} words accepted with the SCC in $\mathcal{A}^B$. Hence, in Line~\ref{lin:removeSomeSCCs} of the algorithm, the respective SCCs are removed again. If now a word is not accepted by $\mathcal{A}^B$ but by $\mathcal{A}^C$, this means that no $j$ with the same evenness as $i$ is the highest index of an automaton $\mathcal{A}'^j$ accepting the word, and hence the word is in $L$ if and only if $i$ is \emph{not} even. Assigning color $i-1$ to all runs for such words by making the transitions between the SCCs of $Q'$ built with recursive calls then ensures that every run corresponding to the word has color $i-1$, which is a suitable color for the word and it is also the color of a rerailing run within $Q'$.

Minimality (in the number of states) follows by structural induction over applications of Theorem~\ref{thm:centralNestingThm}.
Every call to the main algorithm yields the minimal number of states needed in order to encode a part of $\mathcal{R}$ that characterizes the words accepted by the context $\mathcal{A}^C$. For the base case ($\mathcal{A}^B$ has no state), this follows from the fact that in order to characterize some $\mathcal{A}^C$, at least as many states are needed as there are in $\mathcal{A}^C$ (as $\mathcal{A}^C$ is always minimized), and for the other cases, this claim follows from Theorem~\ref{thm:centralNestingThm} (and using the fact that in order to characterize $\mathcal{A}^C$, for each state of $\mathcal{A}^C$, at least one state in the respective SCC of $\mathcal{R}$ needs to correspond to the state).
As a rerailing automaton tracks residual languages (Lemma~\ref{lem:rerailingAutomatonSuffixLanguageLabeling}), and $\mathcal{A}'^0$ does nothing other than tracking residual languages while being of minimal size among the automata doing so, the correctness of the minimality claim follows.
\end{proof}

We note that the automaton built by the minimization algorithm is \emph{color-homogeneous}, i.e, for every state $q$ and letter $x$, all outgoing $x$-transitions from $q$ have the same color. This is a property that minimized rerailing automata hence share with transition-based co-Büchi automata with history-deterministic acceptance minimized by the corresponding algorithm of Abu Radi and Kupferman~\cite{DBLP:journals/lmcs/RadiK22}.

Finally, we discuss the complexity of the rerailing automaton construction procedure.
\begin{lemma}
The recursive translation procedure in Figure~\ref{fig:mainAlgo} runs in time polynomial in the sizes of the input and output automata.
\end{lemma}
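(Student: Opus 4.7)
The plan is to split the complexity argument into two parts: (a) bounding the work done within a single invocation of \textsc{RecurseBuild}, excluding time spent in recursive calls, and (b) bounding the total number of recursive invocations and the size of the argument $\mathcal{A}^C$ passed to each by a polynomial in the sizes of the input chain and of the output automaton $\mathcal{R}$. Multiplying these two bounds will then yield the overall claim.

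For part (a), I would walk through the steps of \textsc{RecurseBuild}$(\mathcal{A}^C,i)$ one by one. The loop starting in Line~\ref{line:firstForLoop} iterates $n$ times, and each iteration constructs a product $\mathcal{A}^C \times \mathcal{A}'^j$ of size at most $|\mathcal{A}^C|\cdot|\mathcal{A}'^j|$ and either unions it into the current $\mathcal{A}^B$ or removes dominated states. The safe-language inclusion check in Line~\ref{lin:removeSomeSCCs} reduces to a reachability computation on the partial transition functions of the candidate states and is polynomial in $|\mathcal{A}^B|$. The minimization in Line~\ref{line:minimize} is polynomial by Lemma~\ref{lem:minimizationLemma}, and computing maximal SCCs, assembling $\mathit{newStates}$, and filling in the color-$(i-1)$ transitions in Line~\ref{line:addTransitions} are standard graph operations polynomial in $|\mathcal{A}^B|$ and $|\mathit{newStates}|$. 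Hence a single invocation runs in time polynomial in the sum of the input chain size, $|\mathcal{A}^C|$, and the number of states it places in $\mathit{newStates}$.

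For part (b), the crucial observation---justified by the correctness argument in Theorem~\ref{theorem:mainAlgoCorrect} and ultimately by the central nesting Theorem~\ref{thm:centralNestingThm}---is that the maximal accepting SCCs of the minimized $\mathcal{A}^B$ in any given call stand in one-to-one correspondence with disjoint characterizing SCCs of $\mathcal{R}$, and the elements placed into $\mathit{newStates}$ become distinct states of $Q$. I would charge each recursive invocation to a nonempty subset of $Q$, so that the total number of invocations is $O(|Q|)$. Further, the argument $\mathcal{A}^C$ passed to any recursive call is a single maximal SCC of the parent's minimized $\mathcal{A}^B$, and each of its states corresponds to at least one state that the parent contributed to $\mathit{newStates}$; so $|\mathcal{A}^C|$ stays bounded by $|Q|$ throughout the recursion, and the recursion depth is at most $n+1$.

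The hard part will be making the charging in part (b) fully rigorous, in particular ensuring that after Line~\ref{line:minimize}, $|\mathcal{A}^B|$ itself is polynomially related to the number of states the current call eventually contributes to $Q$. For this I would combine Lemma~\ref{lem:minimizationLemma}, which bounds the minimized automaton by the smallest $\mathcal{A}^C$-labeled automaton with the same language, with Theorem~\ref{thm:centralNestingThm}, which ties those maximal SCCs to characterizing SCCs in $\mathcal{R}$. Combining (a) and (b) then gives the claimed polynomial running time in the total size of the input chain and of $\mathcal{R}$.
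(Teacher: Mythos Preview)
Your plan is correct in outline and lands on the same two-part decomposition the paper uses, but the paper takes a noticeably more elementary route in part (b) that you may want to adopt. Rather than invoking Theorem~\ref{thm:centralNestingThm} to charge invocations to characterizing SCCs of $\mathcal{R}$, the paper simply observes that every call \textsc{RecurseBuild}$(\mathcal{A}^C,i)$ places at least $|\mathcal{A}^C|$ distinct states into $\mathit{newStates}$ (either via the recursive calls or via the explicit $(q^C,\cdot)$ fallback), and that contributions from different calls are disjoint elements of $Q$. This gives both $|\mathcal{A}^C|\leq |Q|$ and a bound on the number of calls without appealing to the minimality machinery at all.

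This also dissolves what you flag as the ``hard part'': there is no need to relate the \emph{minimized} $\mathcal{A}^B$ to the output contribution. Before minimization, $|\mathcal{A}^B|$ is already at most $|\mathcal{A}^C|\cdot\sum_j |\mathcal{A}'^j|$ because the $\cup$ operator is additive, and minimization only shrinks it; combining this with $|\mathcal{A}^C|\leq |Q|$ suffices. Finally, the paper justifies the recursion-depth bound by a short induction showing that in any call with second argument $i$, all words accepted by $\mathcal{A}^C$ are already accepted by $\mathcal{A}'^{\,i-1}$, so that at level $i=n+1$ the automaton $\mathcal{A}^B$ becomes empty; you assert the depth bound but should supply such an argument.
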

\begin{proof}
We first of all note that the recursion depth of $\mathcal{A}^C$ is at most $n$. In particular, we can show by structural induction that for every call \textsc{RecurseBuild}$(\mathcal{A}^C,i)$, we have that that all words accepted by $\mathcal{A}^C$ are accepted by $\mathcal{A}'^{i-1}$. For the induction basis, this claim is trivial (as executing \textsc{RecurseBuild}$(\mathcal{A}'^0,1)$ is the entry point). For the induction step, assume that the inductive hypothesis is true. Then when iterating through the values of $j$, in Line~\ref{lin:removeSomeSCCs} of the algorithm, $\mathcal{A}^B$ gets an empty state set for $j=i-1$, as all words accepted by the only SCC in $\mathcal{A}^C$ are accepted by $\mathcal{A}'^{j-1}$. In later iterations of the loop starting in Line~\ref{line:firstForLoop}, only SCCs are added that accept words that are accepted both by $\mathcal{A}'^{j-1}$ and $\mathcal{A}'^{j}$. This means that for all later calls to \textsc{RecurseBuild}$(\mathcal{A}',i+1)$, as $\mathcal{A}'$ contains a maximal SCC from $\mathcal{A}^B$, the inductive hypothesis also holds.

We now note that in every call \textsc{RecurseBuild}$(\mathcal{A}^C,i)$, the procedure adds at least as many states to $\mathcal{R}$ as there are states in $\mathcal{A}^C$. Also, the size of $\mathcal{A}^B$ is only at most polynomial in the sizes of $\mathcal{A}^C$ and the sum of sizes of $\mathcal{A}'^1, \ldots, \mathcal{A}'^n$ (as the $\cup$ operation on floating automata does not incur a superlinear blow-up). Minimization of floating automata can also be done in time polynomial in the input size (Lemma~\ref{lem:minimizationLemma}). 

Because for all contexts $\mathcal{A}^C$ at the same $i$ level, the generated states are distinct, within all recursion steps, lines~\ref{line:firstForLoop} to \ref{line:minimize} hence overall take time polynomial in the input size and linear in the output size. Similarly, the number of recursive calls is limited by $n$ times the number of states of the output automaton. Lines~\ref{line:newStates} to \ref{line:endForTwo} overall take time linear in the square of the output automaton's size (as $\mathcal{A}^B$ is limited by that size, but each SCC of it is to be considered). Finally, the lines afterwards computing the states of $\mathcal{A}^C$ not already covered and the transitions of color $i-1$ to be added also depend polynomially on the output automaton size as bound on the computation time.
\end{proof}

We note that due to the first inductive argument of the proof above, as an optimization to the algorithm, instead of iterating over all values $j \in \{1, \ldots, n\}$ in Line~\ref{line:firstForLoop}, we can also iterate over all $j \in \{i, \ldots, n\}$ instead. We chose not to apply this optimization in Figure~\ref{fig:mainAlgo} in order to separate this optimization from the correctness proof in Theorem~\ref{theorem:mainAlgoCorrect}.

While from a chain of co-Büchi automata, the minimized automaton built according to the algorithm in Figure~\ref{fig:mainAlgo} may be of size exponential in the sum of sizes of the automata in the chain, we still have an overall polynomial complexity when applying it in order to minimizing rerailing automata:

\begin{corollary}
Given a rerailing automaton $\mathcal{R}$, let the following steps be executed:
\begin{itemize}
\item Translating $\mathcal{R}$ to a chain of co-Büchi automata $\mathcal{A}^1, \ldots, \mathcal{A}^n$ according to Lemma~\ref{lem:decomposingRerailingAutomatonIntoNonCanonicalCOCOA}
\item Computing the residual language tracking automaton $R^L$ of the chain according to Lemma~\ref{lem:residualLanguageAutomatonForSingleCase}
\item Computing a corresponding chain of floating co-Büchi automata $\mathcal{A}'^1, \ldots, \mathcal{A}'^n$ from  $\mathcal{A}^1, \ldots, \mathcal{A}^n$ and $R^L$ according to Definition~\ref{def:residualizedAutomaton} and Lemma~\ref{lem:residualizedAutomaton}
\item Computing a minimal rerailing automaton $\mathcal{R}'$ from $\mathcal{A}'^1, \ldots, \mathcal{A}'^n$ and $R^L$ according to the construction from Figure~\ref{fig:mainAlgo}.
\end{itemize}
We have that $\mathcal{R}'$ is a minimized automaton of $\mathcal{R}$, and all steps together take time polynomial in the size of $\mathcal{R}$.
\end{corollary}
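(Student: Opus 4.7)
The plan is to chain together the correctness statements already proved in the paper and then bound the sizes of all intermediate objects by a polynomial in $|\mathcal{R}|$, so that each step is run on inputs of polynomial size and its running time is polynomial in its input plus output. First I would argue correctness: by Lemma~\ref{lem:decomposingRerailingAutomatonIntoNonCanonicalCOCOA}, the COCOA $\mathcal{A}^1, \ldots, \mathcal{A}^n$ encodes $L = \mathcal{L}_{\mathit{rerail}}(\mathcal{R})$; by Theorem~\ref{theorem:buildingRL} (which internally uses Lemma~\ref{lem:residualLanguageAutomatonForSingleCase}), $R^L$ is the residual language tracking automaton of $L$; by Lemma~\ref{lem:residualizedAutomaton}, the chain $\mathcal{A}'^1, \ldots, \mathcal{A}'^n$ of floating co-Büchi automata still represents $L$; and by Theorem~\ref{theorem:mainAlgoCorrect}, the output $\mathcal{R}'$ is a minimal rerailing automaton for $L$, hence a minimization of $\mathcal{R}$.

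The key structural observation enabling the polynomial bound is that $\mathcal{R}$ itself tracks residual languages (Lemma~\ref{lem:rerailingAutomatonSuffixLanguageLabeling}), so the number of distinct residual languages of $L$ is at most $|Q|$, giving $|R^L| \leq |Q|$. Similarly, the maximum color $k$ of $\mathcal{R}$ is bounded polynomially in $|Q|$ (it may be taken $\leq |\mathcal{R}|$ since any larger colors can be collapsed). Thus the chain produced in the first step has $n = k$ automata, each on the state set $Q$, so its total size is polynomial in $|\mathcal{R}|$. Minimizing each $\mathcal{A}^i$ by the Abu Radi--Kupferman procedure keeps things polynomial. The floating co-Büchi automata $\mathcal{A}'^i$ from Def.~\ref{def:residualizedAutomaton} have at most $|Q^i|\cdot |R^L|$ states, hence polynomial in $|\mathcal{R}|$. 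Finally, by minimality $|\mathcal{R}'|\le |\mathcal{R}|$, since $\mathcal{R}$ is itself a rerailing automaton for $L$.

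With these size bounds in hand, I would tick off the complexity of each step: Lemma~\ref{lem:decomposingRerailingAutomatonIntoNonCanonicalCOCOA} runs in time polynomial in $|\mathcal{R}|$; Theorem~\ref{theorem:buildingRL} runs in time polynomial in the input chain size plus $|R^L|$, both polynomial in $|\mathcal{R}|$; constructing the residualized automata is a polynomial-time product; and the recursive algorithm in Figure~\ref{fig:mainAlgo} runs in time polynomial in its input (the chain plus $R^L$) and output ($\mathcal{R}'$), all of which we have just bounded polynomially in $|\mathcal{R}|$. Composing polynomials gives an overall polynomial running time in $|\mathcal{R}|$.

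The main obstacle I foresee is purely the bookkeeping around where the polynomial blow-up of Figure~\ref{fig:mainAlgo} is controlled. In general (as the paper notes) the recursive construction can be exponential in the chain; here that exponential danger is averted because (i) the output $\mathcal{R}'$ is pinned to size $\leq |\mathcal{R}|$ by minimality, and (ii) $R^L$ and hence every floating co-Büchi automaton in the chain is polynomially bounded, so the recursion can only branch in a way compatible with the already-polynomial output. Making this precise amounts to carefully invoking the ``polynomial in input and output sizes'' guarantee for the recursive algorithm together with the $|R^L| \leq |Q|$ bound, after which the corollary is immediate.
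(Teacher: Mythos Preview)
Your proposal is correct and follows essentially the same approach as the paper's proof: both argue correctness by chaining the referenced results, and both obtain the polynomial bound from the two key facts that $|R^L|\le|Q|$ (via Lemma~\ref{lem:rerailingAutomatonSuffixLanguageLabeling}) and $|\mathcal{R}'|\le|\mathcal{R}|$ (by minimality), then compose the per-step polynomial guarantees. Your version is somewhat more explicit about the intermediate size bounds (chain length, floating automaton sizes) and correctly identifies Theorem~\ref{theorem:buildingRL} rather than Lemma~\ref{lem:residualLanguageAutomatonForSingleCase} as the relevant result for computing $R^L$ from the chain, but these are refinements rather than a different argument.
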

\begin{proof}
Language equivalence of $\mathcal{R}$ and $\mathcal{R}'$ follows from the properties of the respective procedures.

We can then use the fact that rerailing automata track residual languages, so we know that the size of $R^L$ is not larger than the size of $\mathcal{R}$. Furthermore, the size of $\mathcal{R}'$ is not larger than the size of $\mathcal{R}$ as the former is guaranteed to be minimal. Taking the complexities of the steps together and exploiting the fact that the output of the whole procedure is at most as big as $\mathcal{R}$, we obtain the stated complexity bound.
\end{proof}

\section{Application: Realizability Checking for Reactive Systems}

In this section, we show that rerailing automata can be used to check the \emph{realizability} of a linear-time specification. Given a language $L \subseteq \Sigma^\omega$ over some alphabet $\Sigma = \Sigma^I \times \Sigma^O$, its realizability problem \cite{DBLP:reference/mc/BloemCJ18} is to decide if a there exists a function $f : (\Sigma^I)^* \rightarrow \Sigma^O$ such that for all words $w=(w^I_0,w^O_0) (w^I_1,w^O_1) \ldots \in \Sigma^\omega$ with $w^O_i = f(w^I_0 \ldots w^I_{i-1})$ for all $i \in \NN$, we have $w \in L$.

A classical approach to solving the realizability problem is to represent the language $L$ as a deterministic parity automaton and to translate the automaton to a parity game between two players. The \emph{environment} player makes choices from $\Sigma^I$ and the system player makes choices from $\Sigma^O$. If and only if the system player has a winning strategy to win the parity game, the specification is realizable. Determining the winner of the game can then be done with one of the known parity game solving algorithms. Building the game is relatively straightforward and amounts to splitting the transitions in the parity automaton into $\Sigma^I$ and $\Sigma^O$ components. 

We show here that when a specification is given as a rerailing automaton that was minimized with the construction from the previous section, we can construct such a parity game for the same purpose in a similar way. This allows taking advantage of polynomial-time minimization prior to building a parity game.

\begin{theorem}
\label{thm:synthesis}
Let $\mathcal{R} = (Q,\Sigma,\delta,q_0)$ be a rerailing automaton with $\Sigma = \Sigma^I \times \Sigma^O$ that is the outcome of the procedure from the previous section.
Let furthermore for each $(q,(x,y)) \in Q \times \Sigma$ and $c \in \NN$ be $f_c(q,(x,y)) = \{q' \in Q \mid (q,(x,y),q',c) \in \delta\}$.

We construct a parity game $\mathcal{G} = (V^0,V^0,E^0,E^1,C,v^0)$ with the position sets $V^0 = Q \cup \bigcup_{\text{odd } c, (q,z) \in Q \times \Sigma, f_c(q,z) \neq \emptyset} \{ (f_c(q,z),c) \}$ and $V^1 = Q \times \Sigma^O \cup \bigcup_{\text{even } c, (q,z) \in Q \times \Sigma, f_c(q,z) \neq \emptyset} \{ (f_c(q,z),c) \}$, the edges
\begin{align*}
E^0 & = \{(q,(q,y)) \mid q \in Q, y \in \Sigma^O \} \\
& \cup \{ ((Q',c),q) \in 2^Q \times \NN \times Q \mid q \in Q'  \} \\
E^1 & = \{((q,y),(Q',c)) \mid q \in Q, y \in \Sigma^O, \exists x \in \Sigma^I: \\
 & \quad \exists (q,y,q',c) \in \delta.\ Q' = f_c(q,(x,y)) \} \\
 & \  \cup \{ ((Q',c),q) \in 2^Q \times \NN \times Q \mid q \in Q'\},
\end{align*}
the coloring function $C$ mapping each position of the shape $(Q',c)$ to color $c$ and all other positions to the biggest color occurring along some transition in $\mathcal{R}$, and the initial vertex $v^0 = q_0$.

We have that player $0$ (the \emph{system player}) has a stategy to win the game from $v^0$ iff the specification given by $\mathcal{R}$ is realizable for the language split $\Sigma = \Sigma^I \times \Sigma^O$.
\end{theorem}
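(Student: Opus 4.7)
The plan is to prove both directions of the biconditional by translating strategies between $\mathcal{G}$ and the realizability instance, using the rerailing property of $\mathcal{R}$ to bridge the game's min-even parity (a per-play condition) and the automaton's max-over-runs acceptance. As a preliminary, every maximal play has the form $q_0 (q_0, y_0) (Q'_0, c_0) q_1 (q_1, y_1) (Q'_1, c_1) \ldots$, which determines an input sequence $(x_i)$ (from player 1's edges out of $(q_i, y_i)$), an output sequence $(y_i)$ (from player 0's edges out of $q_i$), and a run $(q_i)$ of $\mathcal{R}$ on the word $((x_i, y_i))$ with color sequence $(c_i)$. Because the dummy color $K$ on every $q$- and $(q, y)$-vertex is the maximum color used in $\mathcal{R}$, it never lowers the min-infinitely-often along the play; hence the play's min-color equals the dominating color of the corresponding run. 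Player 0 thus wins iff the induced run has even dominating color.

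For the $\Leftarrow$ direction, given a realizer $f^*$ I will construct a memory-bearing strategy $\sigma_0$: at a $q$-vertex play $y = f^*(\text{input projection of history})$; at an odd-colored $(Q', c) \in V^0$ vertex, apply Lemma~\ref{lem:rerailingPoint} to the current finite prefix run, selecting a successor $q' \in Q'$ lying on an extension whose terminal SCC has even dominating color. Such $q'$ must exist because the word eventually produced is in $L$ (as $f^*$ realizes $L$), so its max-over-runs dominating color is some even $c^*$, and Lemma~\ref{lem:rerailingPoint} guarantees a terminal-SCC continuation with dominating color $c^*$. Along any $\sigma_0$-conforming play, once the run enters such a terminal SCC, the second clause of Lemma~\ref{lem:rerailingPoint} ensures that no subsequent player 1 choice (input at a $(q, y)$-vertex or successor at an even $(Q', c)$-vertex) can escape, so the corresponding run has dominating color $c^*$ and the play's min-color is the even $c^*$, a win for player 0.

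For the $\Rightarrow$ direction, fix a positional winning strategy $\sigma_0$ (by positional determinacy of parity games) and a deterministic auxiliary strategy $\tau_1^c$ resolving player 1's choices at even-colored $(Q', c)$-vertices. Define $f(x_0 \ldots x_{i-1})$ as the output $\sigma_0$ plays at the $q$-vertex reached in the unique play where player 1 uses $x_0, \ldots, x_{i-1}$ at $(q_j, y_j)$-vertices and $\tau_1^c$ elsewhere; this is well-defined. To see that $f$ realizes $L$, fix any input sequence and let $w$ be the induced word. The $(\sigma_0, \tau_1^c)$-play yields a conforming run of $w$ with even dominating color $c_e$. Suppose for contradiction that $w \notin L$, so max-over-runs of $w$ is odd $c_o > c_e$. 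Then some run $\rho^*$ of $w$ has dominating color $c_o$, and by the rerailing property each prefix of the conforming run has an extension reaching $\rho^*$'s terminal odd-color SCC. The hard part will be constructing an alternative player 1 strategy $\tau_1^*$ -- differing from $\tau_1^c$ only at even-colored $(Q', c)$-vertices -- that forces the conforming play into a tail with $c_o$ as the min infinitely-often color, contradicting $\sigma_0$'s winning property. This construction must leverage both the color-homogeneity of the minimized rerailing automaton (noted after Theorem~\ref{theorem:mainAlgoCorrect}) and the SCC-nesting structure of Theorem~\ref{thm:centralNestingThm} to show that player 0's mandated moves at odd-colored $(Q', c)$-vertices cannot steer the run away from $\rho^*$'s terminal SCC under $\tau_1^*$.
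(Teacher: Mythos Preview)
Your proposal has genuine gaps in both directions.

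For the $\Leftarrow$ direction, your strategy $\sigma_0$ at odd $(Q',c)$-vertices is not well-defined. You invoke Lemma~\ref{lem:rerailingPoint} to select a successor leading to an even-color terminal SCC, but that lemma applies to a \emph{fixed} word $w$, and at the moment of choice the word is not yet determined---it depends on player~1's future input choices. Your follow-up claim that ``no subsequent player~1 choice \ldots can escape'' the terminal SCC is likewise false for input choices: the terminal SCC in Lemma~\ref{lem:rerailingPoint} is specific to $w$, and if player~1 switches inputs after that point you are in a different word with a potentially different terminal SCC. The paper handles this by giving a concrete, future-independent rule at odd $(Q',c)$-vertices: move to the state $q\in Q'$ maximizing the length of the longest incoming run suffix that has avoided colors~$\le c$. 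Correctness is then argued by contradiction on an assumed odd dominating color $c$, exploiting that in the minimized automaton every color-$c$ transition (added in Line~\ref{line:addTransitions}) offers successors to \emph{all} states with the right $\mathcal{A}^C$-label, so player~0 can always catch up in one step to whichever run has avoided low colors the longest.

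For the $\Rightarrow$ direction, your plan to replace $\tau_1^c$ by $\tau_1^*$ only at even $(Q',c)$-vertices does keep the input sequence fixed, but it changes the states $q_i$ visited along the play. Since $\sigma_0$ is positional, the outputs $y_i=\sigma_0(q_i)$ then change as well, and the resulting word is no longer the $w$ you assumed to lie outside $L$; the contradiction with $\sigma_0$ being winning evaporates. The paper sidesteps this entirely: it never extracts $f$ from a game strategy. Instead it uses determinacy of the realizability problem itself (either a realizer $f$ or a counter-strategy $f'$ exists) and proves only the two implications ``$f$ exists $\Rightarrow$ player~0 wins'' and ``$f'$ exists $\Rightarrow$ player~1 wins,'' the second being symmetric to the first with the roles of even/odd and players swapped.
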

\begin{proof}
Since $\mathcal{R}$ represents an omega-regular language, we can use the fact that either a function $f$ proving the specification to be realizable exists, or there exists an alternative (\emph{counter-strategy}) function $f' : (\Sigma^O)^+ \rightarrow \Sigma^I$ such that for all  words $w=(w^I_0,w^O_0) (w^I_1,w^O_1) \ldots \in \Sigma^\omega$ that have $w^I_i = f(w^O_0 \ldots w^O_{i})$ for all $i \in \NN$, we have $w \notin L$.

We show next that if $f$ exists, there is a winning strategy for the system player from $v^0$. The case of showing that if $f'$ exists, then the environment player has a winning strategy from $v^0$ is then analogous.

Let $\pi = \pi_0 \pi_1 \ldots \in (V^0 \cup V^1)^\omega$ be a play through $\mathcal{G}$. We note that exactly every third element of $\pi$ is of the form $(Q',c)$ for some $Q' \subseteq Q$.
We define $\pi|_\Sigma = (\rho^O_0,\rho^I_0) (\rho^O_1,\rho^I_1) \ldots$ to be the mapping of $\pi$ to the elements of $\Sigma^I$ and $\Sigma^O$ involved in the play. These are, for every $i \in \NN$, the element $y \in \Sigma^O$ such that $\pi_{3i+1} = (q,y)$ for some $q \in Q$, alternating with, for every $i \in \NN$, the element $x \in \Sigma^I$ such that $\pi_{3i+2} = (f_c(q,(x,y)),c)$. 

The strategy for the system player follows $f$, so that for all plays $\pi$ resulting from the strategy, we have $\pi|_\Sigma \in \mathcal{L}(\mathcal{R})$. This fixes the behavior of the stategy in some vertices, but not in the vertices of the form $(Q',c)$. In such a case, the system player also needs to make a choice. 

Let for each $q$ and $i \in \NN$ be $\mathsf{IncomingRuns}_i(q)$ the set of color sequence/run combinations $\theta$/$\pi'$ in $\mathcal{R}$ with $\theta = \theta_0 \ldots \theta_{i-1} \in \NN^*$ while $\pi' = \pi'_0 \ldots \pi'_i$ is some run with $\pi'_0 = q_0$, $\pi'_i = q$, and for all $0 \leq j < i$, we have $(\pi'_j,(\rho^O_j,\rho^I_j),\pi'_{j+1},\theta_j) \in \delta$.
Whenever at a position $3\cdot i + 2$ in a play with $\pi_{3 \cdot i +2} = (Q',c)$ for some $Q' \subseteq Q$ and $c \in \NN$, the system player has a choice, we let it take a transition to the vertex $q$ that maximizes $\max \{ k \in \NN \mid \exists \theta_0 \ldots \theta_{i-1} / \pi'_0 \ldots \pi'_{i} \in \mathsf{IncomingRuns}_i(q). \forall i-k \leq j < i-1.\  \theta_j \leq c \}$.
In other words, the system player always takes the choice corresponding to the (available) state that has a run for the input/output observed along the play so far that has not seen a transition with a color of at most $c$ as long as possible.

To see that every resulting play is winning for the system player, consider the converse. Then, there is some (odd) color $c$ that is the dominating color of $\pi$. 
Consequently, there are infinitely often positions $i \in \NN$ with $\pi_{3i+2} = (Q',c)$ for some $Q' \subseteq Q$ along the play (using the fact that $\mathcal{R}$ is color-homogeneous). 
By the definition of the Algorithm in Figure~\ref{fig:mainAlgo}, we know that all these color $c$ transitions happening infinitely often along the play are part of a strongly connected component $Q''$ in $\mathcal{R}$ built during a call to \textsc{RecurseBuild}$(\mathcal{A}^C,c+1)$ for some $\mathcal{A}^C$. This is because transitions connecting such SCCs have a color smaller than $c$. 
If $c$ is the dominating color of the run (as assumed), then eventually all transitions of $\mathcal{R}$ taken along the play are within $Q''$, say from position $k'$ in the play. As $Q''$ characterizes words accepted by $\mathcal{A}^C$ (by the central nesting theorem), this means that all runs for $\pi|_\Sigma$ stay in $Q''$. 

This means that if $\pi|_\Sigma$ is accepted by $\mathcal{R}$, then there exists a rerailing run within $Q''$ that recognizes the word with a color greater than $c$. Since all states in $Q''$ track $\mathcal{A}^C$ states (by the central rerailing theorem), and since the construction in the main algorithm ensures that whenever a transition with color $c$ is taken, all states in $Q''$ corresponding to the target $\mathcal{A}^C$ state are reachable, we have that each time a color $c$ transition is taken, the system player can choose to reach a rerailing point with a single step.

If the system player follows the strategy defined above at all positions $3 \cdot i +2$, there are three possible outcomes. One is that eventually no more position with color $c$ is visited, contradicting the assumption that $c$ is the dominating color of the run. The other one is that eventually an edge to a position representing a rerailing point in the respective run in $\mathcal{R}$ is taken, after which color $c$ is not visited any more, again contradicting that $c$ is the dominating color of the play. The remaining case is that 
despite that following the strategy, color $c$ positions are visited infinitely often. To see that this cannot happen, 
observe that once the terminal rerailing point has been reached for a run of $\mathcal{R}$ for $\pi|_\Sigma$ within $Q''$, $\max \{ k \in \NN \mid \exists \theta_0 \ldots \theta_{i-1} / \pi'_0 \ldots \pi'_{i} \in \mathsf{IncomingRuns}_i(q). \forall i-k \leq j < i-1.\  \theta_j \leq c \}$ for the state $q$ by which the vertex is marked grows by $1$ with every three steps of the play as there is a continuous run in $\pi''$ in $\mathcal{R}$ on which color $c$ transitions are not taken any more such that eventually, along $\pi$, an edge to the respective state in $\pi''$ is taken. This run $\pi''$ is in a sub-SCC $Q'''$ of $Q''$. By the construction of the minimization algorithm, the states in $Q'''$ are labeled by states of $\mathcal{A}^B$ and no transition with color $c$ is taken from there if and only if the remainder of $\pi|_\Sigma$ has an accepting run of $\mathcal{A}^B$ from the state of $\mathcal{A}^B$ by which the state in $Q'''$ is marked. This means that if along \emph{some} run within $Q'''$, eventually no transition with color $c$ is taken any more, this applies to \emph{all} runs for the same word. But then, an edge with color $c$ (or lower) is also not visited along the play any more at that point, contradicting the assumption that $c$ is the dominating color of the play.

Overall, we arrive at a contradiction with the assumption that the color of the play is $c$ for some odd $c$.

\end{proof}

\section{Related work}

While rerailing automata as introduced here are a new representation for $\omega$-regular languages, they are related to several other models for $\omega$-regular languages.

In this context, deterministic automata over finite words (DFAs) representing lasso languages \cite{DBLP:conf/mfps/CalbrixNP93} stand out as a model that is polynomial-time minimizable. In a nutshell, the idea is to represent the ultimately periodic words $uv^\omega$ in a given $\omega$-language as a DFA accepting words of the form $u\$v$ for some fixed character $\$$ not used in the alphabet of the $\omega$-language. The corresponding DFA can be minimized using classical MyHill/Nerode equivalence. As two $\omega$-regular languages are equivalent if and only if they contain the same set of ultimately periodic words, such DFAs uniquely represent a given $\omega$-language.

Automata over lasso languages are however difficult to employ in applications without first translating them back into some possibly non-minimized model, as the words accepted by such automata are represented in a rather indirect way.
They have been further refined to \emph{families of deterministic automata over finite words} \cite{DBLP:journals/lmcs/AngluinBF18}, which split the lasso language to be represented into a DFA for the stem $u$ of a word and one DFA for the period $v$ for each state in the stem automaton. They come in several variants and are interesting due to their suitability for \emph{language learning} \cite{DBLP:conf/alt/AngluinF14}.

Chains of co-Büchi automata \cite{DBLP:conf/fsttcs/EhlersS22} have recently been introduced as a model for arbitrary omega-regular languages that can be minimized in polynomial time. Its key idea is a definition of a canonical split of $\omega$-regular languages into a sequence of co-Büchi languages according to the \emph{natural colors} of all words (with respect to the given language), and the co-Büchi languages can be represented by polynomial-time minimizable history-deterministic co-Büchi automata with transition-based acceptance. Such a chain is a possible form of input to our construction for building a minimal rerailing automaton for a given language in this paper. However, our construction does not rely on the language being represented in the canonical way. The precise relationship between the dominating colors of runs of the automata built by the algorithm in Figure~\ref{fig:mainAlgo} in this paper and the natural colors of the respective words is not explored in this paper and left for future work.

Rerailing automata build on parity acceptance. Niwinski and Walukiewicz \cite{DBLP:conf/stacs/NiwinskiW98} defined \emph{flowers} for parity automata that prove that a parity automaton for a given language needs to have a certain number of colors. These flowers in a sense represent a nesting of strongly connected components with transitions of only a certain set of colors that every automaton for a given language needs to have. The central nesting theorem in this paper shows that SCCs in rerailing automata for a language also need to be nested in a certain way. The concept of flowers has recently been extended to \emph{semantic flowers} of a language, which are defined in a way that is agnostic to a specific automaton representing some language of interest \cite{DBLP:journals/ipl/DellErbaSTZ24}.

Rerailing automata extend history-deterministic branching in a very specific way in order to allow for polynomial-time minimization.
Other extensions to history-deterministic branching have been proposed in the past. Boker et al.~\cite{DBLP:conf/fsttcs/BokerKLS20} for instance consider \emph{alternating good-for-games automata}, which combine non-deterministic and universal branching in a way that requires both types of branching to be resolvable by players in a game. The construction from Theorem~\ref{thm:synthesis} above assigns resolving the non-determinism in rerailing automata to two different players, which shows that conceptually, rerailing automata and alternating good-for-games parity automata share similarities. Identifying the exact relationship between these models is left for future work. Good-for-games acceptance was furthermore extended to \emph{good-for-MDP acceptance} by Hahn et al.~\cite{DBLP:conf/atva/HahnPSSTW22a}. Finally, \emph{explorable parity automata}~\cite{DBLP:journals/corr/abs-2410-23187} generalize history-determinism/good-for-gamesness (which are the same concept in case of parity acceptance) to a model in which multiple runs of an automaton are considered at the same time, allowing automata of this type to be more concise.

Canonical chains of co-Büchi automata have recently been used for representing the specification in a realizability checking process \cite{DBLP:conf/tacas/EhlersK24}. Here, such a chain is translated to some special form of alternating good-for-games automaton, which is in turn translated to a system of fixpoint equations to be evaluated over a symbolic game graph. Rerailing automata have a structure that is similar to this special product. However, rerailing automata guarantee to never be bigger than the smallest deterministic parity automaton for the same language, which is a property that the special product employed previously for realizability checking does not have. Integrating rerailing automata into this existing framework in order to make symbolic realizability checking over game graphs with a complex structure more efficient is left for future work.

\section{Conclusion \& Outlook}

This paper introduced \emph{rerailing automata}, a polynomial-time minimizable representation for $\omega$-regular languages. The main result of this paper is a precise characterization of which information strongly connected components in a rerailing automaton need to track. Moreover, rerailing automata have been defined in a way such that states whose existence is implied by the characterization theorem suffice for building an automaton for a given language, which enables polynomial-time minimization of this automaton type.

History-deterministic co-Büchi automata with transition-based acceptance are a special case of rerailing automata, as are deterministic parity automata (with transition-based acceptance). Hence, minimized rerailing automata are never bigger than automata of these inherited types. 
Since deterministic parity automata are a special case of rerailing automata, the \emph{central nesting theorem} that characterizes which states are needed in a rerailing automaton is also applicable to them and provides some insight into the structure of deterministic parity automata.

We also showed how the realizability problem of a specification given in the form of a rerailing automaton can be reduced to parity game solving in a way that is very similar to using deterministic parity automata for the same application. 

As the purpose of this paper is to introduce rerailing automata, many open questions and future work directions remain. 
For instance, we did not explore further applications of rerailing automata. In particular, we did not yet exploit the fact that when resolving the non-determinism in minimized rerailing automata uniformly at random, rerailing automata guarantee that the probability of a run to have an even dominating color is $1$ if the word is in the language of the automaton, and this probability is $0$ otherwise. This fact is likely to be useful in the context of probabilistic model checking.

Then, this paper does not contain a precise conciseness characterization of the relationship between the sizes of deterministic parity automata and rerailing automata (for the same language). While the property that rerailing automata can be exponentially more concise than deterministic parity automata is inherited from the respective result for history-deterministic co-Büchi automata~\cite{DBLP:conf/icalp/KuperbergS15}, the capability of rerailing automata to represent languages needing multiple colors in a deterministic parity automaton may potentially allow to strengthen this result.

Then, the construction from Theorem~\ref{theorem:buildingRL} for obtaining a residual language tracking automaton for a chain of co-Büchi automata is relatively clunky. It was employed in this paper because it allowed starting from arbitrary (non-canonical) chains of co-Büchi automata as language representation. By characterizing the additional properties of chains of co-Büchi automata extracted from rerailing automata using Lemma~\ref{lem:decomposingRerailingAutomatonIntoNonCanonicalCOCOA}  or of canonical chains of co-Büchi automata obtained from existing deterministic parity automata~\cite{DBLP:conf/fsttcs/EhlersS22}, we can potentially replace it to improve the performance of computing a minimized rerailing automaton in practical applications. Also, it may be possible to extract canonical chains of co-Büchi automata from a rerailing automaton directly by merging the ideas of the construction in Lemma~\ref{lem:decomposingRerailingAutomatonIntoNonCanonicalCOCOA} in this paper with the existing algorithm for computing a canonical chain of co-Büchi automata from a deterministic parity automaton~\cite{DBLP:conf/fsttcs/EhlersS22}.

Finally, we note that just like minimized history-deterministic co-Büchi automata with transition-based acceptance have additional exploitable properties that non-minimized such automata do not have in general (and which are explained in Section~\ref{sec:minimalHdTbcBWProperties}), this is also the case for rerailing automata. In particular, we used the \emph{color-homogenity} of minimized rerailing automata in the proof of Theorem~\ref{thm:synthesis}. Identifying if there are further specific properties of minimized rerailing automata that can be exploited in practical applications is also left for future work.

\section*{Acknowledgements}
This work was partially funded by the Volkswagen Foundation through the ``Formal Engineering Support for Field-programmable Gate Arrays'' \emph{Momentum} initiative.

\bibliographystyle{plain}
\bibliography{bib}

\appendix

\subsection{Proof of Proposition~\ref{proposition:Parity}}
\begin{proof}
Let a word $w$ be given in $\mathcal{L}_\mathit{parity}(\mathcal{A})$. Then by the properties described in Section~\ref{sec:minimalHdTbcBWProperties}, we have that there exists a run $\pi'' = \pi''_0 \pi''_1 \ldots$ for $w$ that eventually only takes accepting and deterministic transitions. Now given some run $\pi$ for $w$, either it eventually only takes accepting transitions (which are deterministic), and then $\pi'=\pi$ works as rerailing run for showing the rerailing property, \emph{or} when a rejecting transition is taken at some index $i'$ after the index $j$ at which $\pi''$ took its last rejecting transition, then there is a transition available to $\pi''_{i'+1}$, and taking it lets the rest of the run stay deterministically in the accepting transitions. So in the case of $w$ being in $\mathcal{L}_\mathit{parity}(\mathcal{A})$, the rerailing property holds. Furthermore, the dominating color of the resulting run is $2$, inducing that $w$ is in $\mathcal{L}_\mathit{rerail}(\mathcal{A})$. Note that the dominating color of the rerailing run can only be higher than the one for $\pi$, as required for a rerailing automaton.

Let now $w \notin \mathcal{L}_\mathit{parity}(\mathcal{A})$. In this case, \emph{all} runs of $w$ in $\mathcal{A}$ have a dominating color of $1$. Then the rerailing property is satisfied with $\pi' = \pi$ for arbitrary runs $\pi$ and indices $i$. Furthermore, as the dominating runs of the word uniformly have color $1$, we have that $w \notin \mathcal{L}_\mathit{rerail}(\mathcal{A})$. Again, the color of the rerailing run is not higher (as $\pi'=\pi$).
\end{proof}

\subsection{Proof of Lemma~\ref{lem:residualizedAutomaton}}
\begin{proof}
If $R^L$ tracks the residual language of $\mathcal{A}^i$, i.e., prefix words with different residual languages w.r.t~$\mathcal{A}^i$ have runs leading to different states in $R^L$, then it can be seen that $\mathcal{A}^i$ accepts exactly those words accepted by $\mathcal{A}'^i$, as accepting runs for $\mathcal{A}^i$ can be translated to accepting runs in $\mathcal{A}'^i$ and vice versa.
To translate an accepting run of $\mathcal{A}^i$ to $\mathcal{A}'^i$, we just truncate the prefix so that only accepting transitions are left in the run. Additionally, we add the $R^L$ states for the residual languages observed along the trace as second elements -- the definition of runs in floating co-Büchi automata makes the resulting run valid.
Translating a run $\pi_k \pi_{k+1} \ldots$ from some index $k$ in a word $w = w_0 w_1 \ldots$ in $\mathcal{A}'_i$ is slightly more complicated. We take as run for $\mathcal{A}^i$ an arbitrary run for $w_0 \ldots w_{k-1}$. Then we let the run continue arbitrarily (for $w_k w_{k+1} \ldots$) until possibly some rejecting transition is taken at index $k'>k$. In that case, as minimized \HdTbcBW{} guarantee that there are transitions to all language-equivalent states in $\mathcal{A}^i$, including $\pi_{k'+1}$, we attach $\pi_{k'+1} \pi_{k'+2} \ldots$ to the prefix run in $\mathcal{A}^i$ to obtain an accepting run. If however no such index $k'$ can be found, then the run that we started with is already accepting in $\mathcal{A}^i$, which is also sufficient for the claim.

The potentially problematic case to analyze is the residual language tracking automaton for the language of some automaton $\mathcal{A}^i$ not being a factor of $R^L$, i.e., such that for $R^L = (Q^L,\Sigma,\delta^L,q^L_0)$ and some residual language tracking automaton $R^{L,i} = (Q^{L,i},\Sigma,\delta^{L,i},q^{L,i}_0)$ for $\mathcal{A}^i$, there is no mapping $m : Q^L \rightarrow Q^{L,i}$ such that $m(q^L_0)=q^{L,i}_0$ and for all $x \in \Sigma$ and $q \in Q^L$, we have $\delta^{L,i}(m(q),x)=m(\delta^L(q,x))$.

Let $w_0 \ldots w_n$ and $w'_0 \ldots w'_m$ be two prefix words with different residual languages in some automaton $\mathcal{A}^i$, so for some $\tilde w \in \Sigma^\omega$, we have $w_0 \ldots w_n \tilde w \in \mathcal{L}(\mathcal{A}^i)$ but $w'_0 \ldots w'_m \tilde w \notin \mathcal{L}(\mathcal{A}^i)$. If $w_0 \ldots w_n$ and $w'_0 \ldots w'_m$ have the same residual language regarding $L$, then this means that $w_0 \ldots w_n \tilde w$ and $w'_0 \ldots w'_n \tilde w$ have different colors (i.e., the greatest indices of the automaton accepting the words), but these colors are equivalent regarding evenness. Let, w.l.o.g, $w_0 \ldots w_n$ be chosen such that the color of $w_0 \ldots w_n \tilde w$ is $j \in \NN$, which is the greatest among the words $w_0 \ldots w_n$ that have the same residual language (w.r.t.~$L$) as $w'_0 \ldots w'_m$. In this case, we have that $w_0 \ldots w_n \tilde w \in \mathcal{L}(\mathcal{A}'_j)$ but $w_0 \ldots w_n \tilde w \notin \mathcal{L}(\mathcal{A}'_{j+1})$ (if $j<n$), and since the words $w_0 \ldots w_n$ and $w'_0 \ldots w'_m$ have the same residual language, the same states in $\mathcal{A}'^j$ and $\mathcal{A}'^{j+1}$ can be used as starting point for a run starting with $\tilde w$, we also have $w'_0 \ldots w'_m \tilde w \in \mathcal{L}(\mathcal{A}'_j)$ and $w'_0 \ldots w'_m \tilde w \notin \mathcal{L}(\mathcal{A}'_{j+1})$ (if $j<n$). Hence, moving to a chain of floating co-Büchi automata did not change language containment of either word of $w_0 \ldots w_n \tilde w$ and $w'_0 \ldots w'_m \tilde w$ in the language represented by the chain. 
\end{proof}

\subsection{Proof of Lemma~\ref{lem:minimizationLemma}}
\begin{proof}
Minimization can be performed with the same ideas as used by Abu Radi and Kupferman~\cite{DBLP:journals/lmcs/RadiK22} for the case of history-deterministic transition-based co-Büchi automata.

Let $R^L = (Q^L,\Sigma,\delta^L,q^L_0)$. For each state $q \in Q$, we define $\mathsf{Safe}(q)$ to be the safely accepted words (the \emph{safe language}) from $q$, i.e., the set of finite words $w \in \Sigma^*$ for which $\delta(q,w)$ is defined.

We perform the following steps repeatedly on $\mathcal{A}$ until they cannot be applied any longer:
\begin{enumerate}
\item We remove states that are not part of any strongly connected component.
\item Then, whenever there are two states $q,q'$ with $f(q)=f(q')$ and $m(q)=m(q')$ that are part of different maximal strongly connected components of $\mathcal{A}$ and for which $\mathsf{Safe}(q) \subset \mathsf{Safe}(q')$, we remove $q$ from $Q$ and all transitions to and from $q$. 
\item Then, we merge all states $q,q'$ with $f(q)=f(q')$, $m(q)=m(q')$ and $\mathsf{Safe}(q) = \mathsf{Safe}(q')$, using either all outgoing transitions for $q$ or $q'$ for the merged state arbitrarily. After one such merge, we immediately continue with the first step.
\end{enumerate}
None of these steps changes the language of $\mathcal{A}$ or changes the marking along runs that are still present in the minimized automaton:
\begin{enumerate}
\item For the first step, if for some word there is an accepting run, it either never takes the transition, or it takes the transition exactly once. In the latter case, the run can simply start at a later index, as by the definition of $f$, the states track the residual language.
\item For the second step, if there is a run through $q$ at index $k$, we can replace it by a run starting at state $q'$ at index $k$. Since $q$ and $q'$ are part of different maximal strongly connected components, the new run stays valid. By the fact that the states have the same marking, and we have that 
for each $q \in Q$ and $w \in \Sigma^*$, we have $\delta^R(m(q),w) = m(\delta(q,w))$ (phrased as \emph{the marking progresses deterministically henceforth}), all future markings in accepting runs are also the same.
\item If we have an infinite run through $q$, then by merging the states, the $\mathsf{Safe}$ language of the new state is the same as the old states. So runs for the unmerged states can be translated to a run for the automaton with merged states, and by the markings being the same and the markings progressing deterministically, the markings along the runs are also unchanged. Similarly, every run in the automaton with merged states can be translated to a run for the original automaton with the same markings.
\end{enumerate}
Note that in the first two cases, we do not have to show that the language does not grow with the minimization steps as we remove components from the automaton, so the set of accepted word/run/marking combinations can only shrink.

To show that the automaton is minimal among the automata with the same properties listed in the claim, consider another automaton that has strictly fewer states. Assume that it has also been processed by the procedure above (even though it was already smaller to begin with), which can remove transitions between accepting SCCs, which are superfluous. Let the resulting automaton be called $\mathcal{A}^B = (Q^B,\Sigma,\delta^B,f^B)$ with marking function $m^B$.

\textbf{Establishing a fact:} We first note that no state $q$ of $\mathcal{A}$ can have a safe language that is not a subset of the safe language of some other state $q^B$ in $\mathcal{A}^B$ with $f(q) = f^B(q^B)$ and $m(q) = m^B(q^B)$. To see this, assume the converse. Then, for the states $\{q_1, \ldots, q_n\}$ of $\mathcal{A}^B$ with the same $R^L$ labeling as $q$ and the same $m$ labeling as $q$, there exists a sequence of words $w'^1\, \ldots, w'^n$  such that for each $1 \leq i \leq $n, we have that $w'^i$ is not in the safe language of $q_i$, but of $q$.
We now build a word accepted from $q$, but that is not in the language of $\mathcal{A}^B$, which provides a contradiction. 
Let $Q' \subseteq Q$ be the maximal SCC of $q$. We define two helper functions. 
First, we define $g : \{q_1, \ldots, q_n\} \rightarrow \Sigma^*$ such that $g(q_i) = w'_i$ for each $1 \leq i \leq n$. Then, $h$ maps each state of $Q'$ to a word leading from the state back to $q$. We define
\begin{align*}
w^1 & = w'_1 h(\delta(q,w'_1))
\end{align*}
and set, for all $2 \leq i \leq n$:
\begin{align*}
w^i & = \begin{cases} w^{i-1} g(\delta^B(q_i,w^{i-1})) h(\delta(q,\\
\quad w^{i-1} g(\delta^B(q_i,w^{i-1}))) \!\! %
 & \!\!\!\!\!\!\text{if } \delta^B(q_i,w^{i-1}) \text{ is defined}\\
 w^{i-1} & \!\!\!\!\!\!\text{else}
 \end{cases}
\end{align*}
Then, the word $w^\mathit{pre} (w^n)^\omega$ for some $w^\mathit{pre}$ with $\delta^L(q^L_0,w^\mathit{pre}) = f(q)$ shows that $\mathcal{A}$ and $\mathcal{A}^A$ do not accept the same language. This is because, for each $1 \leq i \leq n$, the word $w^i$ is constructed to not be safely accepted from any of $q_1, \ldots, q_i$, as it consists of a prefix $w^{i-1}$ ensuring that for the states $q_1 \ldots q_{i-1}$ already, followed by a word that is not accepted from the state reached in $\mathcal{A}^B$ from $q_i$ by $w^{i-1}$ (if existing).

\textbf{Using the fact established above:} This argument also works with $\mathcal{A}$ and $\mathcal{A}^B$ swapped, which shows that for each $q^L \in Q^L$ and $q^R \in Q^R$, the maximal elements (by set inclusion) of $M_{q^L,q^R} = \{\mathsf{safe}(q) \mid q \in Q, f(q)=q^L, m(q)=q^R\}$ and $M^B_{q^L,q^R} = \{\mathsf{safe}(q^B) \mid q^B \in Q^B, f^B(q)=q^L, m^B(q)=q^R \}$ are the same.

Let us now (finally) show that $\mathcal{A}$ is minimal. To see this, consider that for each maximal SCC $(Q',\delta')$ of $\mathcal{A}$ and each $q^L \in Q^L$ and $q^R \in Q^R$, if there is any $q^L$-labeled and $q^R$-labeled state in $Q'$, then there is some $q^L$-labeled and $q^R$-labeled state $q \in Q'$ with $\mathsf{safe}(q) \in M_{q^L,q^R}$, as otherwise $q$ would have been removed by the second reduction rule. 
Let $\{q_1, \ldots, q_m\}$ be a selection of $q^L$-labeled and $q^R$-labeled states with their safe languages in some set $M_{q^L,q^R}$, where we pick one state for each of the $m$ SCCs in $\mathcal{A}$.

For some state $q_i \in \{q_1, \ldots, q_m\}$, let $q^B$ be a state from $q^B$ with the same safe language and the same labels. We note that for all $w \in \Sigma^*$, if and only if $q' = \delta(q,w)$ is defined, so is $q'^B = \delta^B(q^B,w)$ (because otherwise $q$ and $q^B$ would have different safe languages), and the safe languages of $q'$ and $q'^B$ need to be the same. Also, the markings induced by $f$/$f^B$ and $m$/$m^B$ reachable under some word $w$ need to be the same, as both the marking progresses deterministically.

From all SCCs with this property, we have that $(Q',\delta')$ is the minimal one, as all states with the same safe language and the same markings have been merged. Also, the fact that for all $w \in \Sigma^*$, if and only if $q' = \delta(q,w)$ is defined, so is $q'^B = \delta^B(q^B,w)$ implies that all other states $q_j$ with $i \neq j$ and $1 \leq j \leq m$ do not have some state with the same safely accepted words as $q_j$ in the SCC of $\mathcal{A}^B$ that $q^B$ is in (or the initial marking is different). 

So for each SCC of $\mathcal{A}$ we have the SCC is of minimal size among the ones equivalent in accepted language and markings, and as each such SCC is mapped to different equivalent SCCs in $\mathcal{A}^B$, we have that $\mathcal{A}$ is at most as big (in terms of the number of states) as $\mathcal{A}^B$. Since $\mathcal{A}^B$ is an arbitrary equivalent floating co-Büchi automaton (after potentially removing superfluous transitions between SCCs), this proves minimality.

Finally, to see that the procedure runs in polynomial time, note that each of the steps in the algorithm can be performed in polynomial time and the automata being processed never grows and can only shrink. Since in each repetition of all three steps, at least one transitions is removed from the automaton (or the algorithm terminates), and hence the number of iterations is limited by the number of transitions in the automaton, we obtain an overall polynomial computation time.
\end{proof}

\subsection{Proof of Lemma~\ref{lem:synchronizingWord}}

\begin{proof}
Let $w = w_0 \ldots w_h$ be a word such that $\delta(q,w)=q''$ for some state $q''$ that has a unique finite word $u$ that is only safely accepted from $q''$. Every SCC in a minimized automaton has such a $q''$/$u$ combination as otherwise the SCC would have been removed during minimization.
Since in a minimized floating co-Büchi automaton, the safe language of $q'$ can only be a subset of the safe language of $q$ if they are in the same SCC and every SCC has a state with a non-dominated safe language, also such a word $w_0 \ldots w_h$ has to exist.

Note that by the definition of safe languages, we have $\mathsf{Safe}(\delta(q,w_0 \ldots w_i)) \subseteq \mathsf{Safe}(\delta(q',w_0 \ldots w_i))$ for all $0 \leq i \leq h$. If for some such $0 \leq i \leq h$, we have $\mathsf{Safe}(\delta(q,w_0 \ldots w_i)) = \mathsf{Safe}(\delta(q',w_0 \ldots w_i))$, then the claim is proven, as then $w_0 \ldots w_i$ is a synchronizing word. 

Otherwise, we have that $w_0 \ldots w_h$ is a synchronizing word. Since $\delta(q,w_0 \ldots w_h u)$ is defined, and only from $q''$, $\delta(\cdot,u)$ is defined, we need to have that $\delta(q,w_0 \ldots w_h) = \delta(q',w_0 \ldots w_h) = q''$, as otherwise the safe language of $q'$ cannot be a superset of the safe language of $q$.
\end{proof}

\subsection{Proof of Lemma~\ref{lem:uniquenessLemma}}

\begin{proof}
We can construct such a word in a way similar to a part of the proof of Lemma~\ref{lem:minimizationLemma}.

We choose $q$ such that there is no other state in $Q'$ with a strictly larger language. Then, since states with the same language have been merged, there is some word $u$ with $\delta(q,u)$ being defined, but $\delta(q',u)$ being undefined for all $q' \in Q'$ with $q' \neq q$.

The word $u$ however does not satisfy the condition in the claim because there may be states in other SCCs of $\mathcal{A}$ that safely accept $u$.

Rather, we can build $w$ as follows:
Let $\{q_1, \ldots, q_n\}$ be the states of $\mathcal{A}$ with the same $m$-label and the same residual language as $q$ in $Q \setminus Q'$. 
Let furthermore $h : Q' \rightarrow \Sigma^*$ be a function mapping a state $q' \in Q'$ to some finite word $h(q')$ such that $\delta(q',h(q'))=q$, which exists since $Q'$ is the state set of a maximal SCC in $\mathcal{A}$ that contains $q$.
Let now $u_1, \ldots, u_n$ be a sequence of words safely accepted by $q$ but not by $q_1, \ldots, q_n$, respectively. As whenever there are states in different SCCs of a floating co-Büchi automaton with the same label and the same residual language with comparable safe languages, the minimization procedure removes one the states, we can assume that for the states $q_1, \ldots, q_n$, such finite words exists.
We define a function $g : Q' \rightarrow \Sigma^*$ such that $g(q_i) = u_i$ for all $1 \leq i \leq n$.

We now define a sequence of words $w^0 \ldots w^i$ with $w^0 = u h(\delta(q,u))$, and for all $1 \leq i \leq n$, we set:
\begin{equation*}
w^i = 
\begin{cases}
w^{i-1} g(\delta(q_i,w^{i-1})) h( \delta(& \\
\ \  q,w^{i-1} g(\delta(q_i,w^{i-1})))  & \text{if } 
\delta(q_i,w^{i-1}) \text{ is} \\
 & \text{defined}\\
w^{i-1} & \text{otherwise}.
\end{cases}
\end{equation*}
We have that $w = w^i$ satisfies the claim. The word is constructed to be safely accepted from $q$, and it is built in a way such that when the new word part added in $w^i$ starts for some $1 \leq i \leq n$, the word part that is not safely accepted from the state a run that was in $q_i$ before reading $w^{i-1}$ comes next. If such a run from $q_i$ already ended before, we have $w^i = w^{i-1}$, so nothing is added.
\end{proof}

\subsection{Proof of Lemma~\ref{lem:MLabeling}}

\begin{proof}
Note that assuming that $k$ is after the start of $\mathcal{A}^1$'s run comes without loss of generality because every point after a terminal rerailing point is by definition also a terminal rerailing point, so that it can always be placed after the start of the run of $\mathcal{A}^1$.

We perform the proof by contradiction. So assume the converse. Then among the mappings that satisfy only the second condition, there is a unique maximal one $m$ that we can obtain by for each $q^1 \in Q'^1$ taking a union of the sets $m'(q^1)$ for all such markings $m'$.

It has $m(q^1)=\emptyset$ for some $q^1 \in Q'^1$ (otherwise we already arrived at a contradiction as $m$ has the properties from the claim).
This means that for every $q \in Q$, there is a finite word $u^q$ such $\delta'^1(q^1,u^q)$ is defined, but a run in $\mathcal{R}$ starting from $q$ will lead outside $(\hat Q,\hat \delta)$ for $u^q$. 

We now stitch together a word $w$ such that $\delta'^1(q^1,w)$ is defined, but from no $q \in \hat Q$, every run from there for $w$ stays in $(\hat Q,\hat \delta)$. As $(\hat Q,\hat \delta)$ is the terminal SCC of a saturating word, which contains $w$ infinitely often, we then arrive at a contradiction.

Let $q_1, \ldots, q_n$ be some order of the states in $\hat Q$. We define a sequence of words $ w^0 \ldots  w^n$, set $ w^0 = \epsilon$, $ w =  w^n$, and for all $1 \leq i \leq n$, we define
\begin{equation*}
 w^i = \begin{cases}
 w^{i-1} & \text{if } \hat \delta^+(q_i, w^{i-1})=\emptyset \\
 w^{i-1} u^{ q'^i} v^i & \text{otherwise} \end{cases}
\end{equation*}
for some arbitrary $q'^i \in \hat\delta^+(q_i, w^{i-1})$ and some arbitrary finite word $v^i$ such that $q^1 = \delta'^1(q^1,u^{q'^i} v^i)$.

The word is built such that for each state $q_i$, for the word $ w^i$ there exists a suffix run that leaves $(\hat Q,\hat \delta)$. This is because either $w^{i-1}$ is already a word under which such a suffix run exists, or $\hat \delta^+(q_i, w^{i-1})$ is non-empty, and $w^{i-1}$ is followed by a finite word that leads to leaving $(\hat Q,\hat \delta)$ for some state in $\hat \delta^+(q_i, w^{i-1})$. Since for all $1 \leq i \leq n$, $w^{i-1}$ is a prefix of $w^{i}$, we ultimately have that $w$ is a word that from \emph{all} states in $\hat Q$ (each) induces a run leaving $(\hat Q,\hat \delta)$. At the same time, the choice of $v^i$ makes sure that a run in $\mathcal{A}^1$ from $q^1$ for $w$ stays in the maximal accepting SCC.
\end{proof}

\subsection{Proof of Lemma~\ref{lem:productPart}}
\begin{proof}

We solve this problem by reduction to parity game solving.
Given automata 
$\mathcal{A}^i = (Q^i,\Sigma,\delta^i,q^i_0)$,
$\mathcal{A}^{i+1} = (Q^{i+1},\Sigma,\delta^{i+1},q^{i+1}_0)$,
$\mathcal{A}^j = (Q^j,\Sigma,\delta^j,q^j_0)$ and
$\mathcal{A}^{j+1} = (Q^{j+1},\Sigma,\delta^{j+1},q^{j+1}_0)$, 
where for $\mathcal{A}^i$ and $\mathcal{A}^j$, we only consider the accepting  transitions in this part of the proof,
we construct a parity game $\mathcal{G} = (V^0,V^1,E^0,E^1,C,v^0)$ with the following structure:
\begingroup
\allowdisplaybreaks
\begin{align*}
V^0 & = Q^i \times Q^{i+1} \times Q^j \times Q^{j+1} \times \{0,1,2\} \\
V^1 & = Q^i \times Q^{i+1} \times Q^j \times Q^{j+1} \times \{0,1,2\} \times \Sigma \\
E^0 & = \{ ((q^i,q^{i+1},q^j,q^{j+1},z),(\delta^i(q^i,x),q^{i+1},\\
& \delta^j(q^j,x),q^{j+1},z',x))
\mid x \in \Sigma, z\neq 2 \rightarrow z'=z, \\
& z=2 \rightarrow z'=0
\}
\\
E^1 & = \{ ((q^i,q^{i+1},q^j,q^{j+1},z,x),(q^i,q'^{i+1},q^j,q'^{j+1}, \\
& z')) \mid \exists c^i,c^j:
(q^{i+1},x,q'^{i+1},c^i) \in \delta^{i+1} \wedge \\
& (q^{j+1},x,q'^{j+1},c^j) \in \delta^{j+1} \wedge (z=0 \rightarrow z'=2-c^i)\\ &
\wedge (z=1 \rightarrow z'=3-c^j) \wedge (z=2 \rightarrow z'=2)
\}
\\
C & = \Big\{ (q^i,q^{i+1},q^j,q^{j+1},z) \mapsto \begin{cases} 0 & \text{if } z=2 \\
1 & \text{otherwise,} \end{cases} \\
& \quad \quad (q^i,q^{i+1},q^j,q^{j+1},z,x) \mapsto 0
\Big\}\\
v^0 & \textit{ is arbitrary} \text{ (not used)}
\end{align*}
\endgroup

In this game, from some position $(q^i,q^{i+1},q^j,q^{j+1},z)$, the \emph{acceptance} player (player 0) tries to show that there is an infinite word word $\tilde w$ on which depending on the prefix words $w^1$ and $w^2$, the overall color of the word can either be $i$ or $j$, respectively, if we have that:
\begin{itemize} 
\item a run for $w^1$ can be in state $q^i$ in $\mathcal{A}^i$,
\item a run for $w^1$ can be in state $q^{i+1}$ in $\mathcal{A}^{i+1}$,
\item a run for $w^2$ can be in state $q^j$ in $\mathcal{A}^j$, and
\item a run for $w^2$ can be in state $q^{j+1}$ in $\mathcal{A}^{j+1}$.
\end{itemize}
Whether such prefix words exist is not checked when building the game. %

The $z$ component of the positions in the game is used to track if \emph{recently}, rejecting transitions have been taken for both $\mathcal{A}^{i+1}$ and $\mathcal{A}^{j+1}$. Whenever a rejecting transition for $\mathcal{A}^{i+1}$ is seen and the number is $0$, it flips to $1$ , and whenever a rejecting transition for $\mathcal{A}^{j+1}$ is seen and the number is $1$, it flips from $1$ to $2$. When the counter reaches a value of $2$, the play reaches a position with color $0$, and the $z$ component is reset to $0$.

The positions of the vertices encode state combinations in the considered automata. Player $0$ can choose next characters and in this way gradually build some word $\tilde w$, but can only do so in a way such that runs of $\mathcal{A}^i$ and $\mathcal{A}^j$ for $\tilde w$ are followed along. These runs can only take accepting transitions in $\mathcal{A}^i$ and $\mathcal{A}^j$, so that $\tilde w$ is accepted by the two. 
Player $1$ can choose how to resolve non-determinism in $\mathcal{A}^{i+1}$ and $\mathcal{A}^{j+1}$ for the word chosen by player $0$. 
The construction ensures that if and only if player $0$ has a strategy to win the parity game from some position $(q^i,q^{i+1},q^j,q^{j+1},z)$, then there exists a word $\tilde w$ with the properties mentioned above. This is because $\mathcal{A}^{i+1}$ and $\mathcal{A}^{j+1}$ are history-deterministic, so for each of them there exists a strategy to resolve non-determinism such that for every word, following the strategy yields an accepting run if and only if the respective automaton accepts the word. Since player $1$ can follow this strategy, if player $0$ then still wins, this means that both $\mathcal{A}^{i+1}$ and $\mathcal{A}^{j+1}$ reject the word because this means that along the play, the $z$ value switches from $0$ to $1$ infinitely often (indicating that a rejecting transition in $\mathcal{A}^{i+1}$ is taken), and the z value switches from $1$ to $2$ infinitely often (indicating that a rejecting transition in $\mathcal{A}^{j+1}$ is taken). So the existence of such a strategy for player $0$ from some position proves the existence of some word $\tilde w$ with the properties stated above. On the other hand, if such a word $\tilde w$ exists, then player $0$ can choose edges according to the letters of this word without the run for $\mathcal{A}^i$ and $\mathcal{A}^j$ leaving the game. As such a word is rejected by  $\mathcal{A}^{i+1}$ and $\mathcal{A}^{j+1}$, no matter which transitions player $1$ takes, the resulting play visits a vertex with color $0$ infinitely often. Then, player $0$ wins the play.

To obtain $R^{i,j}$ from the set of positions $P$ in $V^0$ that are winning for player $0$, we first make the following observation: if and only if 
\begin{itemize}
\item $w^1 \tilde w \in \mathcal{L}(\mathcal{A}^i)$, 
\item $w^1 \tilde w \notin \mathcal{L}(\mathcal{A}^{i+1})$, 
\item $w^2 \tilde w \in \mathcal{L}(\mathcal{A}^j)$, and
\item $w^2 \tilde w \notin \mathcal{L}(\mathcal{A}^{j+1})$,
\end{itemize}
then 
\begin{itemize}
\item there exists a state $q^i \in Q^i$ with $q^i \in \delta^{+,i}(q^i_0,w^1)$ and $\tilde w \in \mathcal{L}(\mathcal{A}^i_{q^i})$,
\item there exists a state $q^{i+1} \in Q^{i+1}$ with $q^{i+1} \in \delta^{+,i+1}(q^{i+1}_0,w^1)$ and $\tilde w \notin \mathcal{L}(\mathcal{A}^{i+1}_{q^{i+1}})$,
\item there exists a state $q^j \in Q^j$ with $q^{+,j} \in \delta^i(q^j_0,w^1)$ and $\tilde w \in \mathcal{L}(\mathcal{A}^j_{q^j})$, and
\item there exists a state $q^{j+1} \in Q^{j+1}$ with $q^{j+1} \in \delta^{+,j+1}(q^{j+1}_0,w^1)$ and $\tilde w \notin \mathcal{L}(\mathcal{A}^{j+1}_{q^{j+1}})$.
\end{itemize}
Furthermore, for $\mathcal{A}^i$ and $\mathcal{A}^j$ to accept the word, there has to exist a run for the word that eventually only takes accepting transitions. So $\tilde w$ can be split into $\tilde w = \tilde w_1 \tilde w_2$ such that 
\begin{itemize}
\item there exists a state $q'^i \in Q^i$ with $q'^i \in \delta^{+,i}(q^i_0,w^1 \tilde w^1)$ and there exists a run for $\tilde w^2$ from $q'^i$ in $\mathcal{A}^i$ only visiting accepting transitions, and 
\item there exists a state $q'^j \in Q^j$ with $q'^j \in \delta^{+,j}(q^j_0,w^2 \tilde w^1)$ and there exists a run for $\tilde w^2$ from $q'^j$ in $\mathcal{A}^j$ only visiting accepting transitions.
\end{itemize}
Let now $q'^{i+1}$ be some arbitrary state in $\delta^{+,i+1}(q^{i+1}_0,w^1 \tilde w^1)$ and $q'^{j+1}$ be some arbitrary state in $\delta^{+,j+1}(q^{j+1}_0,w^2 \tilde w^1)$.

By the description of the winning positions in the parity game above, we have that $(q'^i,q'^{i+1},q'^j,q'^{j+1}) \in P$ if and only if there exists some $\tilde w^2$ such that 
\begin{itemize}
\item $w^1 \tilde w^1 \tilde w^2 \in \mathcal{L}(\mathcal{A}^i)$, 
\item $w^1 \tilde w^1 \tilde w^2 \notin \mathcal{L}(\mathcal{A}^{i+1})$, 
\item $w^2 \tilde w^1 \tilde w^2 \in \mathcal{L}(\mathcal{A}^j)$, and
\item $w^2 \tilde w^1 \tilde w^2 \notin \mathcal{L}(\mathcal{A}^{j+1})$,
\end{itemize}

We now define $R^{i,j}$ as follows: we start with $R^{i,j} = \{(f^{L,i}(q'^i),f^{L,i+1}(q'^{i+1}),f^{L,j}(q'^j),f^{L,j+1}(q'^{j+1})) \mid \exists z \in \{0,1,2\}. (q'^i,q'^{i+1},q'^j,q'^{j+1},z) \in P\}$ for $f^{L,k}$ mapping states from $\mathcal{A}^k$ to their respective residual languages in $R^{L,k}$, respectively, for $0 \leq k \leq n$. By the above reasoning, this initial set $R^{i,j}$ only contains tuples that correspond to the claim.

In the next step, we saturate $R^{i,j}$ as follows: We gradually add tuples $(s''^i,s''^{i+1},s''^j,s''^{j+1})$ such that for some $x \in \Sigma$, we have $\delta^{L,i}(s''^i,x) = s'^i$, $\delta^{L,i+1}(s''^{i+1},x) = s'^{i+1}$, $\delta^{L,j}(s''^j,x) = s'^j$, and $\delta^{L,j+1}(s''^{j+1},x) = s'^{j+1}$ for some tuple $(s'^i,s'^{i+1},s'^j,s'^{j+1})$ already in $R^{i+1}$. Doing so is sound, because if we know that if for some words $w'^1$ and $w'^2$, there exists some word $\tilde w$ such that 
\begin{itemize}
\item $w'^1 x \tilde w \in \mathcal{L}(\mathcal{A}^i)$, 
\item $w'^1 x \tilde w \notin \mathcal{L}(\mathcal{A}^{i+1})$, 
\item $w'^2 x \tilde w \in \mathcal{L}(\mathcal{A}^j)$, and
\item $w'^2 x \tilde w \notin \mathcal{L}(\mathcal{A}^{j+1})$,
\end{itemize}
then we also have 
\begin{itemize}
\item $w'^1 \tilde w' \in \mathcal{L}(\mathcal{A}^i)$, 
\item $w'^1 \tilde w' \notin \mathcal{L}(\mathcal{A}^{i+1})$, 
\item $w'^2 \tilde w' \in \mathcal{L}(\mathcal{A}^j)$, and
\item $w'^2 \tilde w' \notin \mathcal{L}(\mathcal{A}^{j+1})$.
\end{itemize}
for some $\tilde w' \in \Sigma^\omega$.

By continuing the saturation process, we eventually add some tuple $(s'''^i,s'''^{i+1},s'''^j,s'''^{j+1})$ such that $s'''^i = \delta^{L,i}(s^{L,i}_0,w^1)$,
$s'''^{i+1} = \delta^{L,i+1}(s^{L,i+1}_0,w^1)$,
$s'''^j = \delta^{L,j}(s^{L,j}_0,w^2)$, and
$s'''^{j+1} = \delta^{L,j+1}(s^{L,j+1}_0,w^2)$, which ensures that $R^{i,j}$ has enough elements to satisfy the claim. 

Since along the way we ensured that only elements that can be in $R^{i,j}$ according to the claim have been added to $R^{i,j}$, the computed set $R^{i,j}$ is correct.

In terms of computation time, computing the initial version of $R^{i,j}$ can be done in polynomial time, as the two-color parity game has size polynomial in the input, and game solving for such games can be performed in polynomial time. Saturating $R^{i,j}$ also only takes polynomial time.
\end{proof}

\end{document}